\newtheorem{theorem}{Theorem}
\newtheorem{lemma}[theorem]{Lemma}
\newtheorem{proposition}[theorem]{Proposition}
\newtheorem{remark}[theorem]{Remark}
\newenvironment{proof}[1][Proof]{\noindent\textbf{#1.} }{\ \rule{0.5em}{0.5em}}
\def\1{{\bf 1}}
\begin{document}

\title{Gaussian optimizers and the additivity problem in quantum information theory}
\author{A. S. Holevo \\
Steklov Mathematical Institute, Moscow}
\date{}
\maketitle

\begin{abstract}
We give a survey of the two remarkable analytical problems of quantum
information theory. The main part is a detailed report of the recent
(partial) solution of the quantum Gaussian optimizers problem which establishes an optimal property of Glauber's coherent
states -- a particular instance of pure quantum Gaussian states. We
elaborate on the notion of quantum Gaussian channel as a noncommutative
generalization of Gaussian kernel to show that the coherent states, and
under certain conditions only they, minimize a broad class of the concave
functionals of the output of a Gaussian channel. Thus, the output states
corresponding to the Gaussian input are ``the least chaotic'', majorizing
all the other outputs. The solution, however, is essentially restricted to
the gauge-invariant case where a distinguished complex structure plays a special role.

We also comment on the related famous additivity conjecture, which was
solved in principle in the negative some five years ago. This
refers to the additivity or multiplicativity (with respect to tensor products
of channels) of information quantities related to the classical capacity of
quantum channel, such as $(1\rightarrow p)$-norms or the minimal von Neumann
or R\'enyi output entropies. A remarkable corollary of the present
solution of the quantum Gaussian optimizers problem is that these additivity
properties, while not valid in general, do hold in the important and interesting class of the
gauge-covariant Gaussian channels.
\end{abstract}

\tableofcontents

\section{Introduction}

\label{sec:intro}

The quantum Gaussian optimizers problem is an analytical
problem that arose in quantum information theory at the end of past century, and
which has an independent mathematical interest. Only recently a
solution was found \cite{ghg}, \cite{mgh} in a considerably common
situation, while in full generality the problem still remains open. To
explain the nature and the difficulty of the problem we start from the
related classical problem of Gaussian maximizers which has been studied
rather exhaustively, see Lieb \cite{lieb} and references therein. Consider
an integral operator $G$ from $L_{p}\left( \mathbb{R}^{s}\right) $ to $
L_{q}\left( \mathbb{R}^{r}\right) $ given by a Gaussian kernels (i.e.
exponential of a quadratic form) with the $\left( q\rightarrow p\right) -$
norm
\begin{equation}
\left\Vert G\right\Vert _{q\rightarrow p}=\sup_{f\neq 0}\left\Vert
Gf\right\Vert _{p}/\left\Vert f\right\Vert _{q}=\sup_{\left\Vert
f\right\Vert _{q}\leq 1}\left\Vert Gf\right\Vert _{p}.  \label{clgaus}
\end{equation}
Under certain broad enough assumptions concerning the quadratic form
defining the kernel, and also $p$ and $q$, this operator is correctly
defined, and the supremum in (\ref{clgaus}) is attained on Gaussian $f$.
Moreover, under some additional restrictions any maximizer is Gaussian. As
it is put in the title of the paper \cite{lieb}: `` Gaussian kernels have
only Gaussian maximizers''.

Knowledge that the maximizer is Gaussian can be used to compute exact value
of the norm (\ref{clgaus}); in fact a starting point of the classical
Gaussian maximizers works were the result of K.I. Babenko \cite{bab} and a
subsequent paper of Beckner \cite{beck} which established the best constant
in the Hausdorff-Young inequality concerning the $\left( p\rightarrow
p^{\prime }\right) -$norm, ($p^{-1}+\left( p^{\prime }\right)
^{-1}=1,\,1<p\leq 2),$ of the Fourier transform (which is apparently given
by a degenerate imaginary Gaussian kernel).

A difficulty in the optimization problem (\ref{clgaus}) is that it requires
\textit{maximization} of a convex function, so the general theory of convex
optimization is not of great use here (it only implies that a maximizer of $
\left\Vert Gf\right\Vert _{p}$ belongs to a face of the convex set $
\left\Vert f\right\Vert _{q}\leq 1$). Instead, the solution is based on
substantial use of the classical Minkovski's inequality and the related
multipicativity of the classical $\left( q\rightarrow p\right) -$norms with
respect to tensor products of the integral operators.

A notable application of these classical results to a problem in quantum
mathematical physics was Lieb's solution \cite{lieb1} of Wehrl's conjecture
\cite{wehrl}. Let $\rho $ be a density operator in a separable Hilbert space
$ \mathcal{H}$ representing state of a quantum system; the ``classical
entropy'' of the state $\rho $ is defined as \footnote{
Throughout the paper the base of logarithm is a fixed number $a>1$. In
information theory the natural choice is $a=2$, then all the entropic
quantities are measured in ``bits''.}
\begin{equation*}
H_{cl}(\rho )=-\int_{\mathbb{C}}p_{\rho}(z) \log p_{\rho}(z) \frac{d^{2}z}{
\pi },
\end{equation*}
where $p_{\rho}(z)=\langle z |\rho |z\rangle $ is the diagonal value of the
kernel of $\rho $ in the system of Glauber's coherent vectors\footnote{
In analysis, they correspond to complex-parametrized Gaussian wavelets.
Notice that this is the only place in the present article where we formally
used Dirac's notations, uncommon among mathematicians.} $\{|z\rangle ; z\in
\mathbb{C}\}$ \cite{KS}, \cite{aspekty}.
The conjecture was that $H_{cl}(\rho )$ has the minimal value if $\rho $ is
itself a coherent state i.e. projector onto one of the coherent vectors.
Lieb \cite{lieb1} used exact constants in the Hausdorff-Young inequality for
$L_p$-norms of Fourier transform \cite{bab}, \cite{beck} and the Young
inequality for convolution \cite{beck} to prove similar maximizer conjecture
for $f(x)=x^{p}$ and considered the limit $\lim_{p\downarrow
1}(1-p)^{-1}\left( 1-x^{p}\right) =-\,x\log x$.

Recently, Lieb and Solovej \cite{ls}, by using a completely different
approach based on study of the spin coherent states, strengthened the result
of \cite{lieb1} by showing that the coherent states minimize \textit{any}
functional of the form $\int_{\mathbb{C}}f(p_{\rho}(z) )\frac{d^{2}z}{\pi },$
where $f(x),x\in [ 0,1]$ is a nonnegative concave function with $f(0)=0$.

In the language of quantum information theory, the affine map $G:\rho
\rightarrow p_{\rho}(z) ,$ taking density operators $\rho $ (quantum states)
into probability densities $p_{\rho}(z) $ (classical states), is a ``
quantum-classical channel'' \cite{h}. Moreover, it transforms Gaussian
density operators $\rho $ (in the sense defined below in Sec. \ref{sec:gaus}
) into Gaussian probability densities, and in this sense it is a `` Gaussian
channel'' . From this point of view, Wehrl entropy $H_{cl}(\rho )$ is the
output entropy of the channel, and Lieb's result says that it is minimized
by pure Gaussian states $\rho $. Moreover, the corresponding result for $
f(x)=x^{p}$ can be interpreted as `` Gaussian maximizer'' statement for the
norm $\left\Vert G\right\Vert _{1\rightarrow p}.$ Notice that the case $q=1,$
which is excluded in the classical problem for obvious reasons, appears and
is the most relevant in the quantum (noncommutative) case.

The quantum Gaussian optimizers problem described in the present paper
refers to Bosonic Gaussian channels -- a noncommutative analog of Gaussian
Markov kernels and, similarly, requires maximization of convex functions (or
minimization of concave functions, such as entropy) of the output state of
the channel, while the argument is the input state. A general conjecture is
that the optimizers belongs to the class of pure Gaussian states. The
conjecture, first formulated in \cite{HW} in the context of quantum
information theory, however natural it looks, resisted numerous attacks for
several years. Among others, notable achievements were the exact solution
for the classical capacity of pure loss channel \cite{Gio} and a proof of
additivity of the R\'enyi entropies of integer orders $p$ \cite{gl} for
special channels models. Even restricted to the class of Gaussian input
states, the optimization problem turns out to be nontrivial \cite{Ser}, \cite
{Hir}. There was some hope that in solving the problem, similarly to Wehrl's
conjecture, one could also use the classical `` Gaussian maximizers''
results. However the solution found recently by Giovannetti, Holevo,
Garcia-Patron \cite{ghg}, and Mari, Giovannetti, Holevo \cite{mgh} uses
completely different ideas based on a thorough study of structural
properties of quantum Gaussian channels. As it was mentioned, a solution of
the classical problem uses the Minkowski inequality and the implied
multiplicativity of $\left( q\rightarrow p\right) $-norms. However, the
noncommutative analog of the Minkowski's inequality \cite{carlieb} is not
powerful enough to guarantee the multipicativity of norms (or additivity of
the corresponding entropic quantities). Moreover, the related long-standing
additivity problem in quantum information theory \cite{Ho} was recently
shown to have negative solution in general \cite{hast}. We show that, remarkably, a
solution of the quantum Gaussian optimizers problem given in \cite{ghg}
implies also a proof of the multipicativity/additivity property in the
restricted class of gauge-covariant or contravariant quantum Gaussian
channels.

It would then be interesting to investigate a possible development of such
an approach to obtain noncommutative generalizations of the classical ``
Gaussian maximizers'' results for $\left(q\rightarrow p\right) -$norms. Such
generalization could shed a new light to the hypercontractivity problem for
quantum dynamical semigroups and related noncommutative analogs of
logarithmic Sobolev inequalities, see e.g. \cite{hyper}.

\section{The additivity problem for quantum channels}

\subsection{Definition of channel}

Let $\mathcal{H}$ be a separable complex Hilbert space, $\mathfrak{L}(
\mathcal{H} ) $ the algebra of all bounded operators in $\mathcal{H}$ and $
\mathfrak{T }(\mathcal{H})$ the ideal of trace-class operators. The space $
\mathfrak{T}( \mathcal{H})$ equipped with the trace norm $\left\Vert \cdot
\right\Vert _{1} $ is Banach space, which is useful to consider as a
noncommutative analog of the space $L_{1}.$ The convex subset of $\mathfrak{T
}(\mathcal{H})$
\begin{equation*}
\mathfrak{S}(\mathcal{H})=\left\{ \rho :\rho ^{\ast }=\rho \geq 0,\mathrm{Tr}
\rho =1\right\} ,
\end{equation*}
is a base of the positive cone in $\mathfrak{T}(\mathcal{H})$. Operators $
\rho $ from $\mathfrak{S}(\mathcal{H})$ are called \textit{density operators}
or \textit{quantum states}. The state space is a convex set with the extreme
boundary
\begin{equation*}
\mathfrak{P}(\mathcal{H})=\left\{ \rho :\rho \geq 0,\mathrm{Tr}\rho =1,\rho
^{2}=\rho \right\} .
\end{equation*}
Thus extreme points of $\mathfrak{S}(\mathcal{H}),$ which are called \textit{
pure states}, are one-dimensional projectors, $\rho =P_{\psi }$ for a vector
$\psi \in \mathcal{H}$ with unit norm, see, e.g. \cite{reed}.

The class of maps we will be interested is a noncommutative analog of Markov
maps (linear, positive, normalized maps) in classical analysis and
probability. Let $\mathcal{H}_{A},\mathcal{H}_{B}$ be the two Hilbert
spaces, which will be called input and output space, correspondingly. A map $
\Phi :\mathfrak{T}(\mathcal{H}_{A})\rightarrow \mathfrak{\ T}(\mathcal{H}
_{B})$ is \textit{positive} if $X\geq 0$ implies $\Phi [ X]\geq 0$, and it
is completely positive \cite{stine}, \cite{helems} if the maps $\Phi \otimes
\mathrm{Id}_{\left( d\right) }$ are positive for all $d=1,2,\dots ,$ where $
\mathrm{Id}_{\left( d\right) }$ is the identity map of the algebra $
\mathfrak{L}_{d}=\mathfrak{L}(\mathbb{C}^{d})$ of complex $d\times d-$
matrices. Equivalently, for every nonnegative definite block matrix $\left[
X_{jk}\right] _{j,k=1,\dots ,d}$ the matrix $\left[ \Phi [ X_{jk}]\right]
_{j,k=1,\dots ,d}$ is nonnegative definite.

A linear map $\Phi $ is trace-preserving if $\mathrm{Tr}\Phi [ X]=$ $\mathrm{
Tr} X$ for all $X\in\mathfrak{T }(\mathcal{H}_{A}).$

\textbf{Definition}
\textit{Quantum channel} is a linear completely positive trace-preserving
map $\Phi :\mathfrak{T}(\mathcal{H}_{A})\rightarrow \mathfrak{T}(\mathcal{H}
_{B}).$ Letter $A$ will be always associated with the \textit{input} of the
channel, while $B$ with the \textit{output}. Sometimes, to abbreviate
notations, we will write simply $\Phi :A\rightarrow B.$ $\blacksquare$

Apparently, every channel is a positive map taking states into states: $\Phi
[ \mathfrak{S}(\mathcal{H}_{A})]\subseteq\mathfrak{S}(\mathcal{H}_{B}).$
Since $\mathfrak{T}(\mathcal{H})$ is a base-normed space, this implies \cite
{davies} that $\Phi $ is a bounded map from the Banach space $\mathfrak{T}(
\mathcal{H}_{A})$ to $\mathfrak{T}(\mathcal{H}_{B}).$ The dual $\Phi ^{\ast }
$ of the map $\Phi $ is uniquely defined by the relation
\begin{equation}
\mathrm{Tr}\Phi [ X]Y=\mathrm{Tr}X\Phi ^{\ast }[Y];\quad X\in \mathfrak{T}(
\mathcal{H}_{A}),\,Y\in \mathfrak{L}(\mathcal{H}_{B}),  \label{dual}
\end{equation}
and it is called \textit{dual channel}. The dual channel is linear
completely positive $* -$weakly continuous map from $\mathfrak{L}(\mathcal{H}
_{B})$ to $\mathfrak{L}(\mathcal{H}_{A}),$ which is \textit{unital}: $\Phi [
I_{\mathcal{H}_{B}}]=I_{\mathcal{H}_{A}}.$ Here and in what follows $I$ with
possible index denotes the unit operator in the corresponding Hilbert space.

There are positive maps that are not completely positive, a basic example
provided by matrix transposition $X\rightarrow X^{\top }$ in a fixed basis.

From the definition of complete positivity one easily derives \cite{h} that
composition of channels $\Phi _{2}\circ \Phi _{1}$ defined as
\begin{equation*}
\Phi _{2}\circ \Phi _{1}[X]=\Phi _{2}[\Phi _{1}[X]],
\end{equation*}
and naturally defined tensor product of channels
\begin{equation*}
\Phi _{1}\otimes \Phi _{2}=\left( \Phi _{1}\otimes \mathrm{Id}_{2}\right)
\circ \left( \mathrm{Id}_{1}\otimes \Phi _{2}\right)
\end{equation*}
are again channels.\bigskip

\subsection{Stinespring-type representation}

The notion of completely positive map was introduced by Stinespring \cite
{stine} in a much wider context of C*-algebras. This allows also to cover
the notion of \textit{hybrid} channel where the input is quantum while the
output is classical or vice versa. An example of such channel was mentioned
in Sec. \ref{sec:intro}. We will not pursue this topic further here, see
\cite{h}, but only mention that complete positivity reduces to positivity in
such cases.

Motivated by the famous Naimark's dilation theorem, Stinespring established
a representation for completely positive maps of C*-algebras which in the
case of quantum channel reduces \cite{h} to

\begin{proposition}
Let $\Phi :A\rightarrow B $ be a channel. There exist a Hilbert space $
\mathcal{H}_{E} $ and an isometric operator $V:\mathcal{H}_{A}\rightarrow
\mathcal{H}_{B}\otimes \mathcal{H}_{E}$, such that
\begin{equation}
\Phi [ \rho ]=\mathrm{Tr}_{E}V\rho V^{\ast };\quad \rho \in \mathfrak{T}
\left( \mathcal{H}_{A}\right) ,  \label{stinespring}
\end{equation}
where $\mathrm{Tr}_{E}$ denotes partial trace with respect to $\mathcal{H}
_{E}.$ The representation (\ref{stinespring}) is not unique, however any two
representations with $V_{1}:\mathcal{H}_{A}\rightarrow \mathcal{H}
_{B}\otimes \mathcal{H}_{E_{1}}$ and $V_{2}:\mathcal{H}_{A}\rightarrow
\mathcal{H}_{B}\otimes \mathcal{H}_{E_{2}}$ are related via partial isometry
$W:\mathcal{H}_{E_{1}}\rightarrow \mathcal{H}_{E_{2}}$ such that $
V_{2}=\left( I_{B}\otimes W\right) V_{1}$ and $V_{1}=\left( I_{B}\otimes
W^{\ast }\right) V_{2}.$
\end{proposition}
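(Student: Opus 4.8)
The plan is to obtain the Schr\"odinger-picture dilation (\ref{stinespring}) from Stinespring's representation for the \emph{dual} map $\Phi^{*}$, and then to expose the environment space $\mathcal{H}_{E}$ through the structure of normal representations of $\mathfrak{L}(\mathcal{H}_{B})$. First I would pass to $\Phi^{*}:\mathfrak{L}(\mathcal{H}_{B})\rightarrow\mathfrak{L}(\mathcal{H}_{A})$ defined by (\ref{dual}), which as noted is completely positive, unital and $*$-weakly continuous (normal). Stinespring's theorem applied to this unital CP map of the von Neumann algebra $\mathfrak{L}(\mathcal{H}_{B})$ into $\mathfrak{L}(\mathcal{H}_{A})$ produces a Hilbert space $\mathcal{K}$, a unital $*$-representation $\pi:\mathfrak{L}(\mathcal{H}_{B})\rightarrow\mathfrak{L}(\mathcal{K})$ and an operator $W:\mathcal{H}_{A}\rightarrow\mathcal{K}$ with $\Phi^{*}[Y]=W^{*}\pi(Y)W$; unitality $\Phi^{*}[I_{B}]=I_{A}$ forces $W^{*}W=I_{A}$, so that $W$ is an isometry. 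Normality of $\Phi^{*}$ ensures that the GNS-type construction yields a \emph{normal} representation $\pi$.

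The crux is the second step: invoking the structure theorem for normal representations of the type I factor $\mathfrak{L}(\mathcal{H}_{B})$, every such $\pi$ is unitarily equivalent to an amplification $Y\mapsto Y\otimes I_{E}$ on $\mathcal{H}_{B}\otimes\mathcal{H}_{E}$, the multiplicity space $\mathcal{H}_{E}$ being the environment. Identifying $\mathcal{K}$ with $\mathcal{H}_{B}\otimes\mathcal{H}_{E}$ and setting $V:=W$, we get $\Phi^{*}[Y]=V^{*}(Y\otimes I_{E})V$. Dualizing through (\ref{dual}), for arbitrary $\rho$ and $Y$,
\begin{equation*}
\mathrm{Tr}\,\Phi[\rho]Y=\mathrm{Tr}\,\rho\,V^{*}(Y\otimes I_{E})V=\mathrm{Tr}\,(V\rho V^{*})(Y\otimes I_{E})=\mathrm{Tr}\,\big(\mathrm{Tr}_{E}V\rho V^{*}\big)Y,
\end{equation*}
and since $Y\in\mathfrak{L}(\mathcal{H}_{B})$ is arbitrary this is exactly (\ref{stinespring}).

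I expect the main obstacle to be the infinite-dimensional bookkeeping: checking that normality of $\Phi^{*}$ genuinely transfers to $\pi$ (so that the amplification theorem applies and $\mathcal{H}_{E}$ is an honest Hilbert space rather than an ill-defined multiplicity), and that $V$ lands in the ordinary tensor product. In finite dimensions this difficulty evaporates --- one may instead diagonalize the Choi operator $(\Phi\otimes\mathrm{Id})[P_{\Omega}]\geq 0$, with $\Omega$ a maximally entangled vector, to read off Kraus operators $A_{k}$ satisfying $\sum_{k}A_{k}^{*}A_{k}=I_{A}$ and put $V\psi=\sum_{k}(A_{k}\psi)\otimes e_{k}$ for an orthonormal basis $\{e_{k}\}$ of $\mathcal{H}_{E}$ --- but the separable-space statement requires the normality argument above.

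For the uniqueness clause I would argue by minimality, calling a dilation minimal when $\mathcal{H}_{B}\otimes\mathcal{H}_{E}$ is the closed linear span of the vectors $(Y\otimes I_{E})V\mathcal{H}_{A}$, $Y\in\mathfrak{L}(\mathcal{H}_{B})$. Given two dilations $V_{1},V_{2}$, the inner products
\begin{equation*}
\big\langle(Y\otimes I)V_{i}\psi,\,(Y'\otimes I)V_{i}\psi'\big\rangle=\big\langle\psi,\,\Phi^{*}[Y^{*}Y']\psi'\big\rangle
\end{equation*}
depend only on $\Phi$ and not on $i$; hence the correspondence $(Y\otimes I_{E_{1}})V_{1}\psi\mapsto(Y\otimes I_{E_{2}})V_{2}\psi$ is well defined and isometric on the minimal subspaces, extending to a partial isometry $U$ with $V_{2}=UV_{1}$. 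Since $U$ intertwines the two amplifications $Y\otimes I_{E_{1}}$ and $Y\otimes I_{E_{2}}$ it must be of the form $U=I_{B}\otimes W$ for a partial isometry $W:\mathcal{H}_{E_{1}}\rightarrow\mathcal{H}_{E_{2}}$, whence $V_{2}=(I_{B}\otimes W)V_{1}$, and symmetry gives $V_{1}=(I_{B}\otimes W^{*})V_{2}$.
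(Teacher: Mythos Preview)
The paper does not actually supply a proof of this proposition; it is stated as a known fact with a citation to \cite{h} (and implicitly to Stinespring's original theorem \cite{stine}). Your argument is correct and is essentially the standard one found in the cited references: apply Stinespring to the normal unital CP map $\Phi^{*}$, invoke the structure of normal representations of the type I factor $\mathfrak{L}(\mathcal{H}_{B})$ to identify $\pi$ with an amplification $Y\mapsto Y\otimes I_{E}$, and then dualize. The uniqueness argument via the inner-product identity on the minimal subspaces is likewise the standard route; the only point worth making explicit is that the closed span of $\{(Y\otimes I_{E_{i}})V_{i}\psi\}$ is invariant under $\mathfrak{L}(\mathcal{H}_{B})\otimes I_{E_{i}}$ and therefore has the form $\mathcal{H}_{B}\otimes\mathcal{K}_{i}$ for a subspace $\mathcal{K}_{i}\subset\mathcal{H}_{E_{i}}$, so that the intertwiner $U$ extended by zero is genuinely of the form $I_{B}\otimes W$ with $W$ a partial isometry. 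With that remark your proof is complete.
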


Consider a representation (\ref{stinespring}) for the channel $\Phi ;$ the
\textit{complementary channel} \cite{H2}, \cite{KMNR} is then defined by the
relation
\begin{equation}
\tilde{\Phi}[\rho ]=\mathrm{Tr}_{B}V\rho V^{\ast };\quad \rho \in \mathfrak{T
}\left( \mathcal{H}_{A}\right) .  \label{complementary}
\end{equation}
From the relation between the different representations (\ref{stinespring}),
it follows that the complementary channel is unique in the following sense:
any two channels $\tilde{\Phi}_{1},\tilde{\Phi}_{2}$ complementary to $\Phi $
are isometrically equivalent in the sense that there is a partial isometry $
W:\mathcal{H}_{E_{1}}\rightarrow \mathcal{H}_{E_{2}}$ such that
\begin{equation}
\tilde{\Phi}_{2}[\rho ]=W\tilde{\Phi}_{1}[\rho ]W^{\ast },\quad \tilde{\Phi}
_{1}[\rho ]=W^{\ast }\tilde{\Phi}_{2}[\rho ]W,  \label{equiv}
\end{equation}
for all $\rho .$ It follows that the initial projector $W^{\ast }W$
satisfies $\tilde{\Phi}_{1}[\rho ]=W^{\ast }W\tilde{\Phi}_{1}[\rho ],$ i.e.
its support contains the support of $\tilde{\Phi}_{1}[\rho ],$ while the
final projector $WW^{\ast }$ has similar property with respect to $\tilde{
\Phi}_{2}[\rho ].$ The complementary to complementary can be shown
isometrically equivalent to the initial channel, so that $\Phi , \tilde{\Phi}
$ are called mutually complementary channels.

In general, we will say that two density operators $\rho $ and $\sigma $
(possibly acting in different Hilbert spaces) are \textit{isometrically
equivalent} if there is a partial isometry $W$ such that $\rho =W\sigma
W^{\ast },\quad \sigma =W^{\ast }\rho W.$ Apparently, this is the case if
and only if nonzero spectra (counting multiplicity) of the density operators
$\rho $ and $\sigma $ coincide. We denote this fact with the notation $\rho
\thicksim \sigma $. We have just shown that $\tilde{\Phi}_{1}[\rho
]\thicksim \tilde{\Phi}_{2}[\rho ]$ for arbitrary $\rho .$

\begin{lemma}
\label{L1} Let $\tilde{\Phi}$ be a complementary channel (\ref{complementary}
), then $\Phi [ P_{\psi }]\thicksim \tilde{\Phi}[P_{\psi }]$ for all $\psi
\in \mathcal{H}_{A}.$
\end{lemma}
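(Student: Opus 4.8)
The plan is to reduce the statement to the Schmidt decomposition of a single vector in the tensor product $\mathcal{H}_B\otimes\mathcal{H}_E$, exploiting the fact that both $\Phi$ and $\tilde{\Phi}$ are manufactured from the \emph{same} Stinespring isometry $V$ via the representations (\ref{stinespring}) and (\ref{complementary}).

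First I would fix a unit vector $\psi\in\mathcal{H}_A$ and observe that, since $V$ is isometric, the operator $VP_{\psi}V^{\ast}$ is itself the rank-one projector $P_{\chi}$ onto the unit vector $\chi=V\psi\in\mathcal{H}_B\otimes\mathcal{H}_E$ (indeed $(VP_{\psi}V^{\ast})^{2}=VP_{\psi}V^{\ast}$ using $V^{\ast}V=I_A$ and $P_{\psi}^{2}=P_{\psi}$). Consequently the two channel outputs are the two partial traces of one and the same pure state,
\begin{equation*}
\Phi[P_{\psi}]=\mathrm{Tr}_{E}\,P_{\chi},\qquad \tilde{\Phi}[P_{\psi}]=\mathrm{Tr}_{B}\,P_{\chi}.
\end{equation*}
Next I would invoke the Schmidt decomposition of $\chi$: there are orthonormal systems $\{e_{k}\}$ in $\mathcal{H}_B$ and $\{f_{k}\}$ in $\mathcal{H}_E$ and numbers $\lambda_{k}>0$ with $\sum_{k}\lambda_{k}=\|\chi\|^{2}=1$ such that $\chi=\sum_{k}\sqrt{\lambda_{k}}\,e_{k}\otimes f_{k}$. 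Computing the partial traces of $P_{\chi}$ term by term yields the spectral decompositions
\begin{equation*}
\mathrm{Tr}_{E}\,P_{\chi}=\Sum_{k}\lambda_{k}P_{e_{k}},\qquad \mathrm{Tr}_{B}\,P_{\chi}=\Sum_{k}\lambda_{k}P_{f_{k}},
\end{equation*}
so that $\Phi[P_{\psi}]$ and $\tilde{\Phi}[P_{\psi}]$ share the very same family $\{\lambda_{k}\}$ of nonzero eigenvalues with identical multiplicities. By the characterization of isometric equivalence recorded just before the lemma --- namely that $\rho\thicksim\sigma$ holds precisely when the nonzero spectra coincide counting multiplicity --- this gives $\Phi[P_{\psi}]\thicksim\tilde{\Phi}[P_{\psi}]$, as required.

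The main obstacle is purely technical: justifying the Schmidt decomposition and the term-by-term partial traces in the separable infinite-dimensional setting. I would handle this by identifying $\chi$ with the Hilbert--Schmidt operator $C$ it determines between $\mathcal{H}_E$ and $\mathcal{H}_B$; since $\|\chi\|=1$ the operator $C$ is Hilbert--Schmidt, its singular value decomposition supplies $\{e_{k}\},\{f_{k}\}$ and $\{\sqrt{\lambda_{k}}\}$, and the two reduced operators coincide (up to the natural conjugation on $\mathcal{H}_E$) with $CC^{\ast}$ and $C^{\ast}C$, which are isospectral away from zero by the standard fact that $CC^{\ast}$ and $C^{\ast}C$ have the same nonzero spectrum with multiplicities. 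Finally, the conclusion is independent of the particular representation used to define $\tilde{\Phi}$, since by (\ref{equiv}) any two complementary channels produce isometrically equivalent outputs, and $\thicksim$ is an equivalence relation.
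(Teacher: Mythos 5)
Your proposal is correct and follows essentially the same route as the paper: both reduce the claim to the fact that $VP_{\psi}V^{\ast}$ is a pure state of the composite system $BE$ whose two partial traces have identical nonzero spectra by the Schmidt decomposition. The only difference is that you spell out the Schmidt/singular-value argument (and its infinite-dimensional justification) that the paper simply cites as a known result.
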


\begin{proof}
Let $V:\mathcal{H}_{A}\rightarrow \mathcal{H}_{B}\otimes \mathcal{H}_{E}$ be
the isometry from the representations (\ref{stinespring}), (\ref
{complementary}), then $\rho _{BE}=VP_{\psi }V^{\ast }$ is a pure state in $
\mathcal{H}_{B}\otimes \mathcal{H}_{E},$ and the statement follows from a
basic result in quantum information theory (``Schmidt decomposition''): if $
\rho _{BE}$ is a pure state in $\mathcal{H}_{B}\otimes \mathcal{H}_{E}$ and $
\rho _{B}=\mathrm{Tr}_{E}\rho _{BE},\,\rho _{E}=\mathrm{Tr}_{B}\rho _{BE}$
are its partial states, then $\rho _{B}\thicksim \rho _{E}$ (see e.g.
Proposition 3 in \cite{Ho})
\end{proof}

A different name for channel is dynamical map -- in nonequilibrium quantum
statistical mechanics they arise as irreversible evolutions of an open
quantum system interacting with an environment \cite{h}. Assume that there
is a composite quantum system $AD=BE$ in the Hilbert space
\begin{equation}
\mathcal{H}=\mathcal{H}_{A}\otimes \mathcal{H}_{D}\simeq \mathcal{H}
_{B}\otimes \mathcal{H}_{E},
\end{equation}
which is initially prepared in the state $\rho _{A}\otimes \rho _{D}$ and
then evolves according to the unitary operator $U.$ Then the output state $
\rho_{B}$ depending on the input state $\rho _{A}=\rho$ is
\begin{equation}
\Phi _{B}[\rho ]=\mathop{\rm Tr}\nolimits_{E}U(\rho \otimes \rho
_{D})U^{\ast },  \label{fia}
\end{equation}
while the output state of the ``environment'' $E$ is the output of the
channel
\begin{equation}
\Phi _{E}[\rho ]=\mathop{\rm Tr}\nolimits_{B}U(\rho \otimes \rho
_{D})U^{\ast }.  \label{fie}
\end{equation}
If the initial state of $D$ is pure, $\rho _{D}=P_{\psi _{D}},$ then by
introducing the isometry $V: \mathcal{H}_{A}\rightarrow \mathcal{H}
_{B}\otimes \mathcal{H}_{D},$ which acts as
\begin{equation*}
V\psi =U(\psi \otimes \psi _{D} ),\quad \psi \in \mathcal{H}_{A},
\end{equation*}
we see that the relations (\ref{fia}), (\ref{fie}) convert into (\ref
{stinespring}), (\ref{complementary}), and $\Phi _{E}$ is just the
complementary of $\Phi _{B}.$ Notice also that both partial trace and
unitary evolution are completely positive operators, hence the maps (\ref
{fia}), (\ref{fie}) are completely positive; vice versa, any quantum channel
has a representation of such a form, see, e.g. \cite{h}.

Vast literature is devoted to study of quantum dynamical semigroups
(noncommutative analog of Markov semigroups) and quantum Markov processes.
Stinespring-type representation (\ref{stinespring}) underlies dilations of
quantum dynamical semigroups to the unitary dynamics of open quantum system
interacting with an environment \cite{davies}, \cite{structure}.

\subsection{Entropic quantities and additivity}

Consider the norm of the map $\Phi $ defined similarly to (\ref{clgaus}):
\begin{equation}
\left\Vert \Phi \right\Vert _{1\rightarrow p}=\sup_{X\neq 0}\left\Vert \Phi
[ X]\right\Vert _{p}/\left\Vert X\right\Vert _{1}=\sup_{\left\Vert
X\right\Vert _{1}\leq 1}\left\Vert \Phi [ X]\right\Vert _{p},  \label{norm}
\end{equation}
where $\left\Vert \cdot \right\Vert _{p}$ is the Schatten $p-$norm \cite
{reed}. As shown in \cite{aud},
\begin{equation}
\left\Vert \Phi \right\Vert _{1\rightarrow p}^{p}=\sup_{\rho \in \mathfrak{S}
\left( \mathcal{H}_{A}\right) }\mathrm{Tr}\Phi [ \rho ]^{p}=\sup_{\psi \in
\mathcal{H}_{A}}\mathrm{Tr}\Phi [P_{\psi} ]^{p},  \label{purity}
\end{equation}
where the second equality follows from convexity of the function  $x^p, p>1$..

The quantum R\'{e}nyi entropy of order $p>1$ of a density operator $\rho $
is defined as
\begin{equation}
R_{p}(\rho )=\frac{1}{1-p}\log \mathrm{Tr}\rho ^{p}=\frac{p}{1-p}\log
\left\Vert \rho \right\Vert _{p},  \label{qri}
\end{equation}
Define the \textit{minimal output R\'{e}nyi entropy} of the channel $\Phi $
\begin{equation}
\check{R}_{p}(\Phi )=\inf_{\rho \in \mathfrak{S}(\mathcal{H})}R_{p}(\Phi [
\rho ])=\frac{p}{1-p}\log \left\Vert \Phi \right\Vert _{1\rightarrow p}
\label{4p}
\end{equation}
and the \textit{minimal output von Neumann entropy}
\begin{equation}
\check{H}(\Phi )=\inf_{\rho \in \mathfrak{S}(\mathcal{H})}H(\Phi [ \rho ]).
\label{3}
\end{equation}
In the limit $p\rightarrow 1$ the quantum R\'{e}nyi entropies monotonely
nondecreasing converge to the von Neumann entropy
\begin{equation*}
\lim_{p\rightarrow 1}R_{p}(\rho )=-\mathrm{Tr}\rho \log \rho =H(\rho ).
\end{equation*}
In finite dimensions the set of quantum states is compact, hence by
Dini's Lemma the minimal output R\'{e}nyi entropies converge to the minimal
output von Neumann entropy\footnote{The corresponding statement is not valid for infinite-dimensional channels (even for classical channels with countable set of states), M. E. Shirokov, private communication.}.

Multiplicativity of the norm (\ref{norm}) for some channels $\Phi _{1},\Phi _{2}$,
\begin{equation}
\left\Vert \Phi _{1}\otimes \Phi _{2}\right\Vert _{1\rightarrow
p}=\left\Vert \Phi _{1}\right\Vert _{1\rightarrow p}\cdot \left\Vert \Phi
_{2}\right\Vert _{1\rightarrow p}  \label{multi}
\end{equation}
is equivalent to the additivity of the minimal output R\'{e}nyi entropies
\begin{equation}
\check{R}_{p}\left( \Phi _{1}\otimes \Phi _{2}\right) =\check{R}_{p}\left(
\Phi _{1}\right) +\check{R}_{p}\left( \Phi _{2}\right) .  \label{n2p}
\end{equation}
Closely related is the similar property for the minimal output von Neumann
entropy:
\begin{equation}
\check{H}(\Phi _{1}\otimes \Phi _{2})=\check{H}(\Phi _{1})+\check{H}(\Phi
_{2}).  \label{4}
\end{equation}
In finite dimensions, the validity of (\ref{n2p}) for certain channels
$\Phi _{1},\Phi _{2}$ and $p$ close to 1 implies (\ref{4}) for these channels.

In the last two relations the inequality $\leq $ (similarly to the
inequality $\geq $ in (\ref{multi})) is obvious because the right-hand side
is equal to the infimum over the subset of product states $
\rho=\rho_1\otimes\rho_2$. On the other hand, existence of ``entangled''
pure states which are not reducible to product states, is the cause for
possible violation of the equality for quantum channels.

\subsection{The channel capacity}

\label{sec:chcap}

The practical importance of the additivity property (\ref{4}) is revealed in
connection with the notion of the channel capacity. To explain it we assume
that $\mathcal{H}_{A},\mathcal{H}_{B}$ are finite dimensional for the moment.

For a quantum channel $\Phi $, a noncommutative analog of the Shannon
capacity, which we call $\chi -$\textit{capacity}, is defined by
\begin{equation}
C_{\chi }(\Phi )=\sup_{\left\{ \pi _{j},\rho _{j}\right\} }\left( H\left(
\Phi \left[ \sum_{j}\pi _{j}\rho _{j}\right] \right) -\sum_{j}\pi _{j}H(\Phi
[ \rho _{j}])\right) ,  \label{1}
\end{equation}
where the supremum is over all \textit{quantum ensembles}, that is finite
collections of states $\{\rho _{1},\ldots ,\rho _{n}\}$ with corresponding
probabilities $\{\pi _{1},\ldots ,\pi _{n}\}$. The quantity (\ref{1}) is
closely related to the capacity $C(\Phi )$ of quantum channel $\Phi $ for
transmitting classical information \cite{Ho}. The \textit{classical capacity}
of a quantum channel is defined as the maximal transmission rate per use of
the channel, with coding and decoding chosen for increasing number $n$ of
independent uses of the channel
\begin{equation*}
\Phi ^{\otimes n}=\underset{n}{\underbrace{\Phi \otimes \dots \otimes \Phi }}
\end{equation*}
such that the error probability goes to zero as $n\rightarrow \infty $ (for
a precise definition see \cite{h}). A basic result of quantum information
theory, HSW Theorem \cite{HOLEVO98}, says that such defined capacity $C(\Phi
)$ is related to $C_{\chi }(\Phi )$ by the formula
\begin{equation*}
C(\Phi )=\lim_{n\rightarrow \infty }(1/n)C_{\chi }(\Phi ^{\otimes n}).
\end{equation*}
Since $C_{\chi }(\Phi )$ is easily seen to be superadditive (i. e., $C_{\chi
}(\Phi _{1}\otimes \Phi _{2})\geq C_{\chi }(\Phi _{1})+C_{\chi }(\Phi _{2})$
), one has $C(\Phi )\geq C_{\chi }(\Phi )$. However if the additivity
\begin{equation}
C_{\chi }(\Phi _{1}\otimes \Phi _{2})=C_{\chi }(\Phi _{1})+C_{\chi }(\Phi
_{2})  \label{2}
\end{equation}
holds for a given channel $\Phi _{1}=\Phi $ and an arbitrary channel $\Phi
_{2}$, then
\begin{equation}
C_{\chi }(\Phi ^{\otimes n})=nC_{\chi }(\Phi ),  \label{wadd}
\end{equation}
implying
\begin{equation}
C(\Phi )=C_{\chi }(\Phi ).  \label{CCchi}
\end{equation}
The reason for possible violation of the equality here, as well as in the
cases (\ref{multi}), (\ref{n2p}), (\ref{4}), is existence of entangled
states, which are not reducible to product states, at the input of tensor
product channel $\Phi ^{\otimes n}$.

\subsection{Main conclusions}

Thus it was natural to ask: does the the additivity property (\ref{4}) holds
globally, i.e. for tensor product of \textit{any} pair of quantum channels $
\Phi _{1},\Phi _{2}$? The problem can be traced back to \cite{BFS}, see also
\cite{Ho}. Quite remarkably, Shor \cite{Shb}, see also \cite{FW}, had shown
the equivalence of the global properties of additivity of the $\chi -$
capacity and of the minimal output entropy.

\begin{theorem}
\label{glob} \cite{Shb} The properties (\ref{2}) and (\ref{4}) are globally
equivalent in the sense that if one of them holds for all channels $\Phi
_{1},\Phi _{2},$ then another is also true for all channels.
\end{theorem}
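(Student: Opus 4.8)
The plan is to route the equivalence through a single auxiliary functional, the convex roof of the output entropy, and to reduce both (\ref{2}) and (\ref{4}) to one superadditivity property of this functional. In both statements one inequality is free: restricting the optimization in $\check H$ to product inputs gives $\check H(\Phi_1\otimes\Phi_2)\le\check H(\Phi_1)+\check H(\Phi_2)$ since von Neumann entropy is additive on tensor products, while the superadditivity of $C_\chi$ already recorded in the text gives $C_\chi(\Phi_1\otimes\Phi_2)\ge C_\chi(\Phi_1)+C_\chi(\Phi_2)$. Only the opposite inequalities are at issue, and the global quantifier ``for all channels'' is what makes the argument possible: it permits applying each hypothesis not to $\Phi_1,\Phi_2$ directly but to auxiliary channels of our own design.

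For a channel $\Phi$ define the convex roof $\hat H_\Phi(\rho)=\inf\sum_j\pi_j H(\Phi[\rho_j])$, the infimum over all ensembles with $\sum_j\pi_j\rho_j=\rho$; this is the largest convex function majorized by $\rho\mapsto H(\Phi[\rho])$. Fixing the barycenter $\rho=\sum_j\pi_j\rho_j$ in (\ref{1}) rewrites the Holevo capacity as $C_\chi(\Phi)=\sup_\rho\bigl(H(\Phi[\rho])-\hat H_\Phi(\rho)\bigr)$, while, because a concave function attains its minimum at an extreme point, $\check H(\Phi)=\inf_\rho\hat H_\Phi(\rho)=\inf_\psi H(\Phi[P_\psi])$. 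Both invariants are therefore governed by the single functional $\hat H_\Phi$; moreover, by Lemma~\ref{L1} each summand $H(\Phi[P_\psi])$ equals the output entropy $H(\tilde\Phi[P_\psi])$ of the complementary channel, which is exactly what identifies $\hat H_\Phi$ with a (constrained) entanglement of formation.

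The heart of the reduction is that the single property of \emph{strong superadditivity} of the convex roof, $\hat H_{\Phi_1\otimes\Phi_2}(\rho_{12})\ge\hat H_{\Phi_1}(\rho_1)+\hat H_{\Phi_2}(\rho_2)$ for the marginals $\rho_1,\rho_2$ of an arbitrary joint input $\rho_{12}$, implies both additivity statements at once. For (\ref{4}) it gives $\check H(\Phi_1\otimes\Phi_2)=\inf_{\rho_{12}}\hat H_{\Phi_1\otimes\Phi_2}(\rho_{12})\ge\inf_{\rho_{12}}\bigl(\hat H_{\Phi_1}(\rho_1)+\hat H_{\Phi_2}(\rho_2)\bigr)\ge\check H(\Phi_1)+\check H(\Phi_2)$; for (\ref{2}), combining it with subadditivity of the output entropy in the $\sup$-representation yields, for every $\rho_{12}$, $H((\Phi_1\otimes\Phi_2)[\rho_{12}])-\hat H_{\Phi_1\otimes\Phi_2}(\rho_{12})\le\sum_i\bigl(H(\Phi_i[\rho_i])-\hat H_{\Phi_i}(\rho_i)\bigr)\le C_\chi(\Phi_1)+C_\chi(\Phi_2)$. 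Thus it remains only to show that each of the hypotheses (\ref{2}) and (\ref{4}), taken globally, forces this strong superadditivity. The route I would take is to attach the ensemble index of an optimal decomposition as a dephased classical flag, converting the convex roof $\hat H_\Phi$ into a genuine minimal output entropy (respectively a genuine Holevo capacity) of a modified channel built from $\Phi$, and then apply the global hypothesis to the modified pair.

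The main obstacle is precisely the design of these flag/augmentation gadgets and the proof that they respect tensor products: one must guarantee that the modified channel of $\Phi_1\otimes\Phi_2$ is (isometrically equivalent to) the tensor product of the modified channels of $\Phi_1$ and $\Phi_2$, with no spurious correlations introduced by the appended registers, so that applying the global additivity hypothesis to the modified pair descends to the strong superadditivity of $\hat H_\Phi$ for $\Phi_1,\Phi_2$. Equivalently, the difficulty is to control the optimization over \emph{entangled} joint inputs $\rho_{12}$ --- whose marginals need not be the barycenters of any product ensemble --- through the encoding, rather than through any single entropy inequality; this book-keeping, not a delicate estimate, is where the real work of the theorem lies.
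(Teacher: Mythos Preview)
The paper does not give a proof of this theorem at all: it is stated with a citation to Shor~\cite{Shb} and then the text moves on to the history of the additivity problem. So there is nothing in the paper to compare your argument against; the relevant comparison is with Shor's original proof.

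Your outline is in fact the skeleton of Shor's argument. In~\cite{Shb} the equivalence is routed through the entanglement of formation $E_f$ and its \emph{strong superadditivity}; via Lemma~\ref{L1} your convex roof $\hat H_\Phi$ is exactly $E_f$ of the purification viewed through the complementary channel, so your intermediate property is the same as his. The two easy implications you record (strong superadditivity $\Rightarrow$ (\ref{4}) and strong superadditivity $\Rightarrow$ (\ref{2})) are correct and are precisely two of Shor's four implications. What you have not done --- and what you explicitly flag as the ``main obstacle'' --- is the pair of converse implications, which in Shor's paper are the substantive part: one builds, from a hypothetical violation of strong superadditivity, auxiliary channels (attaching classical flags that freeze an optimal ensemble, padding with maximally mixed registers to flatten the first entropy term, etc.) whose minimal output entropy, respectively $\chi$-capacity, then violates (\ref{4}), respectively (\ref{2}). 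Your last paragraph describes the \emph{shape} of these gadgets correctly, but the actual constructions and the verification that they behave well under tensor product are the entire content of the theorem, and they are absent here.

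So as a proof this is incomplete by your own admission; as a plan it is the right one, and coincides with the route taken in the cited reference rather than offering an alternative.
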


The additivity is proved rather simply for all classical channels (see e.g.
\cite{COVER}), but in the quantum case the question remained open for a
dozen of years, and was ultimately solved in the negative.

The detailed history of the problem up to 2006 can be found in \cite{Ho},
and here we only sketch the basic steps and the final resolution. In \cite
{ahw} it was suggested to approach the additivity property (\ref{4}) via
multiplicativity (\ref{multi}) of the $\left( 1\rightarrow p\right) -$norms
(equivalent to additivity (\ref{n2p}) of the minimal output R\'{e}nyi
entropies). The first explicit example where this property breaks for $d=
\mathrm{dim}\mathcal{H}\geq 3$ and large enough $p$ was \textit{
transpose-depolarizing channel} \cite{WH}:
\begin{equation}\label{ctd}
\Phi (\rho )=\frac{1}{d-1}\left[ I\,\mathrm{Tr}\rho -\rho ^{\top }\right] ,
\end{equation}
where $\rho\in \mathfrak{L}_d$ is a matrix and $\rho ^{\top }$ its
transpose. In particular, (\ref{n2p}) with $\Phi _{1}=\Phi _{2}=\Phi $ fails
to hold for $p\geq 4,7823$ if $d=3$ (nevertheless, the additivity of $\check{
H}(\Phi )$ and of $C_{\chi }(\Phi )$ holds for this channel). Five years
later came important findings of Winter \cite{wint} and Hayden \cite{hay},
see also \cite{HaWi}, who showed existence of a pair of channels breaking
the additivity of the minimal output R\'{e}nyi entropy for all values of the
parameter $p>1$. The method of these and subsequent works is random choice
of the channels, which for fixed dimensions are parametrized by isometries $V
$ in the representation (\ref{stinespring}), as well of the input states of
the channels, combined with sufficiently precise probabilistic estimates for
the norms (\ref{purity}). For finite dimensions the corresponding parametric
sets are compact, and one usually takes the uniform distribution. Basing on
this progress, Hastings \cite{hast} gave a proof of existence of channels
breaking the additivity conjecture (\ref{4}) corresponding to $p=1$, in very
high dimensions. Moreover, the probability of violation of the additivity
tends to 1 as the dimensionalities tend to infinity. Hastings gave only a
sketch, and the detailed proof following his approach was given by Fukuda,
King and Moser \cite{Fuk}, and further simplified by Brandao and M. Hordecki
\cite{brandao}. Later Szarek et al. \cite{sz} proposed a proof related to
the \textit{Dvoretzky-Mil'man theorem} on almost Euclidean sections of
high-dimensional convex bodies.

Although, combined with theorem \ref{glob} this gives a definite negative
answer to the additivity conjectures, several important issues remaine open.
All the proofs use the technique of random unitary channels or random states
and as such are not constructive: they prove only existence of
counterexamples but do not allow to actually produce them. Attempts to give
estimates for the dimensions in which nonadditivity can happen based on
Hastings' approach has led to overwhelmingly high values: the detailed
estimates made in \cite{Fuk} gave $ d\approx 10^{32}$
breaking the additivity by a quantity of the order $10^{-5}$. The best
result in this direction obtained in \cite{belin} states that ``violations
of the additivity of the minimal output entropy, using random unitary
channels and a maximally entangled state state, can occur if and only if the
output space has dimension at least 183. Almost surely, the defect of
additivity is less than $\log 2$, and it can be made as close as desired to $
\log 2$''.

While this does not exclude possibility of better estimates, based perhaps
on a different (but yet unknown) models, it casts doubt onto finding
concrete counterexamples by computer simulation of random channels.
From this point of view, the following explicit example given in \cite{Gru} is of interest.
Consider the completely positive map
\begin{equation*}
\rho\longrightarrow\Phi_{\_}[\rho ]=
\mathop{\rm Tr}\nolimits _{2}P_{-}\rho P_{-},\quad \rho \in\mathfrak{T}(\mathcal{H} \otimes \mathcal{H}),
\end{equation*}
where $P_{-}$ is the projector onto the antisymmetric
subspace $\mathcal{H} _{\_}$ of $\mathcal{H} \otimes \mathcal{H} $ which has
the dimensionality $\frac{d(d-1)}{2}$, and the partial trace is taken with respect to the second copy of $\mathcal{H}$.
Its restriction to the operators with support in the subspace $\mathcal{H}
_{\_}$ is trace preserving, hence it is a channel. It can be shown \cite{h} that $\Phi_{\_}=\frac{(d-1)}{2}\tilde{\Phi}^{\ast
} $ where $\tilde{\Phi}^{\ast }$ is the dual to the complementary of the channel (\ref{ctd}).
For this simple channel the minimal R\'{e}nyi
entropies are nonadditive for all $p>2$ and sufficiently large $d$, but unfortunately it is not clear if it could be extended
to the most interesting range $p\geq 1.$

Coming back to arbitrary channels, it remains unclear what happens in small dimensions: perhaps the
additivity still holds generically for some unknown reason, or its violation
is so tiny that it cannot be revealed by numerical simulations. This is
indeed surprising in view of the fact that the physical reason for
nonadditivity is entanglement between the inputs of the parallel quantum
channels, see \cite{h} for more detail.

On the other hand, these results stress the importance of continuing efforts
to find special cases where the additivity holds for some reason, and can be
established analytically.

A survey of the main classes of such `` additive'' channels acting in finite
dimensions was presented in \cite{Ho}; below we briefly list the most
important classes of channels $\Phi $ for which the additivity properties (
\ref{4}), (\ref{2}) and (\ref{n2p}) for $p>1$ were established with $\Phi
=\Phi_{1} $ and arbitrary $\Phi _{2}.$

\begin{itemize}
\item Qubit unital channels, i.e channels $\Phi :$ $\mathfrak{L}
_{2}\rightarrow \mathfrak{L}_{2}$ satisfying $\Phi [ I]=I$ \cite{Kib}.
Strikingly, there is still no analytical proof of the additivity for
nonunital qubit channels, in spite of a convincing numerical evidence \cite
{HIMRT}.

\item Depolarizing channel in $\mathfrak{L}_{d}:$
\begin{equation*}
\Phi [ \rho ]=\left( 1-p\right) \rho +p\frac{I}{d}\mathrm{Tr}\rho ,\quad
0\leq p\leq \frac{d^{2}}{d^{2}-1},
\end{equation*}
which is the only unitarily-covariant channel, and can be regarded as
noncommutative analog of completely symmetric channel in classical
information theory \cite{COVER}. The additivity properties (\ref{4}), (\ref
{n2p}), (\ref{2}) were proved by King \cite{Kic}.

\item Entanglement-breaking channels. In finite dimensions these are
channels of the form
\begin{equation*}
\Phi [ \rho ]=\sum\limits_{j}\rho _{B}\,\mathrm{Tr\,}\rho M_{A},
\end{equation*}
where $\left\{ M_{A}\right\} $ is a resolution of the identity in $\mathcal{H
}_{A}$: $M_{A}\geq 0,\,\sum\limits_{j}\,M_{A}=I_{A},$ and $\rho _{B}\in
\mathfrak{S}(\mathcal{H}_{B})$ (see \cite{R}).

For the finite-dimensional entanglement-breaking channels the additivity of
the minimal output von Neumann entropy and of the $\chi -$capacity was
established by Shor \cite{Sha} and the additivity of the minimal output R
\'{e}nyi entropies -- by King \cite{Keb}. The additivity properties of
entanglement-breaking channels were generalized to infinite dimensions by
Shirokov \cite{Sh-2}.

\item Complementary channels.

The additivity of the minimal output entropy is equivalent for a channel $
\Phi $ and its complementary $\tilde{\Phi}$, see Lemma \ref{L1a} below. The
class of channels complementary to entanglement-breaking contains the
Schur-multiplication maps of matrices $\rho =[c_{jk}]$ $_{j,k=1,\dots ,d}$
in $\mathfrak{L}_{d}$:
\begin{equation*}
\tilde{\Phi}[\rho ]=[\gamma _{jk}c_{jk}]_{j,k=1,\dots ,d},
\end{equation*}
where $[\gamma _{jk}]$ $_{j,k=1,\dots ,d}$ is a nonnegative definite matrix
such that $\gamma _{jj}\equiv 1.$ For these channels, which are also called
``Hadamard channels'' the additivity of the $\chi -$capacity was also
established \cite{KMNR}.
\end{itemize}

In the next Sections we consider Bosonic Gaussian channels which act in
infinite-dimensional spaces. One of the main goals of the present paper is to show
that the additivity holds for a wide class of gauge co- or contravariant Gaussian
channels, i.e. those which respect a fixed complex structure in the
underlying symplectic space.

\subsection{Majorization for quantum states}

From now on we again allow the Hilbert spaces in question to be
infinite-dimensional. Denote by $\mathfrak{F}$ the class of real concave
functions $f$ on $[0,1],$ such that $f(0)=0.$ For any $f\in \mathfrak{F}$
and for any density operator $\rho $ we can consider the quantity
\begin{equation*}
\mathrm{Tr}f(\rho )=\sum_{j}f(\lambda _{j}),
\end{equation*}
where $\lambda _{j}$ are the (nonzero) eigenvalues of the density operator $
\rho ,$ counting multiplicity. Note that this quantity is defined
unambiguously with values in $(-\infty ,\infty ].$ This follows from the
fact that $f(x)\geq cx,$ where $c=f(1),$ hence $\mathrm{Tr}f(\rho )\geq c
\mathrm{Tr}\rho =c.$ We also will use the fact that the functional $\rho
\rightarrow \mathrm{Tr}f(\rho )$ is (strictly) concave on $\mathfrak{S}(
\mathcal{H})$ if $f$ is (strictly) concave (see e.g. \cite{carlen}).

Denote by $\lambda _{j}^{\downarrow }(\rho )$ the eigenvalues of a density
operator $\rho ,$counting multiplicity, arranged in the nonincreasing order.
One says that density operator $\rho $ \textit{majorizes} density operator $
\sigma $ if
\begin{equation*}
\sum_{j=1}^{k}\lambda _{j}^{\downarrow }(\rho )\geq \sum_{j=1}^{k}\lambda
_{j}^{\downarrow }(\sigma ),\quad k=1,2,\dots
\end{equation*}
A consequence of a well known result, see e.g. \cite{carlen}, is that this
is the case if and only if $\mathrm{Tr}f(\rho )\leq \mathrm{Tr}f(\sigma )$
for all $f\in \mathfrak{F}\,.$\bigskip

For a quantum channel $\Phi $ we introduce the quantity
\begin{equation}
\check{f}(\Phi )=\inf_{\rho \in \mathfrak{S}(\mathcal{H})}\mathrm{Tr}f(\Phi
[ \rho ])=\inf_{P_{\psi }\in \mathfrak{P}(\mathcal{H})}\mathrm{Tr}f(\Phi [
P_{\psi }]),  \label{f_check}
\end{equation}
where the second equality follows from the concavity of the functional $\rho
\rightarrow \mathrm{Tr}f(\Phi [ \rho ])$ on $\mathfrak{S}(\mathcal{H}). $
Moreover, for strictly concave $f,$ any minimizer is of the form $P_{\psi } $
for some vector $\psi \in \mathcal{H}.$

In particular, taking $f(x)=-x\log x$ and $f(x)=-x^{p},$ we obtain $\check{f}
(\Phi )=\check{H}(\Phi )$ and $\check{f}(\Phi )=-\left\Vert \Phi \right\Vert
_{1\rightarrow p}^{p}.$

\begin{lemma}
\label{L1a} For complementary channels, $\check{f}(\Phi )=\check{f}(\tilde{
\Phi}).$ Hence $\left\Vert \Phi \right\Vert _{1\rightarrow p}=\Vert \tilde{
\Phi} \Vert _{1\rightarrow p}$, $\check{H}(\Phi )=\check{H}(\tilde{\Phi}),$ $
\check{R}_{p}(\Phi )=\check{R}_{p}(\tilde{\Phi}),$ and the multiplicativity (
\ref{multi}), as well as the additivity of the minimal output entropies (\ref
{4}), (\ref{n2p}) holds simultaneously for pairs of channels $\Phi _{1},
\Phi _{2}$  and $\tilde{\Phi}_{1}, \tilde{\Phi}_{2}$.
\end{lemma}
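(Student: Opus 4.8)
The plan is to reduce everything to Lemma \ref{L1}. That lemma already gives $\Phi[P_\psi]\thicksim\tilde\Phi[P_\psi]$ for every unit vector $\psi\in\mathcal H_A$, meaning the two output states share the same nonzero spectrum, counting multiplicity. Since every $f\in\mathfrak F$ satisfies $f(0)=0$, the functional $\rho\mapsto\mathrm{Tr}\,f(\rho)=\sum_j f(\lambda_j)$ depends only on the nonzero eigenvalues of $\rho$; hence isometric equivalence forces $\mathrm{Tr}\,f(\Phi[P_\psi])=\mathrm{Tr}\,f(\tilde\Phi[P_\psi])$ for each $\psi$. Taking the infimum over pure inputs and invoking the pure-state reduction recorded in (\ref{f_check}) — valid for both $\Phi$ and $\tilde\Phi$ by concavity of $\rho\mapsto\mathrm{Tr}\,f(\Phi[\rho])$ — I would conclude
\[
\check f(\Phi)=\inf_{P_\psi}\mathrm{Tr}\,f(\Phi[P_\psi])=\inf_{P_\psi}\mathrm{Tr}\,f(\tilde\Phi[P_\psi])=\check f(\tilde\Phi).
\]
The only point requiring care in infinite dimensions is that $\mathrm{Tr}\,f(\rho)$ is unambiguously defined with values in $(-\infty,\infty]$, which is guaranteed by the bound $f(x)\ge f(1)\,x$ noted before the lemma.

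The scalar consequences then follow by specialization. With $f(x)=-x^p$ (concave on $[0,1]$ for $p>1$, and vanishing at $0$) one has $\check f(\Phi)=-\|\Phi\|_{1\to p}^p$, so equality of $\check f$ gives $\|\Phi\|_{1\to p}=\|\tilde\Phi\|_{1\to p}$, whence the coincidence of $\check R_p$ is immediate from (\ref{4p}). With $f(x)=-x\log x$ one has $\check f(\Phi)=\check H(\Phi)$, giving $\check H(\Phi)=\check H(\tilde\Phi)$.

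For the simultaneous additivity and multiplicativity, the structural fact I would establish first is that complementation commutes with tensor products. If $V_i:\mathcal H_{A_i}\to\mathcal H_{B_i}\otimes\mathcal H_{E_i}$ are Stinespring isometries for $\Phi_i$, then $V_1\otimes V_2$ is a Stinespring isometry for $\Phi_1\otimes\Phi_2$; reordering the output factors, its complement traces out $\mathcal H_{B_1}\otimes\mathcal H_{B_2}$ and so equals $\tilde\Phi_1\otimes\tilde\Phi_2$. Thus $\widetilde{\Phi_1\otimes\Phi_2}=\tilde\Phi_1\otimes\tilde\Phi_2$, up to the isometric equivalence inherent in the definition of the complementary channel. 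Applying $\check f(\Psi)=\check f(\tilde\Psi)$ to $\Psi=\Phi_1\otimes\Phi_2$ and to each $\Phi_i$ separately, the additivity $\check f(\Phi_1\otimes\Phi_2)=\check f(\Phi_1)+\check f(\Phi_2)$ transfers verbatim to the complements; specializing $f$ as above converts this into (\ref{multi}), (\ref{n2p}) for $p>1$ and (\ref{4}).

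The argument is essentially bookkeeping once Lemma \ref{L1} is available, and I anticipate no genuine obstacle. The one place deserving attention is the tensor factorization of the complement, where one must track the reshuffling of the output and environment spaces and recall that the complementary channel is only defined up to isometric equivalence; but since $\check f$ is itself invariant under isometric equivalence — again because it sees only the nonzero spectrum — this ambiguity is harmless.
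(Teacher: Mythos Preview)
Your proof is correct and follows essentially the same route as the paper: invoke Lemma~\ref{L1} to get $\Phi[P_\psi]\thicksim\tilde\Phi[P_\psi]$, use $f(0)=0$ to conclude $\mathrm{Tr}\,f(\Phi[P_\psi])=\mathrm{Tr}\,f(\tilde\Phi[P_\psi])$, take the infimum via (\ref{f_check}), and then observe that $\tilde\Phi_1\otimes\tilde\Phi_2$ is complementary to $\Phi_1\otimes\Phi_2$. Your write-up is somewhat more explicit about the Stinespring construction for the tensor product and the infinite-dimensional well-definedness of $\mathrm{Tr}\,f(\rho)$, but the argument is the same.
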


\begin{proof}
From Lemma \ref{L1}, $\Phi [ P_{\psi }]$ and $\tilde{\Phi}[P_{\psi }]$ have
identical nonzero spectrum ($\Phi [ P_{\psi }]\thicksim \tilde{\Phi}[P_{\psi
}]$) . Then
\begin{equation}
\mathrm{Tr}f(\Phi [ P_{\psi }])=\mathrm{Tr}f(\tilde{\Phi}[P_{\psi }])
\label{eq_comp}
\end{equation}
since $f(0)=0.$ Using second equality in (\ref{f_check}) implies $\check{f}
(\Phi )=\check{f}(\tilde{\Phi}).$

The statement about multiplicativity (additivity) then follows from the fact
that the channel $\tilde{\Phi}_{1}\otimes\tilde{\Phi}_{2}$ is complementary
to $\Phi _{1}\otimes\Phi _{2}$.
\end{proof}

\section{Quantum Gaussian systems}

\subsection{Gaussian states and channels}

\label{sec:gaus}

\bigskip A real vector space $Z$ equipped with a nondegenerate
skew-symmetric form $\Delta (z,z^{\prime })$ is called \textit{symplectic
space}. In what follows $Z$ is finite-dimensional, in which case its
dimensionality is necessarily even, \textrm{dim}$Z=2s$ \cite{kostr}. A basis
$\left\{ e_{j},h_{j};j=1,\dots ,s\right\} $ in which the form $\Delta
(z,z^{\prime })$ has the matrix
\begin{equation}
\Delta =\mathrm{diag}\left[
\begin{array}{cc}
0 & 1 \\
-1 & 0
\end{array}
\right] _{j=1,\dots ,s}  \label{delta}
\end{equation}
is called \textit{symplectic}. The \textit{Weyl system} in a Hilbert space $
\mathcal{H}$ is a strongly continuous family $\{ W(z); z\in Z\}$ of unitary
operators satisfying the \textit{\ Weyl-Segal canonical commutation relation
(CCR)}
\begin{equation}
W(z)W(z^{\prime })=\exp [-\frac{i}{2}\Delta (z,z^{\prime })]W(z+z^{\prime }).
\label{weyl}
\end{equation}
Thus $z\rightarrow W(z)$ is a projective representation of the additive
group of $Z$. We always assume that the representation is irreducible. The
Stone-von Neumann uniqueness theorem says that such a representation is
unique up to unitary equivalence. It is well-known, see e.g. \cite{reed},
that there is a family of selfadjoint operators \bigskip $z\rightarrow R(z)$
with a common essential domain $\mathcal{D}$ such that
\begin{equation*}
W(z)=\exp i\,R(z),
\end{equation*}
moreover, for any symplectic basis $\left\{ e_{j},h_{j};j=1,\dots ,s\right\}
$
\begin{equation*}
R(z)=\sum_{j=1}^{s}(x_{j}q_{j}+y_{j}p_{j})
\end{equation*}
on $\mathcal{D},$where $R(e_{j})=q_{j},\,R(h_{j})=p_{j},$ and $
[x_{1},y_{1},\dots ,x_{s},y_{s}]$ are coordinates of vector $z$ in the
basis. Here the \textit{canonical observables} $q_{j},p_{j};j=1,\dots ,s$
are selfadjoint operators in $\mathcal{H}$ satisfying the Heisenberg CCR on $
\mathcal{D}$
\begin{equation}
[ q_{j},p_{k}]\subseteq i\delta
_{jk}I,\;\;[q_{j},q_{k}]=0,\;\;[p_{j},p_{k}]=0.  \label{Hei}
\end{equation}
In physics the symplectic space is the phase space of the classical system
(such as electro-magnetic radiation modes in the cavity), the quantum
version of which is described by CCR. Then $s$ is number of degrees of
freedom, or ``normal modes'' of the classical system.

The state given by density operator $\rho $ in $\mathcal{H}$ is called
\textit{Gaussian}, if its \textit{quantum characteristic function}
\begin{equation*}
\phi (z)=\mathrm{Tr}\rho W(z)
\end{equation*}
has the form
\begin{equation}
\phi (z)=\exp \left( i\,m(z)-\frac{1}{2}\alpha \left( z,z\right)
\right) ,  \label{GaussianState}
\end{equation}
where $m$ is a real linear form and $\alpha $\ is a real bilinear symmetric
form on $Z $. A necessary and sufficient condition for (\ref{GaussianState})$
\ $to define a state is nonnegative definiteness of the (complex) Hermitian
form\footnote{
A complex-valued real-bilinear form $\beta \left( z,z^{\prime }\right) $ on $
Z$ will be called Hermitian if $\beta \left( z^{\prime },z\right) =\overline{
\beta \left( z,z^{\prime }\right) }.$} $\alpha \left( z,z^{\prime }\right) -
\frac{i}{2}\Delta \left( z,z^{\prime }\right) $ on $Z$ or, briefly:
\begin{equation}
\alpha \geq \frac{i}{2}\Delta .  \label{n-s condition}
\end{equation}
We will agree that the matrix of a bilinear form in fixed a symplectic base
is denoted by the same letter, then (\ref{n-s condition}) can be understood
as inequality for Hermitian matrices, where $\alpha $ is real symmetric and $
\Delta $ is real skew-symmetric.

A Gaussian state is pure if and only if $\alpha $ is a minimal solution of
this inequality, see e.g. \cite{hextr}. Operator $J$ in $Z$ is called
\textit{operator of complex structure} if
\begin{equation}
J^{2}=-I,  \label{j2e}
\end{equation}
where $I$ is the identity operator in $Z,$ and the bilinear form $\Delta
(z,Jz^{\prime })$ is an (Euclidean) inner product in $Z,$ i.e.
\begin{eqnarray}
\Delta (z,Jz^{\prime }) &=&\Delta (z^{\prime },Jz)\,(=-\Delta (Jz,z^{\prime
}));\quad  \label{j3e} \\
\Delta (z,Jz) &\geq &0,\quad z\in Z.
\end{eqnarray}
The following characterization can be found in \cite{ccr}, \cite{h}:

\begin{proposition}
The minimal solutions of the inequality (\ref{n-s condition}) are in
one-to-one correspondence with the operators $J$ of complex structure in $Z$
given by the relation
\begin{equation*}
\alpha \left( z,z^{\prime }\right) =\frac{1}{2}\Delta (z,Jz^{\prime });\quad
z,z^{\prime }\in Z.
\end{equation*}
\end{proposition}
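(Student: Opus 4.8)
The plan is to realize the correspondence through two manifestly inverse maps, $\alpha\mapsto J:=2\Delta^{-1}\alpha$ and $J\mapsto \tfrac12\Delta J$, and to show the first carries minimal solutions to complex structures and the second carries complex structures to minimal solutions; bijectivity is then automatic, so the whole content lies in checking that each map lands in the correct class. First I would fix a symplectic basis, so that $\Delta$ is the standard matrix (\ref{delta}), whence $\Delta^{\top}=-\Delta$ and $\Delta^{2}=-I$, i.e. $\Delta^{-1}=-\Delta$. In this basis the inequality (\ref{n-s condition}) is the statement that the complex Hermitian matrix $M=\alpha-\tfrac{i}{2}\Delta\geq 0$ on $\mathbb{C}^{2s}$. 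Two preliminary observations drive everything: (i) entrywise conjugation shows $\alpha+\tfrac{i}{2}\Delta\geq 0$ as well, so every solution satisfies $\alpha\geq\pm\tfrac{i}{2}\Delta$; (ii) if $\alpha(z,z)=0$ for real $z\neq 0$ then $Mz=0$, and separating real and imaginary parts gives $\alpha z=0,\ \Delta z=0$, contradicting nondegeneracy of $\Delta$ — hence every solution is \emph{positive definite}. Note also that minimality (in the order $\alpha'\leq\alpha\Leftrightarrow\alpha-\alpha'\geq 0$) is invariant under symplectic congruence $\alpha\mapsto S^{\top}\alpha S$, $S^{\top}\Delta S=\Delta$, since such a congruence preserves both the order and the solution set.

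The heart of the argument is the characterization of minimal solutions. Since solutions are positive definite, Williamson's symplectic normal form produces a symplectic $S$ with $S^{\top}\alpha S=\mathrm{diag}(d_{j},d_{j})_{j=1,\dots,s}$, the $d_{j}>0$ being the symplectic eigenvalues; by the congruence invariance I may assume $\alpha=\mathrm{diag}(d_{j},d_{j})$. Blocking $M$ according to the $2\times 2$ blocks of $\Delta$, each block $\bigl(\begin{smallmatrix} d_{j} & -i/2\\ i/2 & d_{j}\end{smallmatrix}\bigr)$ is $\geq 0$ iff $d_{j}\geq \tfrac12$, so the solutions are exactly those with all $d_{j}\geq\tfrac12$. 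If some $d_{j_0}>\tfrac12$, replacing that pair by $\tfrac12$ yields a strictly smaller solution, so $\alpha$ is not minimal. Conversely I claim $\alpha=\tfrac12 I$ (all $d_{j}=\tfrac12$) is minimal. Here is the step I expect to be the main obstacle: given a competing solution $\alpha'\leq\tfrac12 I$, set $N=\alpha'-\tfrac{i}{2}\Delta\geq 0$ and $P=\tfrac12 I-\tfrac{i}{2}\Delta=\tfrac12(I-i\Delta)\geq 0$, so that $0\leq N\leq P$. Since $\Delta$ has eigenvalues $\pm i$ (each of multiplicity $s$), $P$ has eigenvalues in $\{0,2\}$ with kernel $K=\ker(\Delta+iI)$, the $(-i)$-eigenspace. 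The inequality $0\leq N\leq P$ forces $\ker P\subseteq\ker N$, so for $v\in K$ one gets $\alpha'v=\tfrac{i}{2}\Delta v=\tfrac12 v$; conjugating (as $\alpha'$ is real) gives $\alpha'=\tfrac12 I$ on $\bar K=\ker(\Delta-iI)$ as well, and since $K\oplus\bar K=\mathbb{C}^{2s}$ we conclude $\alpha'=\tfrac12 I$. Thus minimal solutions are precisely those equal to $\tfrac12 I$ in some symplectic basis.

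It remains to match minimal solutions with complex structures. For the minimal direction, in a symplectic basis where $\alpha=\tfrac12 I$ the map $J=2\Delta^{-1}\alpha=\Delta^{-1}=-\Delta$ satisfies $J^{2}=\Delta^{-2}=-I$, which is (\ref{j2e}); the remaining conditions are basis-free, since $\Delta(z,Jz')=2\alpha(z,z')$ is symmetric (because $\alpha$ is) giving (\ref{j3e}), and $\Delta(z,Jz)=2\alpha(z,z)\geq 0$ by positivity of $\alpha$. For the converse, given a complex structure $J$ the form $g(z,z')=\Delta(z,Jz')=2\alpha(z,z')$ is a (nondegenerate, hence positive definite) inner product on which $J$ acts orthogonally, as $g(Jz,Jz')=g(z,z')$ follows from $J^{2}=-I$ and (\ref{j3e}). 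I would then build a $g$-orthonormal basis of the form $e_{1},Je_{1},\dots,e_{s},Je_{s}$ by iteratively splitting off the $J$-invariant $g$-orthogonal complements of $\mathrm{span}\{e,Je\}$; using $\Delta(z,w)=-g(z,Jw)$ one checks directly that $\{e_{j},Je_{j}\}$ is a symplectic basis in which $\alpha=\tfrac12 g=\tfrac12 I$, so $\alpha$ is a minimal solution by the previous paragraph. Finally, since $J\mapsto\tfrac12\Delta J$ and $\alpha\mapsto 2\Delta^{-1}\alpha$ are inverse to each other, the correspondence between minimal solutions and complex structures is one-to-one, as claimed.
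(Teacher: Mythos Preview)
The paper does not supply its own proof of this proposition; it merely cites the references \cite{ccr}, \cite{h}. Your argument is correct and self-contained, so there is nothing to compare it against in the paper itself.

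One harmless slip: the matrix $P=\tfrac{1}{2}I-\tfrac{i}{2}\Delta=\tfrac{1}{2}(I-i\Delta)$ has eigenvalues in $\{0,1\}$, not $\{0,2\}$ (on the $(+i)$-eigenspace of $\Delta$ it acts as the identity, on the $(-i)$-eigenspace it vanishes). This does not affect anything, since you only use the identification $\ker P=\ker(\Delta+iI)$, which is right. The rest of your minimality step is clean: $0\le N\le P$ forces $\ker P\subseteq\ker N$, hence $\alpha'v=\tfrac12 v$ on $K$, and reality of $\alpha'$ propagates this to $\bar K$, giving $\alpha'=\tfrac12 I$.

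Your overall strategy---Williamson normal form to pin down the minimal solutions as exactly those symplectically congruent to $\tfrac12 I$, then the explicit construction of a basis $\{e_j,Je_j\}$ that is simultaneously $g$-orthonormal and symplectic to show every complex structure yields such an $\alpha$---is the standard one and is essentially what one finds in the cited sources. The bijection is then immediate from the mutually inverse assignments $J\mapsto\tfrac12\Delta J$ and $\alpha\mapsto 2\Delta^{-1}\alpha$, as you say.
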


In this way to every complex structure corresponds the family of pure
Gaussian states (\ref{GaussianState}) with different values of $m$ which are
called the $J$-\textit{coherent states.} The state with $m=0$ is called $J$-
\textit{vacuum}. Let $\rho _{0}$ be a vacuum, then any associated coherent
state is of the form $W(z^{\prime })\rho _{0}W(z^{\prime })^{\ast },$ as
follows from the relation
\begin{equation*}
W(z^{\prime })^{\ast }W(z)W(z^{\prime })=\exp [-i\Delta (z,z^{\prime })]W(z)
\end{equation*}
and from nondegeneracy of the form $\Delta (z,z^{\prime })$ due to which $
m(z)=\Delta (z,z_m^{\prime })$.

Operator $S$ in $Z$ is called \textit{symplectic} if $\Delta (Sz,Sz^{\prime
})=\Delta (z,z^{\prime })$ for all $z,z^{\prime }\in Z.$ The unitary
operators $W(Sz)$ satisfy the CCR (\ref{weyl}) hence by the Stone-von
Neumann uniqueness theorem there is a unitary operator $U_{S}$ in $\mathcal{H
}$\ such that
\begin{equation*}
W(Sz)=U_{S}^{\ast }W(z)U_{S},\quad z\in Z.
\end{equation*}
The map $S\rightarrow U_{S}$ is a projective representation of the group of
all symplectic transformations in $Z,$ sometimes called `` metaplectic
representation'' \cite{simon} as it can be extended to a faithful unitary
representation of the metaplectic group which is two-fold covering of the
symplectic group.

Similarly, $T$ is \textit{antisymplectic} if $\Delta (Tz,Tz^{\prime
})=-\Delta (z,z^{\prime })$ for all $z,z^{\prime }\in Z.$ There is an
antiunitary operator $U_{T}$ in $\mathcal{H}$ such that
\begin{equation*}
W(Tz)=U_{T}^{\ast }W(z)U_{T},\quad z\in Z.
\end{equation*}

Let $Z_{A},Z_{B}$ be two symplectic spaces with the corresponding Weyl
systems. Consider a channel $\Phi :A\longrightarrow B$. The channel is
called \textit{Gaussian} if the dual channel satisfies
\begin{equation}
\Phi ^{\ast }[W_{B}(z)]=W_{A}(Kz)\exp \left[ il(z)-\frac{1}{2}\mu (z,z)
\right] ,\quad z\in Z_{B},  \label{bosgaus}
\end{equation}
where $K:Z_{B}\rightarrow Z_{A}$ is a linear operator, $l$ a linear form and
$\mu $\ is a real symmetric form on $Z_{B}.$ In terms of characteristic
functions of states,
\begin{equation*}
\phi _{B}(z)=\phi _{A}(Kz)\exp \left[ il(z)-\frac{1}{2}\mu (z,z)\right] .
\end{equation*}
It follows that Gaussian channel maps Gaussian states into Gaussian states.
A converse statement also holds true \cite{ccr}.

A necessary and sufficient condition on parameters $(K,l,\mu )$ for complete
positivity of the map $\Phi $ is (see \cite{cgeh}) nonnegative definiteness
of the Hermitian form
\begin{equation*}
z,z^{\prime }\longrightarrow \mu \left( z,z^{\prime }\right) -\frac{i}{2}
\left[ \Delta _{B}\left( z,z^{\prime }\right) -\Delta _{A}\left(
Kz,Kz^{\prime }\right) \right]
\end{equation*}
on $Z_{B},$ or, in matrix terms (if some bases are chosen in $Z_{A},Z_{B}$),
\begin{equation}
\mu \geq \frac{i}{2}\left[ \Delta _{B}-K^{t}\Delta _{A}K\right] ,
\label{n-s channel}
\end{equation}
where $^{t}$ denotes transposition of a matrix. The proof using explicit
construction of the representation of type (\ref{fia}) is given in \cite
{cgeh}, see also \cite{h}; below in Proposition \ref{prop9} below we give
such a construction for an important particular class of Gaussian channels.

We call the Gaussian channel extreme\footnote{
In quantum optics one speaks of \textit{quantum-limited} channels \cite{gp}.}
if $\mu $ is a minimal solution of the inequality (\ref{n-s channel}). This
terminology stems from the fact that the minimality of $\mu $ is necessary
and sufficient for the channel $\Phi $ to be an extreme point in the convex
set of all channels with fixed input and output spaces \cite{hextr}.

\textbf{Additivity hypothesis for quantum Gaussian channels}: The additivity
properties (\ref{n2p}), (\ref{4}) hold for any pair of Gaussian channels $
\Phi _{1},\Phi _{2}.$

\textbf{Hypothesis of quantum Gaussian minimizers:} For any function $f\in
\mathfrak{F}$ the infimum in (\ref{f_check}) is attained on a pure Gaussian
state $\rho .$

Any Gaussian channel has the covariance property
\begin{equation}
\Phi [ W_{A}(z)\rho W_{A}(z)^{\ast }]=W_{B}(K^{s}z)\Phi [ \rho
]W_{B}(K^{s}z)^{\ast }  \label{gauscov}
\end{equation}
where $K^{s}$ is the symplectic adjoint operator defined by the relation
\begin{equation*}
\Delta _{B}\left( K^{s}z_{A},z_{B}\right) =\Delta _{A}\left(
z_{A},Kz_{B}\right) .
\end{equation*}
It follows that the value $\mathrm{Tr}f(\Phi [ \rho ])$ is the same for all
coherent states $W(z)\rho _{0}W(z)^{\ast }$ associated with a vacuum state $
\rho _{0}.$

These two problems turn out to be closely related. In what follows we
describe positive solution for both of them in a particular and important
class of Gaussian channels with gauge symmetry. However both conjectures
remain open for general quantum Gaussian channels.

\subsection{Complex structures and gauge symmetry}

\label{gaugau}

Given an operator of the complex structure $J$ one defines in $Z$ the
Euclidean inner product $j(z,z^{\prime })=\Delta (z,Jz^{\prime }).$ Then one
can define in $Z$ the structure of $s-$dimensional unitary space $\mathbf{Z}$
in which $i\mathbf{z}$ corresponds to $Jz$ and the (Hermitian) inner product\footnote{In accordance with convention accepted in mathematical physics, the inner product is
complex linear with respect to $\mathbf{z}^{\prime }$ and anti-linear with respect to $\mathbf{z}$.}
is
\begin{equation*}
\mathbf{j}(\mathbf{z},\mathbf{z}^{\prime })=\frac{1}{2}[\Delta (z,Jz^{\prime })+i\Delta
(z,z^{\prime })]=\frac{1}{2}[j(z,z^{\prime })-ij(z,Jz^{\prime })].
\end{equation*}

From (\ref{j2e}), (\ref{j3e}) it follows that $J$ is symplectic, that is $
\Delta (Jz,Jz^{\prime })=\Delta (z,z^{\prime })$ for all $z,z^{\prime }\in Z$
. With every complex structure one can associate the cyclic one-parameter
group of symplectic transformations $\left\{ \mathrm{e}^{\varphi J};\varphi
\in [ 0,2\pi )\right\} $  which we call the \textit{gauge group}. Hence, by
the Stone-von Neumann uniqueness theorem, the gauge group in $Z$ induces the
one-parameter unitary group of the \textit{gauge transformations} $\left\{
U_{\varphi };\varphi \in [ 0,2\pi )\right\} $ in $\mathcal{H}$ according to
the formula
\begin{equation}
W(\mathrm{e}^{\varphi J}z)=U_{\varphi }^{\ast }W(z)U_{\varphi }.
\label{gaugetr}
\end{equation}
For the future use it will be convenient to introduce the complex
parametrization of the Weyl operators by defining the \textit{displacement
operators}
\begin{equation}
D(\mathbf{z})=W(Jz),\quad \mathbf{z\in Z.}  \label{disp}
\end{equation}

A state $\rho $ is gauge invariant if $\rho =U_{\varphi }\rho U_{\varphi
}^{\ast }$ for all $\varphi $, which is equivalent to the property $\phi
(z)=\phi (\mathrm{e}^{\varphi J}z)$ of the characteristic function. In
particular, Gaussian state (\ref{GaussianState}) is gauge invariant if $
m(z)\equiv 0$ and $\alpha \left( z,z^{\prime }\right) =\alpha \left(
Jz,Jz^{\prime }\right) .$ By introducing the Hermitian inner product in $
\mathbf{Z}$
\begin{equation*}
\boldsymbol{\alpha }\left( \mathbf{z},\mathbf{z}^{\prime }\right) =\frac{1}{2}[\alpha
\left( z,z^{\prime }\right) - i\alpha \left( z,Jz^{\prime }\right)],
\end{equation*}
we have $\boldsymbol{\alpha }\left( \mathbf{z},\mathbf{z}\right) =\frac{1}{2}\alpha
\left( z,z\right) $ since $\alpha \left( z,Jz^{\prime }\right) $ is
skew-symmetric; moreover, the condition (\ref{n-s condition}) is equivalent
to nonnegative definiteness of the Hermitian form $\boldsymbol{\alpha }
\left( \mathbf{z},\mathbf{z}^{\prime }\right) -\frac{1}{2}\mathbf{j}(\mathbf{
z},\mathbf{z}^{\prime })$ on $\mathbf{Z}:$
\begin{equation}
\boldsymbol{\alpha }\geq \frac{1}{2}\mathbf{j}.  \label{n-s_comp}
\end{equation}
This follows from application of the following Lemma to the form
\begin{equation*}
z,z^{\prime }\longrightarrow \beta (z,z^{\prime })=\alpha (z,z^{\prime })-
\frac{i}{2}\Delta (z,z^{\prime }).
\end{equation*}
The relation (\ref{n-s_comp}) can be considered as the inequality for the
matrices of the form, provided a basis is chosen in $\mathbf{Z}$. In an
orthonormal basis, $\mathbf{j}=\mathbf{I}$ is the unit matrix.

\begin{lemma}
\label{L0} Let $\beta (z,z^{\prime })$ be a bilinear complex-valued
Hermitian form on real vector space $Z,$ satisfying $\beta (Jz,Jz^{\prime
})=\beta (z,z^{\prime }),$ where $J$ is a linear operator such that $
J^{2}=-I.$ Then $\beta (z,z^{\prime })$ is nonnegative definite i.e.
\begin{equation}
\sum_{jk}\bar{c}_{j}c_{k}\beta \left( z_{j},z_{k}\right) \geq 0  \label{U1}
\end{equation}
for any finite collection $\left\{ z_{j}\right\} \subset Z$ and any $\left\{
c_{j}\right\} \subset \mathbb{C}$, if and only if
\begin{equation}
\mathrm{Re}\beta (z,z)\pm \mathrm{Im}\beta (z,Jz)\geq 0\quad \text{for all \
}z\in Z.  \label{U2}
\end{equation}
\end{lemma}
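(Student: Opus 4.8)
The plan is to use the complex structure $J$ to split $\beta$ into two Hermitian forms on the complex space $\mathbf{Z}$ (in which $J$ is multiplication by $i$) and then reduce nonnegative definiteness to the pointwise conditions \eqref{U2}, via the standard fact that a Hermitian form on a complex vector space is nonnegative definite if and only if its diagonal quadratic form is. Before either direction I would record two identities that follow purely from the hypotheses on $\beta$. Replacing $z'$ by $Jz'$ in the invariance relation $\beta(Jz,Jz')=\beta(z,z')$ and using $J^2=-I$ gives $\beta(Jz,z')=-\beta(z,Jz')$. Putting $z'=z$ here and comparing with the Hermitian symmetry $\beta(z',z)=\overline{\beta(z,z')}$ yields $\overline{\beta(z,Jz)}=-\beta(z,Jz)$, so $\beta(z,Jz)$ is purely imaginary while $\beta(z,z)$ is real; thus $\beta(z,z)=\mathrm{Re}\,\beta(z,z)$ and $-i\,\beta(z,Jz)=\mathrm{Im}\,\beta(z,Jz)$.

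For the easy direction \eqref{U1}$\Rightarrow$\eqref{U2} I would feed the two-point configuration $z_1=z$, $z_2=Jz$ with coefficients $c_1=1$, $c_2=\pm i$ into \eqref{U1}. Using $\beta(Jz,Jz)=\beta(z,z)$ the diagonal terms give $2\,\mathrm{Re}\,\beta(z,z)$, while the off-diagonal terms collapse, via $\beta(Jz,z)=-\beta(z,Jz)$ together with the pure-imaginarity of $\beta(z,Jz)$, to $\mp 2\,\mathrm{Im}\,\beta(z,Jz)$. Nonnegativity for both sign choices is precisely \eqref{U2}.

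For the substantive direction \eqref{U2}$\Rightarrow$\eqref{U1} I introduce $H_{\pm}(z,z')=\beta(z,z')\mp i\,\beta(z,Jz')$, so that $\beta=\tfrac12(H_{+}+H_{-})$. Using the two identities above one checks that $H_{+}$ is sesquilinear and Hermitian with respect to $J$, that is $H_{+}(Jz,z')=-i\,H_{+}(z,z')$, $H_{+}(z,Jz')=i\,H_{+}(z,z')$ and $H_{+}(z',z)=\overline{H_{+}(z,z')}$, while $H_{-}$ is the corresponding object for the opposite complex structure $-J$. Their diagonal values are $H_{\pm}(z,z)=\mathrm{Re}\,\beta(z,z)\pm\mathrm{Im}\,\beta(z,Jz)$, so \eqref{U2} says exactly that the quadratic forms of $H_{+}$ and $H_{-}$ are nonnegative, hence that $H_{+}$ and $H_{-}$ are nonnegative definite Hermitian forms. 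Finally, for any finite $\{z_j\}\subset Z$ and $\{c_j\}\subset\mathbb{C}$ the matrix $[H_{+}(z_j,z_k)]$ is the Gram matrix of a nonnegative Hermitian form: writing $w=\sum_k c_k z_k$ with the scalars $c_k$ acting on $\mathbf{Z}$ through $J$, sesquilinearity gives $\sum_{jk}\bar c_j c_k H_{+}(z_j,z_k)=H_{+}(w,w)\ge 0$, and likewise for $H_{-}$ with the scalars acting through $-J$. Adding the two, $\sum_{jk}\bar c_j c_k\,\beta(z_j,z_k)\ge 0$, which is \eqref{U1}.

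The main obstacle is the bookkeeping in this last direction: one must verify that $H_{\pm}$ are genuinely sesquilinear and Hermitian with respect to $\pm J$ (this is where both identities are needed), and, more subtly, one must recognize that the ordinary complex coefficients $c_j$ appearing in \eqref{U1} coincide with the action of $\mathbb{C}$ on $\mathbf{Z}$ via $J$, so that the Gram-matrix step legitimately applies. Everything else is routine algebra. An equivalent route, should one prefer to avoid $H_{\pm}$, is to complexify $Z$: the $J$-invariance forces the bilinear extension $\beta_{\mathbb{C}}$ to vanish on each $\pm i$-eigenspace of $J_{\mathbb{C}}$, so that \eqref{U1}, read as $\beta_{\mathbb{C}}(\bar v,v)\ge 0$, decouples into its restrictions to the two eigenspaces, in which the same two diagonal expressions $\mathrm{Re}\,\beta(z,z)\pm\mathrm{Im}\,\beta(z,Jz)$ reappear.
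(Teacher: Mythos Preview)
Your proof is correct and follows essentially the same approach as the paper's: you introduce the same pair of forms $H_{\pm}(z,z')=\beta(z,z')\mp i\,\beta(z,Jz')$ (the paper's $\boldsymbol{\beta}^{\mp}$), identify their diagonal values with the two expressions in \eqref{U2}, and use (anti-)sesquilinearity to pass to the Gram-matrix inequality; your two-point configuration for \eqref{U1}$\Rightarrow$\eqref{U2} is the $n=1$ case of the paper's choice $\{z_j,Jz_j\}$ with $\{c_j,\pm i c_j\}$. The only differences are expository: you make the preliminary identities $\beta(Jz,z')=-\beta(z,Jz')$ and the reality/imaginarity of $\beta(z,z)$, $\beta(z,Jz)$ explicit at the outset, and you phrase the anti-sesquilinearity of $H_-$ as sesquilinearity with respect to the opposite complex structure $-J$, which is equivalent.
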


\begin{proof}
(\ref{U2})$\Longrightarrow $(\ref{U1}): We have $\beta (z,z^{\prime })=
\mathrm{Re}\beta (z,z^{\prime })+i\mathrm{Im}\beta (z,z^{\prime }),$ where $
\mathrm{Im}\beta (z,z^{\prime })$ is skew-symmetric, hence $\mathrm{Im}\beta
(z,z)=0.$ By using the fact that $\beta (Jz,z^{\prime })=-\beta
(z,Jz^{\prime })$ we obtain that also $\mathrm{Re}\beta (z,Jz^{\prime })$ is
skew-symmetric, hence $\mathrm{Re}\beta (z,Jz)=0.$ Thus
\begin{equation*}
\mathrm{Re}\beta (z,z)\pm \mathrm{Im}\beta (z,Jz)=\beta (z,z)\mp i\beta
(z,Jz).
\end{equation*}
Now introduce complexification $z\leftrightarrow \mathbf{z}$ by letting $
Jz\leftrightarrow i\mathbf{z}$ and define two Hermitian forms on the
complexification $\mathbf{Z}$ of $Z:$
\begin{equation}
\boldsymbol{\beta }^{\mp }(\mathbf{z,z}^{\prime })=\beta (z,z^{\prime })\mp
i\beta (z,Jz^{\prime }).  \label{U3}
\end{equation}

Then $\boldsymbol{\beta }^{-}$ is sesquilinear i.e. complex linear with
respect to $\mathbf{z}^{\prime }$ and anti-linear with respect to $\mathbf{z,
}$ while $\boldsymbol{\beta }^{+}$ is anti-sesquilinear. From (\ref{U3}), (
\ref{U2}),
\begin{equation}
\boldsymbol{\beta }^{\mp }\mathbf{(z,z)}=\mathrm{Re}\beta (z,z)\pm \mathrm{Im
}\beta (z,Jz)\geq 0\quad \text{for all }\mathbf{z\in Z,}  \label{U4}
\end{equation}
hence by (anti-)sesquilinearity
\begin{equation*}
\sum_{jk}\bar{c}_{j}c_{k}\boldsymbol{\beta }^{\mp }\left( \mathbf{z}_{j},
\mathbf{z}_{k}\right) \geq 0.
\end{equation*}
By adding the two inequalities corresponding to plus and minus, we get (\ref
{U1}).

Conversely, (\ref{U1})$\Longrightarrow $(\ref{U2}): Applying (\ref{U1}) to
the collection $\left\{ z_{j},Jz_{j}\right\} \subset Z,\left\{ c_{j},\pm
ic_{j}\right\} \subset \mathbb{C}$ we obtain
\begin{equation*}
\sum_{jk}\bar{c}_{j}c_{k}\left[ \beta \left( z_{j},z_{k}\right) \pm i\beta
\left( z_{j},Jz_{k}\right) \right] \geq 0,
\end{equation*}
hence the forms (\ref{U3}) are nonnegative definite. By
(anti-)sesquilinearity of these forms, this is equivalent to (\ref{U4}) i.e.
(\ref{U2}).
\end{proof}

Assume that in $Z_{A},Z_{B}$ operators of complex structure $J_{A},J_{B}$
are fixed, and let $U_{\phi }^{A},U_{\phi }^{B}$ be the corresponding gauge
operators in $\mathcal{H}_{A},\mathcal{H}_{B}$ acting according (\ref
{gaugetr}). Channel $\Phi : A\rightarrow B$ is called \textit{gauge-covariant
}, if
\begin{equation}
\Phi [ U_{\phi }^{A}\rho \left( U_{\phi }^{A}\right) ^{\ast }]=U_{\phi
}^{B}\Phi [ \rho ]\left( U_{\phi }^{B}\right) ^{\ast }  \label{calinvch}
\end{equation}
for all input states $\rho $ and all $\phi \in [ 0,2\pi ].$ For the Gaussian
channel (\ref{bosgaus}) with parameters $(K,l,\mu )$ this reduces to
\begin{equation*}
l(z)\equiv 0,\quad KJ_{B}-J_{A}K=0,\quad \mu (z,z^{\prime })=\mu
(J_{B}z,J_{B}z^{\prime }).
\end{equation*}
The relation (\ref{bosgaus}) for gauge-covariant Gaussian channel takes the
form
\begin{equation}
\Phi ^{\ast }[D_{B}(\mathbf{z})]=D_{A}(\mathbf{Kz})\exp \left[ -
\boldsymbol{\mu }(\mathbf{z},\mathbf{z})\right] ,\quad \mathbf{z}\in \mathbf{
Z}_{B},  \label{g_co}
\end{equation}
where
\begin{equation}
\boldsymbol{\mu }\geq \pm \frac{1}{2}\left[ \mathbf{j}_{B}-\mathbf{K}^{\ast }
\mathbf{j}_{A}\mathbf{K}\right]  \label{n-s_channel_gauge}
\end{equation}

The equivalence of (\ref{n-s_channel_gauge}) and (\ref{n-s channel}) is
obtained by applying the lemma \ref{L0} to the Hermitian form
\begin{equation*}
\beta (z,z^{\prime })=\mu (z,z^{\prime })-\frac{i}{2}\left[ \Delta
_{B}(z,z^{\prime })-\Delta _{A}(Kz,Kz^{\prime })\right] .
\end{equation*}

Channel $\Phi : A\rightarrow B$ is called \textit{gauge-contravariant}, if
\begin{equation}
\Phi [ U_{\phi }^{A}\rho \left( U_{\phi }^{A}\right) ^{\ast }]=\left(
U_{\phi }^{B}\right) ^{\ast }\Phi [ \rho ]U_{\phi }^{B}
\end{equation}
for all input states $\rho $ and all $\phi \in [ 0,2\pi ].$ For the Gaussian
channel (\ref{bosgaus}) with parameters $(K,l,\mu )$ this reduces to
\begin{equation*}
l(z)\equiv 0,\quad KJ_{B}+J_{A}K=0,\quad \mu (z,z^{\prime })=\mu
(J_{B}z,J_{B}z^{\prime }).
\end{equation*}
The relation (\ref{bosgaus}) for gauge-contravariant Gaussian channel takes
the form
\begin{equation}
\Phi ^{\ast }[D_{B}(\mathbf{z})]=D_{A}(-\Lambda \mathbf{Kz})\exp \left[ -\boldsymbol{\mu }(\mathbf{z},\mathbf{z})\right] ,\quad \mathbf{z}
\in \mathbf{Z}_{B},  \label{g_contra}
\end{equation}
where $\Lambda $ is antilinear operator of complex conjugation, $\Lambda
^{2}=I,\,\Lambda ^{s}=-\Lambda $ in $\mathbf{Z}_{A}$ such that $\Lambda
J_{A}+J_{A}\Lambda =0,$ and $\mathbf{K=-}\Lambda K$ is complex linear
operator from $\mathbf{Z}_{B}$ to $\mathbf{Z}_{A}.$ Here
\begin{equation}
\boldsymbol{\mu }\geq \pm \frac{1}{2}\left[ \mathbf{j}_{B}+\mathbf{K}^{\ast }
\mathbf{j}_{A}\mathbf{K}\right] .  \label{n-s_channel_contra}
\end{equation}
The last condition is obtained by applying Lemma \ref{L0} to the Hermitian
form
\begin{eqnarray*}
\beta (z,z^{\prime }) &=&\mu (z,z^{\prime })-\frac{i}{2}\left[ \Delta
_{B}(z,z^{\prime })-\Delta _{A}(Kz,Kz^{\prime })\right] \\
&=&\frac{i}{2}\left[ \Delta _{B}(z,z^{\prime })+\Delta _{A}(\mathbf{K}z,
\mathbf{K}z^{\prime })\right] .
\end{eqnarray*}

\subsection{Attenuators and amplifiers}

\label{mlt}

In what follows we restrict to channels that are gauge-covariant or
contravariant with respect to fixed complex structures. Therefore, to be
specific, we consider vectors in $\mathbf{Z}$ as $s-$dimensional complex
column vectors, where the operator $J$ acts as multiplication by $i$, the
corresponding Hermitian inner product is $\mathbf{j}(\mathbf{z},\mathbf{z}
^{\prime })=\mathbf{z}^{\ast }\mathbf{z}^{\prime }$ and the symplectic form
is $\Delta (z,z^{\prime })=2\mathrm{Im}\mathbf{z}^{\ast }\mathbf{z}^{\prime },
$ where $^{\ast }$ denotes Hermitian conjugation. The linear operators in $
\mathbf{Z}$ commuting with $J$ are represented by complex $s\times s-$
matrices. The gauge group acts in $\mathbf{Z}$ as multiplication by $
e^{i\phi }$. Gaussian gauge-invariant states are described by the modified
characteristic function
\begin{equation}
\phi (\mathbf{z})=\mathrm{Tr}\rho D(\mathbf{z})=\exp \left( -\mathbf{z}
^{\ast }\boldsymbol{\alpha z}\right) ,  \label{gausstate}
\end{equation}
where $\boldsymbol{\alpha }$ is a Hermitian correlation matrix satisfying $
\boldsymbol{\alpha }\geq \mathbf{I}/2$ as follows from (\ref{n-s_comp}). For
the given complex structure, the unique minimal solution of the last
inequality is $\frac{1}{2}\mathbf{I,}$ to which correspond the vacuum state $
\rho _{0}$ and the family of coherent states $\left\{ \rho _{\mathbf{z}};\,
\mathbf{z\in Z}\right\} ,$ such that $\rho _{\mathbf{z}}=D(\mathbf{z})\rho
_{0}D(\mathbf{z})^{\ast }.$ One has
\begin{equation*}
\mathrm{Tr}\rho _{\mathbf{w}}D(\mathbf{z})=\exp \left( 2i\,\mathrm{Im}
\mathbf{w}^{\ast }\mathbf{z}-\frac{1}{2}|\mathbf{z}|^{2}\right) ,
\end{equation*}
where $|\mathbf{z}|^{2}=\mathbf{z}^{\ast }\mathbf{z}.$

Let $\mathbf{Z}_{A},\mathbf{Z}_{B}$ be the input and output spaces of
dimensionalities $s_{A},s_{B}.$ We denote by $s_{A}=\dim \mathbf{Z}_{A}$, $
s_{B}=\dim \mathbf{Z}_{B}$ the numbers of modes of the input and output of
the channel. The action of a Gaussian gauge-covariant channel (\ref{g_co})
can be described as
\begin{equation}
\Phi ^{\ast }[D_{B}(\mathbf{z})]=D_{A}(\mathbf{Kz})\exp \left( -\mathbf{z}
^{\ast }\boldsymbol{\mu z}\right) ,\quad \mathbf{z}\in\mathbf{Z}_{B},
\label{defprima}
\end{equation}
where $\mathbf{K}$ is complex $s_{B}\times s_{A}-$matrix, $\boldsymbol{\mu }$
is Hermitian $s_{B}\times s_{B}-$matrix satisfying the condition (see \cite
{htw})
\begin{equation}
\boldsymbol{\mu} \geq \pm \frac{1}{2}\left( \mathbf{I}_{B}-\mathbf{K^{\ast }K
}\right) ,  \label{n-s_channel_cov}
\end{equation}
where $\mathbf{I}_{B}$ is the unit $s_{B}\times s_{B}-$matrix. This follows
from (\ref{n-s_channel_gauge}) by taking into account that the matrix of the
form $\mathbf{j}(\mathbf{z},\mathbf{z}^{\prime })$ in an orthonormal basis
is just the unit matrix $\mathbf{I}$ of the corresponding size. Later we
will need the following

\begin{lemma}
\label{L_invert} The map (\ref{defprima}) is injective if and only if
\footnote{
For Hermitian matrices $M,N,$ the strict inequality $M>N$ means that $M-N$
is positive definite.} $\mathbf{KK^{\ast }>0}$ (in which case necessarily $
s_{B}\geq s_{A}$).
\end{lemma}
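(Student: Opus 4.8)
The plan is to pass from operators to their quantum characteristic functions, where injectivity of the channel $\Phi$ determined by (\ref{defprima}) turns into a statement about the matrix $\mathbf{K}$ alone. The key enabling fact is that a trace-class operator $X$ is uniquely determined by its characteristic function $\chi _X(\mathbf{z})=\mathrm{Tr}\,X\,D(\mathbf{z})$; this injectivity of the Weyl (noncommutative Fourier) transform on $\mathfrak{T}(\mathcal{H})$ is the quantum analogue of the uniqueness theorem for Fourier transforms and rests on irreducibility of the Weyl system. Using the duality (\ref{dual}) together with (\ref{defprima}), for every $X\in \mathfrak{T}(\mathcal{H}_A)$ one computes
\begin{equation*}
\mathrm{Tr}\,\Phi [X]\,D_B(\mathbf{z})=\mathrm{Tr}\,X\,\Phi ^{\ast }[D_B(\mathbf{z})]=\chi _X(\mathbf{Kz})\exp \left( -\mathbf{z}^{\ast }\boldsymbol{\mu z}\right) ,\quad \mathbf{z}\in \mathbf{Z}_B.
\end{equation*}
Since the Gaussian factor never vanishes and $Y\mapsto \chi _Y$ is injective, $\Phi [X]=0$ is equivalent to $\chi _X$ vanishing identically on $\mathrm{Ran}\,\mathbf{K}\subseteq \mathbf{Z}_A$. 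Hence $\Phi$ is injective precisely when the only trace-class $X$ with $\chi _X|_{\mathrm{Ran}\,\mathbf{K}}=0$ is $X=0$, which, again by injectivity of the Weyl transform, holds exactly when $\mathrm{Ran}\,\mathbf{K}=\mathbf{Z}_A$, i.e. when $\mathbf{K}$ is surjective.

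It then remains to translate surjectivity of $\mathbf{K}$ into the stated matrix condition. Since $\ker (\mathbf{KK}^{\ast })=\ker \mathbf{K}^{\ast }$ (from $\mathbf{v}^{\ast }\mathbf{KK}^{\ast }\mathbf{v}=|\mathbf{K}^{\ast }\mathbf{v}|^{2}$), the Hermitian matrix $\mathbf{KK}^{\ast }$ is positive definite exactly when $\mathbf{K}^{\ast }$ is injective, equivalently when $\mathbf{K}$ is surjective; a dimension count then forces $s_A=\dim \mathrm{Ran}\,\mathbf{K}\leq s_B$. This settles the ``if'' direction: when $\mathbf{KK}^{\ast }>0$ we have $\mathrm{Ran}\,\mathbf{K}=\mathbf{Z}_A$, so $\chi _X\equiv 0$ and $X=0$.

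For the converse I would exhibit an explicit pair of distinct states with equal output rather than argue abstractly. If $\mathbf{KK}^{\ast }$ is not positive definite, choose $\mathbf{a}\neq 0$ with $\mathbf{K}^{\ast }\mathbf{a}=0$, so that $\mathbf{a}^{\ast }\mathbf{Kz}=(\mathbf{K}^{\ast }\mathbf{a})^{\ast }\mathbf{z}=0$ for all $\mathbf{z}\in \mathbf{Z}_B$. Comparing the output characteristic functions of the vacuum $\rho _0$ and of the coherent state $\rho _{\mathbf{a}}=D(\mathbf{a})\rho _0 D(\mathbf{a})^{\ast }$, which differ only through the factor $\exp (2i\,\mathrm{Im}\,\mathbf{a}^{\ast }\mathbf{Kz})$, equal to $1$ here, one finds $\Phi [\rho _{\mathbf{a}}]=\Phi [\rho _0]$ while $\rho _{\mathbf{a}}\neq \rho _0$; hence $\Phi$ is not injective.

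The routine parts are the two linear-algebra equivalences and the dimension count. The conceptual crux on which everything rests is the first paragraph: the reduction of operator-level injectivity to surjectivity of $\mathbf{K}$ via injectivity of the characteristic-function transform. The one point demanding care is that this transform is injective on all of $\mathfrak{T}(\mathcal{H})$, not merely on states, so that $\Phi [X]=0$ genuinely forces $\chi _X$ to vanish on $\mathrm{Ran}\,\mathbf{K}$; the coherent-state construction then makes the ``only if'' direction constructive and shows that injectivity already fails on the convex set of states.
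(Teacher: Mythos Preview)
Your proof is correct and follows essentially the same route as the paper: both reduce injectivity of $\Phi$ to the condition $\mathrm{Ran}\,\mathbf{K}=\mathbf{Z}_A$ via the characteristic-function (Weyl) transform and irreducibility, then translate this to $\mathbf{KK}^{\ast}>0$. Your version is simply more explicit---in particular, the constructive counterexample with coherent states $\rho_0$ and $\rho_{\mathbf{a}}$ for the ``only if'' direction is a nice addition, whereas the paper compresses both directions into the single phrase ``by irreducibility of the Weyl system, this property is equivalent to $\mathrm{Ran}\,\mathbf{K}=\mathbf{Z}_A$.''
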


\begin{proof}
Injectivity means that $\Phi [ \rho _{1}]=\Phi [ \rho _{2}]$ implies $\rho
_{1}=\rho _{2}.$ But $\Phi [ \rho _{1}]=\Phi [ \rho _{2}]$ is equivalent to $
\mathrm{Tr}\rho _{1}\Phi ^{\ast }[D_{B}(\mathbf{z})]=\mathrm{Tr}\rho
_{2}\Phi ^{\ast }[D_{B}(\mathbf{z})],$ i.e. $\mathrm{Tr}\rho _{1}D_{A}(
\mathbf{Kz})=\mathrm{Tr}\rho _{2}D_{A}(\mathbf{Kz})$ for all $\mathbf{z\in Z}
_{B}.$ By irreducibility of the Weyl system, this property is equivalent to
\textrm{Ran}$\mathbf{K}=\mathbf{Z}_{A},$ i.e. \textrm{Ker}$
\mathbf{K^{\ast }=\{0\}}$ or $\mathbf{KK}^{\ast }>0.$
\end{proof}

The channel (\ref{defprima}) is extreme if $\mu $ is a minimal solution of
the inequality (\ref{n-s_channel_cov}). Special cases of the maps~(\ref
{defprima}) are provided by the \textit{attenuator} and \textit{amplifier}
channels, characterized by matrix $\mathbf{K}$ fulfilling the inequalities, $
\mathbf{K^{\ast }K\leq I}$ and $\mathbf{K^{\ast }K\geq I}$ respectively. We
are particularly interested in \textit{extreme attenuator} which corresponds
to
\begin{equation}
\mathbf{K^{\ast }K}\leq \mathbf{I}_{B},\qquad \qquad \boldsymbol{\mu} =\frac{
1}{2}\left( \mathbf{I}_{B}-\mathbf{K} ^{\ast }\mathbf{K}\right) ,
\label{mindef1}
\end{equation}
and \textit{extreme amplifier}
\begin{equation}
\mathbf{K^{\ast }K}\geq \mathbf{I}_{B},\qquad \qquad \boldsymbol{\mu }=\frac{
1}{2}\left( \mathbf{\ K^{\ast }K}-\mathbf{I}_{B}\right) .  \label{mindef2}
\end{equation}

Denoting by $\mathbf{\bar{z}}$ the column vector obtained by taking the
complex conjugate of the elements of $\mathbf{z,}$ the action of the
Gaussian gauge-contravariant channel (\ref{g_contra}) is described as
\begin{equation}
\Phi ^{\ast }[D_{B}(\mathbf{z})]=D_{A}(-\overline{\mathbf{Kz}})\exp \left( -
\mathbf{z}^{\ast }\boldsymbol{\mu z}\right) ,  \label{CONTRAV}
\end{equation}
where $\boldsymbol{\mu }$ is Hermitian matrix satisfying the inequality
\begin{equation}
\boldsymbol{\mu} \geq \frac{1}{2}\left( \mathbf{I}_{B}+\mathbf{K}^{\ast }
\mathbf{K}\right) ,  \label{ineq2}
\end{equation}
which follows from (\ref{n-s_channel_contra}). Here $\bar{\mathbf{z}}$ is
the column vector consisting of complex conjugates of the components of $
\mathbf{z}$. These maps are extreme if
\begin{equation}
\boldsymbol{\mu }=\frac{1}{2}\left( \mathbf{I}_{B}+\mathbf{K^{\ast }K}
\right) .  \label{mindef5}
\end{equation}

The following proposition generalizes to many modes the decomposition of
one-mode channels the usefulness of which was emphasized and exploited in
the paper \cite{gp} (see also \cite{cgh} on concatenations of one-mode
channels):

\begin{proposition}
\label{prop1} Any Gaussian gauge-covariant channel $\Phi :A\rightarrow B$ is
a concatenation $\Phi =\Phi _{2}\circ $ $\Phi _{1}$ \ of extreme attenuator $
\Phi _{1}:A\rightarrow B$ and extreme amplifier $\Phi _{2}:B\rightarrow B$.

Any Gaussian gauge-contravariant channel $\Phi :A\rightarrow B$ is a
concatenation of extreme attenuator $\Phi _{1}:A\rightarrow B$ and extreme
gauge-contravariant channel $\Phi _{2}:B\rightarrow B$.
\end{proposition}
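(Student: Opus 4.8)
The plan is to reduce the statement to an algebraic identity on the channel parameters $(\mathbf{K},\boldsymbol{\mu})$ and to solve it explicitly. First I would record the composition law for dual channels: from $\Phi=\Phi_{2}\circ\Phi_{1}$ one gets $\Phi^{\ast}=\Phi_{1}^{\ast}\circ\Phi_{2}^{\ast}$, so substituting the defining relation (\ref{defprima}) twice shows that the concatenation of two gauge-covariant channels with matrices $(\mathbf{K}_{1},\boldsymbol{\mu}_{1})$ and $(\mathbf{K}_{2},\boldsymbol{\mu}_{2})$ is again gauge-covariant with
\[
\mathbf{K}=\mathbf{K}_{1}\mathbf{K}_{2},\qquad \boldsymbol{\mu}=\mathbf{K}_{2}^{\ast}\boldsymbol{\mu}_{1}\mathbf{K}_{2}+\boldsymbol{\mu}_{2}.
\]
Thus it suffices to produce an extreme attenuator $(\mathbf{K}_{1},\boldsymbol{\mu}_{1})$ and an extreme amplifier $(\mathbf{K}_{2},\boldsymbol{\mu}_{2})$ reproducing the given $(\mathbf{K},\boldsymbol{\mu})$.

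For the gauge-covariant case I would insert the extreme forms (\ref{mindef1}), (\ref{mindef2}), that is $\boldsymbol{\mu}_{1}=\tfrac12(\mathbf{I}_{B}-\mathbf{K}_{1}^{\ast}\mathbf{K}_{1})$ and $\boldsymbol{\mu}_{2}=\tfrac12(\mathbf{K}_{2}^{\ast}\mathbf{K}_{2}-\mathbf{I}_{B})$, into the noise law; using $\mathbf{K}_{1}\mathbf{K}_{2}=\mathbf{K}$ it collapses to $\boldsymbol{\mu}=\mathbf{K}_{2}^{\ast}\mathbf{K}_{2}-\tfrac12(\mathbf{I}_{B}+\mathbf{K}^{\ast}\mathbf{K})$. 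Hence I set $\mathbf{N}:=\boldsymbol{\mu}+\tfrac12(\mathbf{I}_{B}+\mathbf{K}^{\ast}\mathbf{K})$ and take $\mathbf{K}_{2}:=\mathbf{N}^{1/2}$, the positive square root. The complete-positivity bound (\ref{n-s_channel_cov}) with the $+$ sign is exactly $\mathbf{N}\geq\mathbf{I}_{B}$, so $\mathbf{N}$ is positive definite; thus $\mathbf{K}_{2}$ is invertible and satisfies $\mathbf{K}_{2}^{\ast}\mathbf{K}_{2}\geq\mathbf{I}_{B}$, defining an extreme amplifier. I then put $\mathbf{K}_{1}:=\mathbf{K}\mathbf{K}_{2}^{-1}$, for which $\mathbf{K}_{1}^{\ast}\mathbf{K}_{1}=\mathbf{K}_{2}^{-1}\mathbf{K}^{\ast}\mathbf{K}\mathbf{K}_{2}^{-1}\leq\mathbf{I}_{B}$ precisely by the $-$ sign of (\ref{n-s_channel_cov}); this is an extreme attenuator. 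A direct back-substitution into the composition law confirms $\mathbf{K}_{1}\mathbf{K}_{2}=\mathbf{K}$ and that the noise equals $\boldsymbol{\mu}$.

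For the gauge-contravariant case I would run the same scheme with $\Phi_{2}$ an extreme gauge-contravariant channel given by (\ref{CONTRAV}), (\ref{mindef5}). Composing (\ref{defprima}) for $\Phi_{1}$ with (\ref{CONTRAV}) for $\Phi_{2}$ and carrying the antilinear conjugation $\Lambda$ through yields $\mathbf{K}=\overline{\mathbf{K}_{1}}\,\mathbf{K}_{2}$ and $\boldsymbol{\mu}=\mathbf{K}_{2}^{\ast}\overline{\boldsymbol{\mu}_{1}}\,\mathbf{K}_{2}+\boldsymbol{\mu}_{2}$; with $\boldsymbol{\mu}_{2}=\tfrac12(\mathbf{I}_{B}+\mathbf{K}_{2}^{\ast}\mathbf{K}_{2})$ this collapses to $\boldsymbol{\mu}=\mathbf{K}_{2}^{\ast}\mathbf{K}_{2}+\tfrac12(\mathbf{I}_{B}-\mathbf{K}^{\ast}\mathbf{K})$. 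I would therefore require $\mathbf{K}_{2}^{\ast}\mathbf{K}_{2}=\mathbf{N}_{2}:=\boldsymbol{\mu}-\tfrac12(\mathbf{I}_{B}-\mathbf{K}^{\ast}\mathbf{K})$; the contravariant bound (\ref{ineq2}) gives $\mathbf{N}_{2}\geq\mathbf{K}^{\ast}\mathbf{K}\geq0$, so a matrix $\mathbf{K}_{2}$ with this Gram matrix exists. Since $\mathbf{N}_{2}$ need not be invertible, I would obtain $\mathbf{K}_{1}$ not by inversion but by a standard operator factorization (Douglas's lemma): the inequality $\mathbf{K}^{\ast}\mathbf{K}\leq\mathbf{K}_{2}^{\ast}\mathbf{K}_{2}$ guarantees a contraction $\mathbf{L}$ with $\mathbf{K}=\mathbf{L}\mathbf{K}_{2}$, and $\mathbf{K}_{1}:=\overline{\mathbf{L}}$ is the desired extreme attenuator.

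The routine verifications aside, the genuinely delicate points are two. First, the bookkeeping of the antilinear $\Lambda$ in the contravariant composition: one must check that the conjugates $\overline{\mathbf{K}_{1}}$, $\overline{\boldsymbol{\mu}_{1}}$ that appear still satisfy the attenuator inequalities, which they do because complex conjugation preserves Hermiticity and the positive-semidefinite order, $\overline{M}\geq0\iff M\geq0$. Second, the possible singularity of $\mathbf{K}_{2}$ in the contravariant case, which rules out direct inversion and forces the factorization lemma. In both cases the structural point is identical: the two-sided complete-positivity bound $\boldsymbol{\mu}\geq\pm\tfrac12(\,\cdot\,)$ is exactly what is needed — one sign makes the amplifier (resp. contravariant) Gram matrix nonnegative, indeed $\geq\mathbf{I}_{B}$ in the covariant case, while the opposite sign makes the attenuator a contraction.
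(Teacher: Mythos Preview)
Your proof is correct and follows essentially the same route as the paper: define $\mathbf{K}_{2}$ as the positive square root of $\boldsymbol{\mu}+\tfrac12(\mathbf{I}_{B}+\mathbf{K}^{\ast}\mathbf{K})$ (respectively $\boldsymbol{\mu}+\tfrac12(\mathbf{K}^{\ast}\mathbf{K}-\mathbf{I}_{B})$ in the contravariant case) and then factor $\mathbf{K}$ through $\mathbf{K}_{2}$. The only cosmetic differences are that the paper verifies the attenuator bound via $\mathbf{K}_{1}\mathbf{K}_{1}^{\ast}\leq\mathbf{I}_{A}$ using an auxiliary lemma (Lemma~\ref{L7}) rather than your direct congruence $\mathbf{K}_{2}^{-1}\mathbf{K}^{\ast}\mathbf{K}\,\mathbf{K}_{2}^{-1}\leq\mathbf{I}_{B}$, and in the contravariant case it uses the Moore--Penrose generalized inverse instead of Douglas's lemma; both pairs of devices are equivalent here.
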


\begin{proof}
The concatenation $\Phi =\Phi _{2}\circ \Phi _{1}$ of Gaussian
gauge-covariant channels $\Phi _{1}$ and $\Phi _{2}$ obeys the rule:
\begin{eqnarray}
\mathbf{K} &=&\mathbf{K}_{1}\mathbf{K}_{2},\quad  \label{cc1} \\
\boldsymbol{\mu } &=&\mathbf{K}_{2}^{\ast }\boldsymbol{\mu }_{1}\mathbf{K}
_{2}+\boldsymbol{\mu }_{2}.  \label{cc2}
\end{eqnarray}
By inserting relations
\begin{equation*}
\boldsymbol{\mu }_{1}=\frac{1}{2}\left( \mathbf{I}_{B}-\mathbf{K}_{1}^{\ast }
\mathbf{K}_{1}\right) =\frac{1}{2}\left( \mathbf{I}_{B}-|\mathbf{K}|_{1}
^{2}\right),\quad \boldsymbol{\mu }_{2}=\frac{1}{2}\left( \mathbf{K}
_{2}^{\ast }\mathbf{K}_{2}-\mathbf{I}_{B}\right) =\frac{1}{2}\left(
\left\vert \mathbf{K}_{2}\right\vert ^{2}-\mathbf{I}_{B}\right)
\end{equation*}
into (\ref{cc2}) and using (\ref{cc1}) we obtain
\begin{equation}
\left\vert \mathbf{K}_{2}\right\vert ^{2}=\mathbf{K}_{2}^{\ast }\mathbf{K}
_{2}=\boldsymbol{\mu} +\frac{1}{2}(\mathbf{K}^{\ast }\mathbf{K}+\mathbf{I}
_{B})\geq \left\{
\begin{array}{c}
\mathbf{I}_{B} \\
\mathbf{K}^{\ast }\mathbf{K}
\end{array}
\right.  \label{ineq1}
\end{equation}
from the inequality (\ref{n-s_channel_cov}). By using operator monotonicity
of the square root, we have
\begin{equation*}
\left\vert \mathbf{K}_{2}\right\vert \geq\mathbf{\ I}_{B},\quad\left\vert
\mathbf{K}_{2}\right\vert \geq \left\vert \mathbf{K}\right\vert .
\end{equation*}
The first inequality (\ref{ineq1}) implies that choosing
\begin{equation}
\mathbf{K}_{2}=\left\vert \mathbf{K}_{2}\right\vert =\sqrt{ \boldsymbol{\mu}
+\frac{1}{2}(\mathbf{K}^{\ast }\mathbf{K}+\mathbf{I}_{B})}  \label{K_2}
\end{equation}
and the corresponding $\boldsymbol{\mu }_{2}=\frac{1}{2}\left( \left\vert
\mathbf{K}_{2}\right\vert ^{2}-\mathbf{I}_{B}\right) ,$ we obtain extreme
amplifier $\Phi _{2}:B\rightarrow B$.

Then with
\begin{equation}
\mathbf{K}_{1}\mathbf{=K}\left\vert \mathbf{K}_{2}\right\vert ^{-1}
\label{K_1}
\end{equation}
we obtain, taking into account the second inequality in (\ref{ineq1}) and
also Lemma \ref{L7} below,
\begin{equation}
\mathbf{K}_{1}\mathbf{K}_{1}^{\ast }\mathbf{=K}\left\vert \mathbf{K}
_{2}\right\vert ^{-2}\mathbf{K}^{\ast }\mathbf{=K}\left[ \boldsymbol{\mu +}
\frac{ 1}{2}\mathbf{(\mathbf{K}^{\ast }K+I)}\right] ^{-1}\mathbf{K}^{\ast }
\leq \mathbf{I}_{A},  \label{K1}
\end{equation}
which implies $\mathbf{K}_{1}^{\ast }\mathbf{K}_{1}\leq\mathbf{\ I}_{A},$
hence $\mathbf{K}_{1}$ with the corresponding $\boldsymbol{\mu }_{1}=\frac{1
}{2}\left( \mathbf{I}_{B}- \mathbf{K}_{1}^{\ast }\mathbf{K}_{1}\right) $
give the quantum-limited attenuator.

\begin{lemma}
\label{L7} Let $\mathbf{M}\geq\mathbf{K}^{\ast }\mathbf{K}$, then $\mathbf{K}
\mathbf{M}^{-}\mathbf{K}^{\ast }\leq\mathbf{I}_{A}$, where $^{-}$ means
(generalized) inverse.
\end{lemma}

\begin{proof}
By the definition of the generalized inverse,
\begin{equation*}
u^{\ast }\mathbf{M}^{-}u=\sup_{v: v\in\mathrm{Ran}\mathbf{M}, v\neq 0}
\frac{|u^{\ast }v|^2}{v^{\ast }\mathbf{M}v}.
\end{equation*}
By inserting $\mathbf{K}^{\ast }u$ in place of $u$ and using Cauchy-Schwarz
inequality in the nominator of the fraction, we obtain
\begin{equation*}
u^{\ast }\mathbf{K}\mathbf{M}^{-}\mathbf{K}^{\ast }u\leq\sup_{v: v\in\mathrm{
Ran}\mathbf{M}, v\neq 0}\frac{ u^{\ast }u\,v^{\ast }\mathbf{K}^{\ast }
\mathbf{K}v}{v^{\ast }\mathbf{M}v}\leq u^{\ast }u.
\end{equation*}
\end{proof}

In the case of contravariant channel the relations (\ref{cc1}), (\ref{cc2})
are replaced with
\begin{eqnarray}
\mathbf{\bar{K}} &=&\mathbf{K}_{1}\mathbf{\bar{K}}_{2},  \label{bc1} \\
\boldsymbol{\mu } &=&\mathbf{K}_{2}^{\ast }\bar{\boldsymbol{\mu}}_{1}\mathbf{
K}_{2}+\boldsymbol{\mu }_{2}.  \label{bc2}
\end{eqnarray}
By substituting
\begin{equation*}
\boldsymbol{\mu }_{1}\mathbf{=}\frac{1}{2}\left( \mathbf{I}-\mathbf{K}
_{1}^{\ast }\mathbf{K}_{1}\right) \mathbf{,\quad \mu }_{2}=\frac{1}{2}\left(
\mathbf{K}_{2}^{\ast }\mathbf{K}_{2}+\mathbf{I}_{B}\right)
\end{equation*}
into (\ref{bc2}) and using (\ref{ineq2}) we obtain
\begin{equation}
\left\vert \mathbf{K}_{2}\right\vert ^{2}=\mathbf{K}_{2}^{\ast }\mathbf{K}
_{2}=\boldsymbol{\mu} +\frac{1}{2}(\mathbf{K}^{\ast }\mathbf{K}-\mathbf{I}
_{B})\geq \mathbf{K} ^{\ast }\mathbf{K}.  \label{K2}
\end{equation}
Taking $\mathbf{K}_{2}\mathbf{=}\left\vert \mathbf{K}_{2}\right\vert ,$ $
\boldsymbol{\mu }_{2}=\frac{1}{2}\left( |\mathbf{K}_{2}|^{2}+\mathbf{I}_{B}
\right) $ gives extreme gauge-contravariant channel $\Phi _{2}:B\rightarrow
B $ . With
\begin{equation}
\mathbf{\bar{K}}_{1}=\mathbf{K}\left\vert \mathbf{K}_{2}\right\vert ^{-}
\label{Kbar_1}
\end{equation}
we obtain, by using Lemma \ref{L7},
\begin{eqnarray}
\mathbf{\bar{K}}_{1}\mathbf{\bar{K}}_{1}^{\ast } &=&\mathbf{K}\left(
\left\vert \mathbf{K}_{2}\right\vert ^{-}\right) ^{2}\mathbf{K}^{\ast }
\notag \\
&=&\mathbf{K}\left[ \boldsymbol{\mu} +\frac{1}{2}(\mathbf{K}^{\ast }\mathbf{K
}-\mathbf{I}_{B}) \right] ^{-}\mathbf{K}^{\ast }\leq \mathbf{I}_{A},
\label{K2bar}
\end{eqnarray}
which implies $\mathbf{K}_{1}\mathbf{K}_{1}^{\ast }\leq \mathbf{I}_{A},$
with the corresponding $\boldsymbol{\mu }_{1}$ give the extreme attenuator $
\Phi _{1}:A\rightarrow B$.
\end{proof}

\begin{remark}
\label{rem} In the case of gauge-covariant channel, the equality in (\ref{K1}
) shows that $ \mathbf{K}\mathbf{K}^{\ast }>0$ implies $\mathbf{K}_{1}
\mathbf{K}_{1}^{\ast }>0$, while the inequality $\boldsymbol{\mu} >\frac{1}{2
}\left( \mathbf{K}^{\ast }\mathbf{K}-\mathbf{I}_{B}\right) $ implies $
\mathbf{K} _{1}\mathbf{K}_{1}^{\ast }<\mathbf{I}_{A}$. In the case of
gauge-contravariant channel, the inequality $\boldsymbol{\mu} >\frac{1}{2}
\left( \mathbf{I}_{B}+\mathbf{K}^{\ast }\mathbf{K}\right) $ implies $0<
\mathbf{K}_{1}\mathbf{K}_{1}^{\ast }<\mathbf{I}_{A}$ via (\ref{K2bar}).
\end{remark}

\begin{proposition}
\label{prop9} The extreme attenuator with matrix $\mathbf{K}$ and extreme
attenuator with matrix $\tilde{\mathbf{K}}=\sqrt{\mathbf{I}_{A}-\mathbf{KK}
^{\ast }}$ are mutually complementary.

The extreme amplifier with matrix \ $\mathbf{K}$ and gauge-contravariant
channel with matrix $\tilde{\mathbf{K}}=\sqrt{\mathbf{\bar{K}\bar{K}}^{\ast
}-\mathbf{I}_{A}}$ are mutually complementary.
\end{proposition}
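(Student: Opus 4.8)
The plan is to exhibit, for each of the two channels, an explicit Gaussian (metaplectic) unitary dilation of the form (\ref{fia})--(\ref{stinespring}) with the environment prepared in the vacuum $\rho_0$, and then read off the complementary channel (\ref{complementary}) by interchanging the roles of the two output factors. Since an extreme attenuator, an extreme amplifier and an extreme gauge-contravariant channel are all Gaussian, each is uniquely determined by its dual action on the displacement operators $D(\mathbf z)$, so I carry out the whole argument at the level of characteristic functions. Writing $V\psi=U(\psi\otimes\psi_0)$ with $\psi_0$ the ancilla vacuum, one has $\Phi^\ast[D_B(\mathbf z)]=\langle\psi_0|\,U^\ast(D_B(\mathbf z)\otimes I_E)U\,|\psi_0\rangle$ and $\tilde\Phi^\ast[D_E(\mathbf z)]=\langle\psi_0|\,U^\ast(I_B\otimes D_E(\mathbf z))U\,|\psi_0\rangle$, so everything reduces to a vacuum partial expectation of a displaced displacement operator, a pure Gaussian bookkeeping fixed by the first and second moments.

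For the attenuator I take $U$ to implement a complex-linear (gauge-covariant, beam-splitter) unitary, acting as $U^\ast D_{BE}(\mathbf z)U=D_{AD}(\mathbf S\mathbf z)$ for a unitary matrix $\mathbf S=\begin{pmatrix}\mathbf S_{11}&\mathbf S_{12}\\ \mathbf S_{21}&\mathbf S_{22}\end{pmatrix}$ on $\mathbf Z_A\oplus\mathbf Z_E$, with $(B,E)$ on the right and $(A,D)$ on the left. Since $\langle\psi_0|D_D(\mathbf w)|\psi_0\rangle=\exp(-\tfrac12|\mathbf w|^2)$, putting $\mathbf z=(\mathbf z_B,0)$ gives $\Phi^\ast[D_B(\mathbf z_B)]=D_A(\mathbf S_{11}\mathbf z_B)\exp(-\tfrac12|\mathbf S_{21}\mathbf z_B|^2)$. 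Matching with (\ref{defprima}), (\ref{mindef1}) forces $\mathbf S_{11}=\mathbf K$, and unitarity of the first block column, $\mathbf S_{11}^\ast\mathbf S_{11}+\mathbf S_{21}^\ast\mathbf S_{21}=\mathbf I_B$, gives $\mathbf S_{21}^\ast\mathbf S_{21}=\mathbf I_B-\mathbf K^\ast\mathbf K$, the extreme $\boldsymbol\mu$ of (\ref{mindef1}); such an $\mathbf S$ exists precisely because $\mathbf K^\ast\mathbf K\le\mathbf I_B$ lets the isometric column $(\mathbf K,\sqrt{\mathbf I_B-\mathbf K^\ast\mathbf K})$ be completed to a unitary. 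Putting instead $\mathbf z=(0,\mathbf z_E)$ and tracing out $B$ yields $\tilde\Phi^\ast[D_E(\mathbf z_E)]=D_A(\mathbf S_{12}\mathbf z_E)\exp(-\tfrac12|\mathbf S_{22}\mathbf z_E|^2)$, and the block relations $\mathbf S_{11}\mathbf S_{11}^\ast+\mathbf S_{12}\mathbf S_{12}^\ast=\mathbf I_A$, $\mathbf S_{12}^\ast\mathbf S_{12}+\mathbf S_{22}^\ast\mathbf S_{22}=\mathbf I_E$ give $\mathbf S_{12}\mathbf S_{12}^\ast=\mathbf I_A-\mathbf K\mathbf K^\ast$ and $\mathbf S_{22}^\ast\mathbf S_{22}=\mathbf I_E-\mathbf S_{12}^\ast\mathbf S_{12}$. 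Hence $\tilde\Phi$ is again an extreme attenuator; as $\tilde{\mathbf K}=\mathbf S_{12}$ is determined only up to a left partial isometry (exactly the freedom in the complementary channel), I choose the $\mathbf Z_E$-basis so that $\tilde{\mathbf K}=\sqrt{\mathbf I_A-\mathbf K\mathbf K^\ast}$, which is the first claim.

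For the amplifier the scheme is identical, but $U$ now implements a two-mode-squeezing symplectic transformation on $\mathbf Z_A\oplus\mathbf Z_E$ that is symplectic without being complex-linear: it couples each mode to the creation (complex-conjugate) part of the other. Thus $U^\ast D_B(\mathbf z)U$ produces a \emph{linear} displacement $\mathbf K\mathbf z$ on $A$ together with an \emph{antilinear} displacement $\propto\Lambda\mathbf z=\bar{\mathbf z}$ on $D$, and the Bogoliubov (pseudo-unitary) identities of the squeezer replace the unitarity relations used above; the hyperbolic relation $\mathbf K\mathbf K^\ast-(\cdot)(\cdot)^\ast=\mathbf I$ carries the gain condition $\mathbf K^\ast\mathbf K\ge\mathbf I_B$ and reproduces $\boldsymbol\mu=\tfrac12(\mathbf K^\ast\mathbf K-\mathbf I_B)$ of (\ref{mindef2}). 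Because the $E$-output now depends on the input through its complex conjugate, tracing out $B$ produces a channel of the gauge-contravariant form (\ref{g_contra}), (\ref{CONTRAV}); matching the antilinear part and the Gaussian factor against (\ref{CONTRAV}), (\ref{mindef5}) gives $\tilde{\mathbf K}\tilde{\mathbf K}^\ast=\overline{\mathbf K\mathbf K^\ast}-\mathbf I_A=\mathbf{\bar K\bar K}^\ast-\mathbf I_A$ together with $\tilde{\boldsymbol\mu}=\tfrac12(\mathbf I_E+\tilde{\mathbf K}^\ast\tilde{\mathbf K})$, i.e. the extreme gauge-contravariant channel with $\tilde{\mathbf K}=\sqrt{\mathbf{\bar K\bar K}^\ast-\mathbf I_A}$, which is the second claim.

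The routine part is the vacuum partial expectation, fixed once the first and second moments are matched. The genuine work is twofold. First, one must establish existence of the dilating metaplectic unitary with the prescribed block structure: a unitary completion for the attenuator and a pseudo-unitary/Bogoliubov completion for the amplifier, both made possible by the extremality conditions $\mathbf K^\ast\mathbf K\le\mathbf I_B$ and $\mathbf K^\ast\mathbf K\ge\mathbf I_B$ respectively. Second, in the amplifier case one must track the antilinear conjugation $\Lambda$ correctly: this is exactly what converts the complementary channel from covariant to contravariant, what flips the sign distinguishing (\ref{ineq2}) from (\ref{n-s_channel_cov}), and what produces the $\mathbf{\bar K}$ in $\tilde{\mathbf K}$; I expect this to be the main source of sign and conjugation errors. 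Finally, ``mutually'' complementary is automatic: since the complementary of the complementary is isometrically equivalent to the original channel (as noted after (\ref{equiv})), verifying the stated form of $\tilde\Phi$ in one direction yields the reverse relation at once.
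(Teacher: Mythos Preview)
Your proposal is correct and follows essentially the same approach as the paper: construct an explicit metaplectic dilation with vacuum ancilla (a beam-splitter unitary for the attenuator, a two-mode-squeezing symplectic map for the amplifier), then read off both channels by partial vacuum expectations of displacement operators. The paper simply writes the block matrices down explicitly---$\mathbf V=\begin{pmatrix}\mathbf K&\sqrt{\mathbf I_A-\mathbf K\mathbf K^\ast}\\ \sqrt{\mathbf I_B-\mathbf K^\ast\mathbf K}&-\mathbf K^\ast\end{pmatrix}$ for the attenuator and an analogous $\Lambda$-twisted pseudo-unitary for the amplifier---and verifies symplecticity directly, whereas you argue their existence from the block (pseudo-)unitarity relations; the content is the same.
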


\begin{proof}
For the case of one mode see \cite{cgh} or \cite{h}, Sec. 12.6.1. We sketch
the proof for several modes below. Define $\mathbf{Z}_{E}\simeq \mathbf{Z}
_{A},\,\mathbf{Z}_{D}\simeq \mathbf{Z}_{B},$ so that $\mathbf{Z=Z}_{A}\oplus
\mathbf{Z}_{D}\simeq \mathbf{Z}_{B}\oplus \mathbf{Z}_{E}$

In the case of attenuator consider the block unitary matrix in $\mathbf{Z}:$
\begin{equation}
\mathbf{V}=\left[
\begin{array}{cc}
\mathbf{K} & \sqrt{\mathbf{I}_{A}-\mathbf{KK}^{\ast }} \\
\sqrt{\mathbf{I}_{B}-\mathbf{K}^{\ast }\mathbf{K}} & -\mathbf{K}^{\ast }
\end{array}
\right]  \label{V}
\end{equation}
which defines unitary dynamics $U$ in $\mathcal{H}=\mathcal{H}_{A}\otimes
\mathcal{H}_{D}\simeq \mathcal{H}_{B}\otimes \mathcal{H}_{E}$ by the
relation $U^{\ast }D_{BE}(\mathbf{z}_{BE})U=D_{AD}(\mathbf{Vz}_{BE})$. Here $
\mathbf{z}_{BE}=[\mathbf{z}_{B}\,\,\mathbf{z}_{E}]^{t},$ $D_{BE}(\mathbf{z}
_{BE})=D_{B}(\mathbf{\ z}_{B})\otimes D_{E}(\mathbf{z}_{E})$, and the
unitarity follows from the relation
\begin{equation}
\mathbf{K}\sqrt{\mathbf{I}_{B}-\mathbf{K}^{\ast }\mathbf{K}}=\sqrt{\mathbf{I}
_{A}- \mathbf{KK}^{\ast }}\mathbf{K}.  \label{kkk}
\end{equation}
Let $\rho _{D}=\rho _{0}$ be the vacuum state, $\rho _{A}=\rho$ an arbitrary
state. Then the formulas (\ref{fia}), (\ref{fie}) define the mutually
complementary extreme attenuators as described in the first statement. The
proof is obtained by computing the characteristic function of the output
states for the channels. For the state of the composite system $\rho
_{BE}=U(\rho \otimes \rho _{D})U^{\ast }$ we have
\begin{eqnarray}
\phi _{BE}(\mathbf{z}_{BE}) &=&\mathrm{Tr}U(\rho \otimes \rho _{D})U^{\ast }
\left[ D_{B}(\mathbf{z}_{B})\otimes D_{E}(\mathbf{z}_{E})\right]  \notag \\
&=&\mathrm{Tr}(\rho \otimes \rho _{D})U^{\ast }\left[ D_{B}(\mathbf{z}
_{B})\otimes D_{E}(\mathbf{z}_{E})\right] U  \notag \\
&=&\mathrm{Tr}(\rho \otimes \rho _{D})\left[ D_{A}(\mathbf{Kz_{B}+\tilde{K}z}
_{E})\otimes D_{D}(\mathbf{\sqrt{\mathbf{I}_{B}-\mathbf{K}^{\ast }\mathbf{K}}
z_{B}-K^{\ast }z}_{E})\right]  \notag \\
&=&\phi _{A}(\mathbf{Kz_{B}+\tilde{K}z}_{E})\exp \left[ -\frac{1}{2}|\mathbf{
\sqrt{\mathbf{I}_{B}-\mathbf{K}^{\ast }\mathbf{K}}z_{B}}-\mathbf{K^{\ast }z}
_{E}|^{2}\right] .  \label{hren}
\end{eqnarray}
By setting $\mathbf{z}_{E}=0$ or $\mathbf{z}_{B}=0$ we obtain
\begin{eqnarray*}
\phi _{B}(\mathbf{z}_{B}) &=&\phi _{A}(\mathbf{Kz_{B}})\exp \left[ -\frac{1}{
2}\mathbf{z}_{B}^{\ast }\left( \mathbf{I}_{B}-\mathbf{K}^{\ast }\mathbf{K}
\right) \mathbf{z}_{B}\right] , \\
\phi _{E}(\mathbf{z}_{E}) &=&\phi _{A}(\mathbf{\tilde{K}z}_{E})\exp \left[ -
\frac{1}{2}\mathbf{\mathbf{z}_{E}^{\ast }KK^{\ast }z}_{E}\right]
\end{eqnarray*}
as required.

In the case of amplifier, set
\begin{equation*}
\mathbf{V}=\left[
\begin{array}{cc}
\mathbf{K} & -\sqrt{\mathbf{KK}^{\ast }-\mathbf{I}_{A}}\Lambda  \\
-\Lambda \sqrt{\mathbf{K}^{\ast }\mathbf{K}-\mathbf{I}_{B}} & \Lambda
\mathbf{K}^{\ast }\Lambda
\end{array}
\right] ,
\end{equation*}
where $\Lambda $ is the operator of complex conjugation,
anticommuting with multiplication by $i.$ By using the property $\Delta
(\Lambda z,\Lambda z^{\prime })=-\Delta (z,z^{\prime })$, we obtain that $\mathbf{V}$
corresponds to a symplectic transformation in $\mathbf{Z}$ generating
unitary dynamics $U$ in $\mathcal{H}$. Let again $\rho _{0}$ be the vacuum
state of the environment. Then the formulas (\ref{fia}), (\ref{fie}) define
the mutually complementary channels as described in the second statement of
the Proposition, and the proof is similar.

To show that $\mathbf{V}$ is a symplectic transformation, introduce the
matrices
\begin{equation*}
\mathbf{\Theta }=\left[
\begin{array}{cc}
\mathbf{I}_{B} & 0 \\
0 & -\Lambda
\end{array}
\right] ,\quad \mathbf{V}_{1}=\left[
\begin{array}{cc}
\mathbf{K} & \sqrt{\mathbf{KK}^{\ast }-\mathbf{I}_{A}} \\
\sqrt{\mathbf{K}^{\ast }\mathbf{K}-\mathbf{I}_{B}} & \mathbf{K}^{\ast }
\end{array}
\right] ,\quad \mathbf{\Sigma}=\left[
\begin{array}{cc}
\mathbf{I}_{B} & 0 \\
0 & -\mathbf{I}_{A}
\end{array}
\right].
\end{equation*}
Notice that $\mathbf{V}=\mathbf{\Theta} \mathbf{V}_{1}\mathbf{\Theta },$  and $
\mathbf{V}_{1}^{\ast }\mathbf{\Sigma}\mathbf{V}_{1}=\mathbf{\Sigma}$, which means that $\mathbf{V}_{1}$
preserves the indefinite Hermitian form $\sigma (z_{BE}, z^{\prime}_{BE})=z^{\ast}_{B}z^{\prime}_{B}-z^{\ast}_{E}z^{\prime}_{E}$.
By taking into account
\begin{equation*}
  \Delta_{BE}(\mathbf{\Theta}z_{BE}, \mathbf{\Theta}z^{\prime}_{BE})=\Delta_{B}(z_{B}, z^{\prime}_{B})-\Delta_{E}(z_{E}, z^{\prime}_{E})=\mathrm{Im}\sigma (z_{BE}, z^{\prime}_{BE}),
\end{equation*}
we obtain
\begin{equation*}
\Delta_{BE}(\mathbf{V}z_{BE}, \mathbf{V}z^{\prime}_{BE})=\mathrm{Im}\sigma (\mathbf{V}_{1}\mathbf{\Theta }z_{BE}, \mathbf{V}_{1}\mathbf{\Theta }z^{\prime}_{BE})
=\mathrm{Im}\sigma (\mathbf{\Theta }z_{BE}, \mathbf{\Theta }z^{\prime}_{BE})=\Delta_{BE}(z_{BE}, z^{\prime}_{BE}),
\end{equation*}
as required.
\end{proof}

Again, later we will need the following

\begin{lemma}
\label{L5} Let $\Phi _{1}:A\rightarrow B$ be an extreme attenuator with $
\mathbf{0}<\mathbf{K}_{1}\mathbf{K}_{1}^{\ast }<\mathbf{I}_{A},$ then $\Phi
_{1}[P_{\psi }]=$ $P_{\psi ^{\prime }}$ (a pure state) if and only if $
P_{\psi }$ is a coherent state.
\end{lemma}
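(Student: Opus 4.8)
The plan is to prove the two implications separately, disposing of the easy ``if'' direction by covariance and devoting the bulk of the work to ``only if'', where the machinery of complementary channels (Proposition \ref{prop9}, Lemma \ref{L1}) is exactly what is needed.

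For the ``if'' direction I would first check by a direct characteristic-function computation that the extreme attenuator sends the input vacuum $\rho_0$ to the output vacuum: inserting $\phi_{\rho_0}(\mathbf{w}) = \exp(-\frac{1}{2}|\mathbf{w}|^2)$ and $\boldsymbol{\mu} = \frac{1}{2}(\mathbf{I}_B - \mathbf{K}^*\mathbf{K})$ into (\ref{defprima}) gives output characteristic function $\exp(-\frac{1}{2}|\mathbf{z}|^2)$. Then, since by (\ref{gauscov}) any Gaussian gauge-covariant channel intertwines displacements, a coherent state $\rho_{\mathbf{w}} = D_A(\mathbf{w})\rho_0 D_A(\mathbf{w})^*$ is carried to $D_B(\mathbf{K}^*\mathbf{w})\rho_0 D_B(\mathbf{K}^*\mathbf{w})^*$, again a coherent state, in particular pure.

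For the ``only if'' direction I proceed in three steps. Step 1: because $\mathbf{0} < \mathbf{K}_1\mathbf{K}_1^* < \mathbf{I}_A$, Proposition \ref{prop9} tells us the complementary channel $\tilde{\Phi}_1$ is again an extreme attenuator, with $\tilde{\mathbf{K}}\tilde{\mathbf{K}}^* = \mathbf{I}_A - \mathbf{K}_1\mathbf{K}_1^*$, and by Lemma \ref{L1} the two outputs are isospectral, $\Phi_1[P_\psi] \thicksim \tilde{\Phi}_1[P_\psi]$; hence if $\Phi_1[P_\psi] = P_{\psi'}$ is pure, so is $\tilde{\Phi}_1[P_\psi]$. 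Step 2: the dilated state $\rho_{BE} = VP_\psi V^*$ is pure (as $V$ is an isometry and $P_\psi$ is pure) and now both of its marginals are pure; a pure bipartite state with pure marginals is necessarily a product (Schmidt rank one), so $V\psi = \phi_B \otimes \phi_E$. Step 3: I feed this factorization into the explicit joint characteristic function (\ref{hren}). The requirement that $\phi_\psi(\mathbf{K}\mathbf{z}_B + \tilde{\mathbf{K}}\mathbf{z}_E)\exp[-\frac{1}{2}|\sqrt{\mathbf{I}_B-\mathbf{K}^*\mathbf{K}}\,\mathbf{z}_B - \mathbf{K}^*\mathbf{z}_E|^2]$ factor into a function of $\mathbf{z}_B$ times a function of $\mathbf{z}_E$ is a functional equation for $\log\phi_\psi$. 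Taking mixed $\mathbf{z}_B,\mathbf{z}_E$ derivatives kills the two separate factors and leaves only the bilinear cross term, forcing the Hessian of $\log\phi_\psi$ to be constant; thus $\phi_\psi$ is Gaussian and $P_\psi$ is a pure Gaussian state.

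The hard part, and the step needing the most care, is extracting from the same data that $P_\psi$ is not merely Gaussian but specifically coherent, i.e.\ ruling out squeezing. I would handle this by propagating the real covariance form of $P_\psi$ through the channel — the output form is the input one transformed by the congruence induced by $\mathbf{K}$ plus the channel noise $\boldsymbol{\mu}$ — and imposing that the output again be a \emph{minimal} solution of (\ref{n-s condition}), i.e.\ a pure state. A one-mode model computation shows the resulting uncertainty identity collapses to a factor of the form $\eta(1-\eta)$ (with $\eta$ measuring $\mathbf{K}_1\mathbf{K}_1^*$) times the squeezing, so that the strict inequalities $\mathbf{0} < \mathbf{K}_1\mathbf{K}_1^* < \mathbf{I}_A$ are exactly what make the squeezing vanish and leave a coherent state; the several-mode case then follows by a simultaneous reduction mode by mode. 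Regularity issues — justifying differentiation of $\phi_\psi$, and that $\phi_\psi$ is nonvanishing near the origin so that $\log\phi_\psi$ is defined, to be upgraded to global Gaussianity by a Marcinkiewicz-type argument — would also need to be addressed.
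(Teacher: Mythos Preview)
Your Steps 1--2 coincide with the paper's argument. The divergence is in Step 3, where the paper is considerably more direct. Rather than differentiating $\log\phi_\psi$ and then separately eliminating squeezing, the paper evaluates the factorization identity (\ref{hren}) at $\mathbf{z}_E=0$ and $\mathbf{z}_B=0$ to express each factor $\phi'(\mathbf{z}_B)$, $\phi_E(\mathbf{z}_E)$ explicitly in terms of $\phi$, then substitutes these back into the full identity. After the change of variables $\mathbf{z}=\mathbf{K}_1\mathbf{z}_B$, $\mathbf{z}'=\tilde{\mathbf{K}}\mathbf{z}_E$ (surjective precisely because of the strict inequalities $\mathbf{0}<\mathbf{K}_1\mathbf{K}_1^*<\mathbf{I}_A$) and use of (\ref{kkk}), the equation collapses to
\[
\phi(\mathbf{z})\phi(\mathbf{z}')=\phi(\mathbf{z}+\mathbf{z}')\exp[\mathrm{Re}\,\mathbf{z}^*\mathbf{z}'],
\]
so that $\omega(\mathbf{z})=\phi(\mathbf{z})\exp(\tfrac12|\mathbf{z}|^2)$ satisfies Cauchy's multiplicative equation $\omega(\mathbf{z})\omega(\mathbf{z}')=\omega(\mathbf{z}+\mathbf{z}')$. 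Continuity and $\omega(-\mathbf{z})=\overline{\omega(\mathbf{z})}$ force $\omega(\mathbf{z})=\exp[i\,\mathrm{Im}\,\mathbf{w}^*\mathbf{z}]$, which is \emph{directly} the characteristic function of a coherent state.

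This buys you everything at once: no differentiability or nonvanishing of $\phi_\psi$ is ever needed (only continuity, which is automatic), no Marcinkiewicz argument, no intermediate ``first Gaussian, then unsqueezed'' stage, and no mode-by-mode reduction for the multimode case. Your route could likely be pushed through, but the regularity issues you flag are real work, and the multimode squeezing elimination (``simultaneous reduction mode by mode'') is not automatic since the covariance of an arbitrary pure Gaussian input need not commute with $\mathbf{K}_1^*\mathbf{K}_1$. The paper's substitution trick sidesteps all of this.
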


\begin{proof}
According to Proposition \ref{prop9}, the complementary channel $\tilde{\Phi}
_{1}$ is an extreme attenuator with the matrix $\tilde{\mathbf{K}}=\sqrt{
\mathbf{I}_{A}-\mathbf{K}_{1}\mathbf{K}_{1}^{\ast }},$ such that $\mathbf{0}<
\mathbf{\tilde{K}}<\mathbf{I}_{A}.$ Its output is also pure, $\Phi
_{1}[P_{\psi }]=P_{\psi _{E}^{\prime }},$ as the outputs of complementary
channels have identical nonzero spectra by Lemma \ref{L1}. Thus
\begin{equation*}
U(\psi \otimes \psi _{0})=\psi ^{\prime }\otimes \psi _{E}^{\prime },
\end{equation*}
where $\psi _{0}\in \mathcal{H}_{D}$ is the vacuum vector and $U$ is the
unitary operator in $\mathcal{H}$ implementing the symplectic transformation
corresponding to the unitary (\ref{V}) in $\mathbf{Z}_{A}\mathbf{\oplus Z}
_{D}\simeq \mathbf{Z}_{B}\mathbf{\oplus Z}_{E},$ with $\mathbf{Z}_{D}\simeq
\mathbf{Z}_{B},\,\mathbf{Z}_{E}\simeq \mathbf{Z}_{A}$ Denoting by
\begin{equation*}
\phi (\mathbf{z})=\mathrm{Tr}P_{\psi }D_{A}(\mathbf{z}),\quad \,\phi
^{\prime }(\mathbf{z}_{B})=\mathrm{Tr}P_{\psi ^{\prime }}D_{B}(\mathbf{z}
_{B}),\quad \phi _{E}(\mathbf{z}_{E})=\mathrm{Tr}P_{\psi _{E}^{\prime
}}\,D_{E}(\mathbf{z}_{E})
\end{equation*}
the quantum characteristic functions and using the relation (\ref{hren}), we
have the functional equation
\begin{equation}
\phi ^{\prime }(\mathbf{z}_{B})\phi _{E}(\mathbf{z}_{E})=\phi (\mathbf{K}_{1}
\mathbf{z} _{B}+\tilde{\mathbf{K}}\mathbf{z}_{E})\exp \left[ -\frac{1}{2}|
\sqrt{\mathbf{I}_{B} -\mathbf{K}_{1}^{\ast }\mathbf{K}_{1}}\mathbf{z}_{B}-
\mathbf{K}_{1}^{\ast }\mathbf{z} _{E}|^{2}\right] .  \label{FE}
\end{equation}
By letting $\mathbf{z}_{E}=0,$ respectively $\mathbf{z}=0,$ we obtain
\begin{eqnarray*}
\phi ^{\prime }(\mathbf{z}_{B}) &=&\phi (\mathbf{K}_{1}\mathbf{z}_{B})\exp
\left[ -\frac{1}{2}|\sqrt{\mathbf{I}_{B}-\mathbf{K}_{1}^{\ast }\mathbf{K}_{1}
}\mathbf{z}_{B}|^{2}\right] , \\
\quad \phi _{E}(\mathbf{z}_{E}) &=&\phi (\tilde{\mathbf{K}}\mathbf{z}
_{E})\exp \left[ -\frac{1}{2}|\mathbf{K}_{1}^{\ast }\mathbf{z}_{E}|^{2}
\right] ,
\end{eqnarray*}
thus, after the change of variables $\mathbf{z}=\mathbf{K}_{1}\mathbf{z}
_{B},\,\mathbf{z}^{\prime }=\tilde{\mathbf{K}}\mathbf{z}_{E}$, and using (
\ref{kkk}), the equation (\ref{FE}) reduces to
\begin{equation*}
\phi (\mathbf{z})\phi (\mathbf{z}^{\prime })=\phi (\mathbf{z}+\mathbf{z}
^{\prime })\exp \left[ \mathrm{Re}\,\mathbf{z}^{\ast }\mathbf{z}^{\prime }
\right] .
\end{equation*}
The condition of the Lemma ensures that \textrm{Ran\thinspace }$\mathbf{
K_{1}=\,}$\textrm{Ran\thinspace }$\tilde{\mathbf{K}}=\mathbf{Z}_{A}$.
Substituting $\omega (\mathbf{z})=\phi (\mathbf{z})\exp \left[ \frac{1}{2}|
\mathbf{z}|^{2}\right] ,$ this becomes
\begin{equation}
\omega (\mathbf{z})\omega (\mathbf{z}^{\prime })=\omega (\mathbf{z}+\mathbf{z
}^{\prime })  \label{FE1}
\end{equation}
for all $\mathbf{z}, \mathbf{z}^{\prime }\in \mathbf{Z}_{A}$. The function $
\omega (\mathbf{z}),$ as well as the characteristic function $\phi (\mathbf{z
}),$ is continuous and satisfies $\omega (-\mathbf{z})=\overline{\omega (
\mathbf{z})}.$ The only solution of (\ref{FE1}) satisfying these conditions
is the exponent $\omega (\mathbf{z})=\exp \left[ i\mathrm{Im}\mathbf{w}
^{\ast }\mathbf{z}\right] $ for some complex $\mathbf{w}.$ Thus
\begin{equation*}
\phi (\mathbf{z})=\exp \left[ i\mathrm{Im\,}\mathbf{w}^{\ast }\mathbf{z}-
\frac{1}{2}|\mathbf{z}|^{2}\right]
\end{equation*}
is the characteristic function of the coherent state $\rho _{\mathbf{w}/2}.$
\end{proof}

\subsection{Gaussian optimizers}

The following basic result for one mode was obtained in \cite{mgh}. Here we
present a complete proof in the multimode case, a sketch of which was given
in \cite{mgh1}.

\begin{theorem}
\label{T1} (i) Let $\Phi $ be a gauge covariant or contravariant channel and
let $f$ be a real concave function on $[0,1],$ such that $f(0)=0,$ then
\begin{equation}
\mathrm{Tr}f(\Phi [ \rho ])\geq \mathrm{Tr}f(\Phi [ \rho _{\mathbf{w}}])=
\mathrm{Tr}f(\Phi [ \rho _{0}])  \label{main1}
\end{equation}
for all states $\rho $ and any coherent state $\rho _{\mathbf{w}}$ (the
value on the right is the same for all coherent states by the unitary
covariance property of a Gaussian channel (\ref{gauscov})).

(ii) Let $f$ be strictly concave, then equality in (\ref{main1}) is
attained only if $\rho $ is a coherent state in the following cases:

a) $s_{B}=s_{A}$ and $\Phi $ is an extreme amplifier with $\boldsymbol{\mu }=
\frac{1}{2}\left( \mathbf{K}^{\ast }\mathbf{K}-\mathbf{I}_{B}\right)>0 $;

b) $s_{B}\geq s_{A},\,$ the channel $\Phi $ is gauge-covariant with $\mathbf{
KK}^{\ast }>0$ and
\begin{equation}
\boldsymbol{\mu }>\frac{1}{2}\left( \mathbf{K}^{\ast }\mathbf{K}-\mathbf{I}
_{B} \right) ;  \label{mugr}
\end{equation}

c) $s_{B}\geq s_{A},\,$ the channel $\Phi $ is gauge-contravariant with $
\mathbf{KK}^{\ast }>0$ and $\boldsymbol{\mu >}\frac{1}{2}\left( \mathbf{I}
_{B}+\mathbf{K }^{\ast }\mathbf{K}\right) .$
\end{theorem}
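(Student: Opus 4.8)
The plan is to translate (\ref{main1}) into a majorization statement and then dismantle the channel with the structural results already proved. By the majorization criterion recalled in the preceding subsection, the family of inequalities (\ref{main1}) over all $f\in\mathfrak{F}$ is equivalent to the single assertion that $\Phi[\rho_{0}]$ majorizes $\Phi[\rho]$ for every input $\rho$, and since $\rho\mapsto\mathrm{Tr}f(\Phi[\rho])$ is concave, by (\ref{f_check}) it suffices to treat pure inputs. First I would use Proposition \ref{prop1} to write $\Phi=\Phi_{2}\circ\Phi_{1}$, with $\Phi_{1}$ an extreme attenuator and $\Phi_{2}$ an extreme amplifier (covariant case) or an extreme gauge-contravariant channel (contravariant case). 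The characteristic-function computation in the proof of Proposition \ref{prop9} shows that the extreme attenuator sends the vacuum to the vacuum, hence every coherent state to a coherent state; thus $\Phi_{1}[\rho_{0}]$ is again a coherent state $\sigma_{0}$ while $\Phi_{1}[\rho]$ is arbitrary, and the whole assertion for $\Phi$ collapses to the same assertion for the single channel $\Phi_{2}$ fed the coherent state $\sigma_{0}$.

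For the attenuator in isolation part (i) is immediate: $\Phi_{1}[\rho_{0}]$ is a pure state, whose eigenvalues $1,0,0,\dots$ majorize those of every density operator. So the content lies entirely in $\Phi_{2}$, and by concavity it is enough to show that $\Phi_{2}[\sigma_{0}]$ majorizes $\Phi_{2}[P_{\chi}]$ for pure inputs $P_{\chi}$. On pure inputs Proposition \ref{prop9} and Lemma \ref{L1} merge the two cases: the extreme amplifier and the extreme gauge-contravariant channel are mutually complementary, so $\Phi_{2}[P_{\chi}]$ and $\tilde{\Phi}_{2}[P_{\chi}]$ have the same nonzero spectrum, and, majorization depending only on the spectrum (consistently with Lemma \ref{L1a}), the amplifier and contravariant statements are equivalent. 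Everything thus reduces to one fact, which I expect to be the main obstacle: for the extreme amplifier the (thermal) output of a coherent input majorizes the output of every state. To attack it I would diagonalize $\mathbf{K}$ by a singular-value decomposition, absorbing the unitaries into passive gauge symplectic conjugations that leave spectra untouched, splitting a multimode extreme amplifier into a tensor product of one-mode extreme amplifiers. The one-mode majorization is the basic result of \cite{mgh}; the genuinely hard remaining step is to lift it to the product channel while allowing inputs entangled across the modes, which is exactly where the additivity/multiplicativity content enters and which no structural shortcut among the lemmas above can bypass.

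For part (ii) with $f$ strictly concave, recall from the discussion after (\ref{f_check}) that any minimizer is pure, say $\rho=P_{\psi}$. In cases (b) and (c) the strict hypotheses are tailored, via Remark \ref{rem}, to give $\mathbf{0}<\mathbf{K}_{1}\mathbf{K}_{1}^{\ast}<\mathbf{I}_{A}$ for the attenuator factor $\Phi_{1}$. So suppose $\psi$ were not coherent; then Lemma \ref{L5} forces $\sigma:=\Phi_{1}[P_{\psi}]$ to be a \emph{mixed} state. Writing a nontrivial decomposition $\sigma=\int P_{\chi}\,d\nu(\chi)$ and using that $\tau\mapsto\mathrm{Tr}f(\Phi_{2}[\tau])$ is \emph{strictly} concave (the extreme $\Phi_{2}$ is injective by Lemma \ref{L_invert} in the covariant case, and analogously in the contravariant case), together with part (i) applied to $\Phi_{2}$, I would obtain
\begin{equation*}
\mathrm{Tr}f(\Phi[P_{\psi}])=\mathrm{Tr}f(\Phi_{2}[\sigma])>\int \mathrm{Tr}f(\Phi_{2}[P_{\chi}])\,d\nu(\chi)\geq \mathrm{Tr}f(\Phi_{2}[\sigma_{0}])=\mathrm{Tr}f(\Phi[\rho_{0}]),
\end{equation*}
contradicting minimality. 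Hence $\psi$ must be coherent.

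Case (a) is different in nature: there $s_{B}=s_{A}$ and $\mathbf{K}^{\ast}\mathbf{K}>\mathbf{I}_{B}$, so the attenuator factor degenerates to a passive unitary ($\mathbf{K}_{1}\mathbf{K}_{1}^{\ast}=\mathbf{I}_{A}$) and Lemma \ref{L5} has no force. Uniqueness for the bare extreme amplifier must then come from the very amplifier analysis used in part (i), sharpened to its equality case: one has to show that the thermal output spectrum is attained only by coherent inputs. This strict one-mode statement is again the content of \cite{mgh}, extended to the multimode situation, and it is here — in proving the amplifier majorization and its rigidity against entangled multimode inputs — that the real work lies; the decomposition of Proposition \ref{prop1}, the complementarity of Proposition \ref{prop9}, and the argument built on Lemma \ref{L5} are essentially bookkeeping that funnels the general gauge-co- or contravariant channel into this single core computation.
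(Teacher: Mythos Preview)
Your structural reductions are correct and match the paper: the decomposition $\Phi=\Phi_{2}\circ\Phi_{1}$ via Proposition~\ref{prop1}, the observation that the extreme attenuator sends vacuum to vacuum, the passage from the contravariant to the amplifier case via complementarity (Proposition~\ref{prop9} and Lemma~\ref{L1}), and the treatment of (ii)(b),(c) through Lemma~\ref{L5} and strict concavity are all in the paper and work as you describe.

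The genuine gap is the core amplifier step. You propose to diagonalize the multimode extreme amplifier into a tensor product of one-mode amplifiers, invoke the one-mode result of \cite{mgh}, and then ``lift it to the product channel while allowing inputs entangled across the modes,'' conceding that this is where the additivity content enters and that no structural shortcut bypasses it. But there \emph{is} such a shortcut, and it is the heart of the paper's proof: Lemma~\ref{L4} (the ``Giovannetti observation''). For an extreme amplifier $\Phi_{2}$ with $\mathbf{K}_{2}\geq\mathbf{I}_{B}$ one constructs an extreme attenuator $\Phi_{1}'$ (with $\mathbf{K}_{1}'=\sqrt{\mathbf{I}_{B}-\mathbf{K}_{2}^{-2}}$) such that $\Phi_{2}[P_{\psi}]\thicksim(\Phi_{2}\circ\Phi_{1}')[P_{\psi}]$ for every pure input. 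This comes from composing the complementary contravariant channel $\tilde{\Phi}_{2}$ with the spectrum-preserving transposition $\mathcal{T}$ and then reapplying Proposition~\ref{prop1}: miraculously the amplifier factor in the decomposition of $\mathcal{T}\circ\tilde{\Phi}_{2}$ is $\Phi_{2}$ itself. The identity $\mathrm{Tr}f(\Phi_{2}[P_{\psi}])=\mathrm{Tr}f((\Phi_{2}\circ\Phi_{1}')[P_{\psi}])$ together with strict concavity and Lemma~\ref{L_invert} forces $\Phi_{1}'[P_{\psi}]$ to be pure whenever $P_{\psi}$ is a minimizer; Lemma~\ref{L5} then says $P_{\psi}$ is coherent. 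This works directly in the multimode setting with no appeal to one-mode results or to additivity. Your route, by contrast, would be circular: in this paper the additivity (Proposition~\ref{dec copy(1)}) is a \emph{corollary} of Theorem~\ref{T1}, not an input to it.

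A secondary consequence is that case~(ii)(a) requires no separate analysis. The very argument sketched above, run with $\mathbf{K}_{2}>\mathbf{I}_{B}$ so that $\mathbf{0}<\mathbf{K}_{1}'(\mathbf{K}_{1}')^{\ast}<\mathbf{I}_{B}$ and Lemma~\ref{L5} applies to $\Phi_{1}'$, simultaneously proves the inequality \emph{and} identifies the minimizers as coherent states; the general case $\mathbf{K}_{2}\geq\mathbf{I}_{B}$ (and arbitrary concave $f$) is then obtained by approximation.
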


\begin{proof}
(i) We first prove the inequality (\ref{main1}) for strictly concave $f.$
Then the inequality for arbitrary concave $f$ follows by the monotone
approximation $\ f(x)=\lim_{\varepsilon \downarrow 0}f_{\varepsilon }(x),$
since $f_{\varepsilon }(x)=f(x)-\varepsilon x^{2}$ are strictly concave.
Also, by concavity, it is sufficient to prove (\ref{main1}) only for $\rho
=P_{\psi }.$

By Proposition \ref{prop1}, $\Phi =\Phi _{2}\circ $ $\Phi _{1}$ \ where $
\Phi _{1}:A\rightarrow B$ is an extreme attenuator and $\Phi
_{2}:B\rightarrow B$ is either extreme amplifier or extreme
gauge-contravariant channel. Any extreme attenuator maps vacuum state into
vacuum. Indeed,
\begin{eqnarray*}
\mathrm{Tr}\Phi _{1}\left[ \rho _{0}\right] D_{B}(\mathbf{z}) &=&\mathrm{Tr}
\rho _{0}\Phi _{1}^{\ast }[D_{B}(\mathbf{z})] \\
&=&\mathrm{Tr}\rho _{0}D_{A}(\mathbf{Kz})\exp \left( -\frac{1}{2}\mathbf{z}
^{\ast }\left( \mathbf{I}_{B}-\mathbf{K^{\ast }K}\right) \mathbf{z}\right) \\
&=&\exp \left( -\frac{1}{2}|\mathbf{z}|^{2 }\right) =\mathrm{Tr}\rho
_{0}D_{B}(\mathbf{z}).
\end{eqnarray*}
Therefore $\mathrm{Tr}f(\Phi [ \rho _{0}])=\mathrm{Tr}f(\Phi _{2}[\rho
_{0}]).$ Then it is sufficient to prove (\ref{main1}) for all extreme
amplifiers and all extreme gauge-contravariant channels $\Phi _{2}$. Indeed,
assume that we have proved
\begin{equation}
\mathrm{Tr}f(\Phi _{2}[P_{\psi }])\geq \mathrm{Tr}f(\Phi _{2}[\rho _{0}]).
\label{amp}
\end{equation}
for any state vector $\psi .$ Consider the spectral decomposition $\Phi
_{1}[P_{\psi }]=\sum_{j}p_{j}P_{\phi _{j}},$ where $p_{j}>0,$ then
\begin{eqnarray}
\mathrm{Tr}f(\Phi [ P_{\psi }]) &=&\mathrm{Tr}f(\Phi _{2}[\Phi _{1}[P_{\psi
}]])  \label{decomp} \\
&\geq &\sum_{j}p_{j}\mathrm{Tr}f(\Phi _{2}[P_{\phi _{j}}])  \label{decompa}
\\
&\geq &\mathrm{Tr}f(\Phi _{2}[\rho _{0}])  \label{decomb} \\
&=&\mathrm{Tr}f(\Phi _{2}[\Phi _{1}[\rho _{0}]])=\mathrm{Tr}f(\Phi [ \rho
_{0}]).  \label{decomc}
\end{eqnarray}

Then, according to the second statement of Proposition \ref{prop9} and Lemma
\ref{L1}
\begin{equation*}
\mathrm{Tr}f(\Phi _{2}[P_{\psi }])=\mathrm{Tr}f(\tilde{\Phi}_{2}[P_{\psi }]),
\end{equation*}
where $\Phi _{2}$ is an extreme amplifier and $\tilde{\Phi}_{2}$ is an
extreme gauge-contravariant channel. Thus it is sufficient to prove (\ref
{amp}) only for an extreme amplifier $\Phi _{2}:B\rightarrow B,$ with
Hermitian matrix $\mathbf{K}_{2}\geq \mathbf{I}_{B}$.

The following result is based on a key observation by Giovannetti.
\begin{lemma}
\label{L4} For an extreme amplifier $\Phi _{2}:B\rightarrow B,$ with matrix $
\mathbf{K}_{2}\geq \mathbf{I}_{B},$ there is an extreme attenuator $\Phi
_{1}^{\prime }$ such that for all $\psi \in \mathcal{H}_{B}$
\begin{equation}
\Phi _{2}(P_{\psi })\thicksim \left( \Phi _{2}\circ \Phi _{1}^{\prime
}\right) (P_{\psi }).  \label{simi}
\end{equation}
\end{lemma}

\begin{proof}
By Proposition \ref{prop9} and Lemma \ref{L1} $\Phi _{2}(P_{\psi })\thicksim
\tilde{\Phi}_{2}(P_{\psi })$ for all $\psi \in \mathcal{H}_{B},$ where $
\tilde{\Phi}_{2}$ is extreme contravariant channel with the matrix $\tilde{
\mathbf{K}}=\sqrt{\bar{\mathbf{K}}_{2}^{2}-\mathbf{I}_{B}}.$

Define the transposition map $\mathcal{T}:B\rightarrow B$ by the relation $
\mathcal{T}[D(\mathbf{z})]=D(-\mathbf{\bar{z}}).$ The concatenation $\Phi =
\mathcal{T}\circ\tilde{\Phi}_{2}$ is a covariant Gaussian channel:
\begin{equation*}
\Phi ^{\ast }[D(\mathbf{z})]=\tilde{\Phi}_{2}^{\ast }\circ\mathcal{T}[D(
\mathbf{z})]=D(\sqrt{\mathbf{K}_{2}^{2}-\mathbf{I}_{B}}\mathbf{z})\exp
\left( -\frac{1}{2}\mathbf{z}^{\ast }\mathbf{K}_{2}^{2}\mathbf{z}\right) .
\end{equation*}
Applying decomposition from Proposition \ref{prop1}, namely the relation (
\ref{ineq1}), gives $\Phi =\Phi _{2}\circ \Phi _{1}^{\prime },$ where $\Phi
_{2}$ is the original amplifier, and $\Phi _{1}^{\prime }:B\rightarrow B$ is
another extreme attenuator with matrix $\mathbf{K}_{1}=\sqrt{\mathbf{I}_{B}-
\mathbf{K}_{2}^{-2}}$. This implies the relation (\ref{simi}).
\end{proof}

Lemma \ref{L4} and Lemma \ref{L1a} imply
\begin{equation}
\mathrm{Tr}f(\Phi _{2}(P_{\psi }))=\mathrm{Tr}f\left( \left( \Phi _{2}\circ
\Phi _{1}^{\prime }\right) (P_{\psi })\right) .  \label{ABBA}
\end{equation}
Again, consider the spectral decomposition of the density operator
\begin{equation*}
\Phi _{1}^{\prime }(P_{\psi })=\sum_{j}p_{j}^{\prime }P_{\psi _{j}},\,\quad
p_{j}^{\prime }>0.
\end{equation*}
By concavity,
\begin{equation}
\mathrm{Tr}f\left( \left( \Phi _{2}\circ \Phi _{1}^{\prime }\right) [P_{\psi
}]\right) \geq \sum_{j}p_{j}^{\prime }\mathrm{Tr}f\left( \Phi _{2}[P_{\psi
_{j}}]\right) .  \label{AAB}
\end{equation}

Since $f$ is assumed strictly concave, then $\rho \rightarrow \mathrm{Tr}
f(\Phi _{2}[\rho ])$ is strictly concave \cite{carlen}. Assuming that $
P_{\psi }$ is a minimizer for the functional (\ref{ABBA}), we conclude that $
\Phi _{2}[P_{\psi _{j}}]$ must all coincide, otherwise the above inequality
would be strict, contradicting the assumption. From Lemma \ref{L_invert} it
follows that $P_{\psi _{j}}=P_{\psi ^{\prime }}$ for all $j$ and for some $
\psi ^{\prime }\in \mathcal{H}_{B},$ hence, assuming that $P_{\psi }$ is a
minimizer, the output $\Phi _{1}[P_{\psi }]=$ $P_{\psi ^{\prime }}$ \ is a
pure state.

Since $\mathbf{K}_{1}=\sqrt{\mathbf{I}_{B}-\mathbf{K}_{2}^{-2}},$ the
condition of Lemma \ref{L5} is fulfilled if $\mathbf{K}_{2}>\mathbf{I}_{B}.$
In this case, if $P_{\psi }$ is a minimizer, the Lemma implies that $P_{\psi
}$ is a coherent state. Thus we obtain the inequality (\ref{amp}) for the
amplifier $\Phi _{2}$ with $\mathbf{K}_{2}>\mathbf{I}_{B}$ and strictly
concave $f$ . In this way we also obtain the case a) of the ``only if''
statement (ii).

In the case of amplifier $\Phi _{2}$ with $\mathbf{K}_{2}\geq \mathbf{I}_{B},
$ we can take any sequence $\mathbf{K}_{2}^{(n)}>\mathbf{I}_{B}$, $\mathbf{K}
_{2}^{(n)}\rightarrow\mathbf{K}_{2},$ and the corresponding amplifiers $\Phi
_{2}^{(n)}$. Then $\mathrm{Tr}f(\Phi _{2}^{(n)}[\rho ])\rightarrow \mathrm{Tr
}f(\Phi _{2}[\rho ])$ for any concave polygonal function $f$ on $[0,1],$
such that $f(0)=0,$ and any $\rho \in \mathfrak{S}(\mathcal{H}_{A}).$ This
follows from the fact that any such function is Lipschitz, $|f(x)-f(y)|\leq
\varkappa |x-y|$, and $\left\Vert \Phi _{2}^{(n)}[\rho ]-\Phi _{2}[\rho
]\right\Vert _{1}\rightarrow 0.$ It follows that (\ref{amp}) holds for all
extreme amplifiers $\Phi _{2}$ in the case of polygonal concave functions $f$
. For arbitrary concave $f$ on $[0,1]$ there is a monotonously nondecreasing
sequence of concave polygonal functions $f_{m}$ converging to $f$ pointwise.
Passing to the limit $m\rightarrow \infty $ gives the inequality (\ref{amp})
for arbitrary extreme amplifier, and hence, (\ref{main1}) holds for
arbitrary Gaussian gauge-covariant or contravariant channels.

(ii) The ``only if'' statement in the cases b), c) are obtained from the
decomposition $\Phi =\Phi _{2}\circ \Phi _{1}$ and the relations (\ref
{decomp})-(\ref{decomb}) by applying argument similar to the case of extreme
amplifier. Notice that the conditions on the channel $\Phi $ imply that in
the decomposition $\Phi =\Phi _{2}\circ \Phi _{1}$ the attenuator $\Phi _{1}$
is defined by the matrix $\mathbf{K}_{1}$ such that $0<\mathbf{K}_{1}\mathbf{
K}_{1}^{\ast }<\mathbf{I}_{A}$ (see Remark \ref{rem}). Applying the argument
involving the relations (\ref{ABBA})-(\ref{AAB}) with strictly concave $f$
to the relations (\ref{decomp})-(\ref{decomc}), we obtain that for any pure
minimizer $P_{\psi }$ of $\mathrm{Tr}f(\Phi [ P_{\psi }])$ the output of the
extremal attenuator $\Phi _{1}[P_{\psi }]$ is necessarily a pure state.
Applying Lemma \ref{2} to the attenuator $\Phi _{1}$ we conclude that $
P_{\psi }$ is necessarily a coherent state.
\end{proof}

\subsection{Explicit formulas and additivity}

\begin{proposition}
\label{dec copy(1)} For any $p>1$ and any Gaussian gauge-covariant or
contravariant channel $\,\Phi $
\begin{eqnarray}
\left\Vert \Phi \right\Vert _{1\rightarrow p} &=&\left( \mathrm{Tr}\Phi [
\rho _{0}]^{p}\right) ^{1/p},  \label{R1} \\
\check{R}_{p}(\Phi ) &=&R_{p}(\Phi [ \rho _{0}]),  \label{R2} \\
\check{H}(\Phi ) &=&H(\Phi [ \rho _{0}]),  \label{R3}
\end{eqnarray}
where $\rho _{0}$ is the vacuum state.

The multiplicativity property (\ref{multi}) holds for any two Gaussian
gauge-covariant (contravariant) channels $\Phi _{1}$ and $\Phi _{2}$, as
well as the additivity of the minimal R\'{e}nyi entropy (\ref{n2p}) and of
the minimal von Neumann entropy (\ref{4}).
\end{proposition}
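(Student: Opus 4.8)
The plan is to read off the three explicit formulas (\ref{R1})--(\ref{R3}) directly from Theorem \ref{T1}, and then to bootstrap from them to multiplicativity and additivity by exploiting the fact that the class of gauge-covariant (respectively contravariant) channels is closed under tensor products. First I would establish the explicit formulas. Theorem \ref{T1}(i) asserts that for every $f\in\mathfrak{F}$ the functional $\rho\mapsto\mathrm{Tr}f(\Phi[\rho])$ attains its minimum at the vacuum $\rho_0$, so that $\check{f}(\Phi)=\mathrm{Tr}f(\Phi[\rho_0])$. Specializing to $f(x)=-x^{p}$ (concave for $p>1$, with $f(0)=0$) and recalling from the text that $\check{f}(\Phi)=-\left\Vert\Phi\right\Vert_{1\rightarrow p}^{p}$ yields (\ref{R1}); inserting this into the identity (\ref{4p}) gives (\ref{R2}). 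Specializing instead to $f(x)=-x\log x$ and using $\check{f}(\Phi)=\check{H}(\Phi)$ gives (\ref{R3}). No new work beyond Theorem \ref{T1} is required at this stage.

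For the second part I would first verify that if $\Phi_{1}:A_{1}\rightarrow B_{1}$ and $\Phi_{2}:A_{2}\rightarrow B_{2}$ are both gauge-covariant (or both gauge-contravariant) with respect to the fixed complex structures, then $\Phi_{1}\otimes\Phi_{2}$ is again a Gaussian channel of the same type, now with respect to the direct-sum complex structures $J_{A_{1}}\oplus J_{A_{2}}$ and $J_{B_{1}}\oplus J_{B_{2}}$ and the diagonal gauge group $U_{\phi}^{A_{1}}\otimes U_{\phi}^{A_{2}}$. This is a routine check on product inputs using the defining intertwining relation (\ref{calinvch}) (or its contravariant analogue) together with the factorization of displacement operators across the tensor product. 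Consequently Theorem \ref{T1}(i) applies verbatim to $\Phi_{1}\otimes\Phi_{2}$, so its minimal output is attained at the vacuum of the composite system, which factorizes as $\rho_{0}=\rho_{0,1}\otimes\rho_{0,2}$, the product of the vacua of the two factor channels.

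The multiplicativity and additivity statements then follow by direct computation. Since $(\Phi_{1}\otimes\Phi_{2})[\rho_{0}]=\Phi_{1}[\rho_{0,1}]\otimes\Phi_{2}[\rho_{0,2}]$ and $(\sigma_{1}\otimes\sigma_{2})^{p}=\sigma_{1}^{p}\otimes\sigma_{2}^{p}$, formula (\ref{R1}) applied to the product channel gives
\begin{equation*}
\left\Vert\Phi_{1}\otimes\Phi_{2}\right\Vert_{1\rightarrow p}^{p}=\mathrm{Tr}\,\Phi_{1}[\rho_{0,1}]^{p}\cdot\mathrm{Tr}\,\Phi_{2}[\rho_{0,2}]^{p}=\left\Vert\Phi_{1}\right\Vert_{1\rightarrow p}^{p}\left\Vert\Phi_{2}\right\Vert_{1\rightarrow p}^{p},
\end{equation*}
which is (\ref{multi}). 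The additivity (\ref{n2p}) and (\ref{4}) then follow either through the equivalence (\ref{multi})$\Leftrightarrow$(\ref{n2p}) noted in the text, or directly from (\ref{R2}) and (\ref{R3}) combined with the additivity of the R\'enyi and von Neumann entropies on product states.

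The genuine content is carried entirely by Theorem \ref{T1}; the point I expect to require the most care is the closure of the gauge-covariant/contravariant class under tensor products, since it is precisely this that licenses invoking the deep minimization result for the composite channel. Once $\Phi_{1}\otimes\Phi_{2}$ is known to lie in the same class, the minimizer is forced to be the product vacuum and everything factorizes. Here it is worth stressing that the trivial direction of (\ref{multi}) and (\ref{4})---the one obtained by restricting to product inputs---is available for arbitrary channels; what Theorem \ref{T1} supplies is the nontrivial reverse direction, ruling out any gain from entangled inputs, in sharp contrast to the general case where such entanglement can strictly lower the output entropy below the product value.
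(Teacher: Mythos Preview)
Your proposal is correct and follows essentially the same approach as the paper's own proof: apply Theorem~\ref{T1} with $f(x)=-x^{p}$ and $f(x)=-x\log x$ to obtain (\ref{R1})--(\ref{R3}), then observe that the tensor product of two gauge-covariant (contravariant) channels is again of the same type, so Theorem~\ref{T1} applies to $\Phi_{1}\otimes\Phi_{2}$ with the factorized vacuum $\rho_{0}=\rho_{0}^{(1)}\otimes\rho_{0}^{(2)}$ as minimizer. The paper's proof is terser but structurally identical; your added elaboration on the closure of the class under tensor products and the role of the direct-sum complex structure is accurate and does no harm.
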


\begin{proof}
The first statement follows from Theorem \ref{T1} by taking $\ f(x)=-x^{p},$
$\ f(x)=-x\log x.$

If $\Phi _{1}$ and $\Phi _{2}$ are both gauge-covariant (contravariant),
then their tensor product $\Phi _{1}\otimes \Phi _{2}$ shares this property.
The second statement then follows from the expressions (\ref{R1}) - (\ref{R3}
)  and the product property of the vacuum state $\rho_0=\rho_0^{(1)}\otimes
\rho_0^{(2)}$, which follows from the definition.
\end{proof}

From the definitions of gauge-co/contravariant channels (\ref{defprima}), (
\ref{CONTRAV}), it follows that the state $\Phi [\rho _{0}]$ is
gauge-invariant Gaussian with the correlation matrix $\boldsymbol{\mu} +
\mathbf{K}^{\ast } \mathbf{K}/2.$ The spectrum of $\Phi [\rho _{0}]$ is
computed explicitly leading to the expressions \cite{hir}
\begin{equation*}
\left\Vert \Phi \right\Vert _{1\rightarrow p}=\left[ \det \left[ \left(
\boldsymbol{\mu} +\mathbf{K}^{\ast }\mathbf{K}/2+\mathbf{I}_{B}/2\right)
^{p}- \left( \boldsymbol{\mu} +\mathbf{K}^{\ast }\mathbf{K}/2-\mathbf{I}
_{B}/2\right) ^{p}\right] \right] ^{-1/p}
\end{equation*}
and
\begin{equation}
\check{H}(\Phi )=\mathrm{tr}\,g(\boldsymbol{\mu +}\left( \mathbf{K}^{\ast }
\mathbf{K}-\mathbf{I}_{B}\right) /2),  \label{hmin}
\end{equation}
where $g(x)=(x+1)\log (x+1)-x\log x$ and $\mathrm{tr}$ denotes trace of
operators in $\mathbf{Z}.$ In the last case we used the formula for the
entropy of Gaussian state (\ref{gausstate}) \cite{WATER}:
\begin{equation*}
H\left( \rho \right) =\mathrm{tr}\,g(\boldsymbol{\alpha} -\mathbf{I}/2).
\end{equation*}

We now turn to the classical capacity of the channel $\Phi $. In infinite
dimensions, there are two novel features as compared to the situation
described in Sec. \ref{sec:chcap}. First, one has to extend the notion of
ensemble to embrace continual families of states. We call \textit{
generalized ensemble} an arbitrary Borel probability measure $\pi $ on $
\mathfrak{S}(\mathcal{H}_{A})$. The \textit{average state} of the
generalized ensemble $\pi $ is defined as the barycenter of the probability
measure
\begin{equation*}
\bar{\rho}_{\pi }=\int\limits_{\mathfrak{S}(\mathcal{H}_{A})}\rho \,\pi
(d\rho ).
\end{equation*}
The conventional ensembles correspond to finitely supported measures.

Second, one has to consider the input constraints to avoid infinite values
of the capacities. Let $F$ be a positive selfadjoint operator in $\mathcal{H}
_{A}$, which usually represents energy in the system $A$. We consider the
input states with constrained energy: $\mathrm{Tr}\rho F\leq E,$ where $E$
is a fixed positive constant. Since the operator $F$ is usually unbounded,
care should be taken in defining the trace; we put $\mathrm{Tr}\rho
F=\int_{0}^{\infty }\lambda \,dm_{\rho }(\lambda ),$ where $m_{\rho
}(\lambda )=\mathrm{Tr}\rho E(\lambda ),$ and $E(\lambda )$ is the spectral
function of the selfadjoint operator $F.$ Then the constrained $\chi -$
capacity is given by the following generalization of the expression (\ref{1}
):
\begin{equation}
C_{\chi }(\Phi ,F,E)=\sup_{\pi :\mathrm{Tr}\bar{\rho}_{\pi }F\leq E}\chi
(\pi ),  \label{chi}
\end{equation}
where
\begin{equation}
\chi (\pi )=H(\Phi [ \bar{\rho}_{\pi }])-\int\limits_{\mathfrak{S}(\mathcal{H
}_{A})}H(\Phi [ \rho ])\pi (d\rho )  \label{chipi}
\end{equation}
To ensure that this expression is defined correctly, certain additional
conditions upon the channel $\Phi $ and the constraint operator $F$ should
be imposed (see \cite{h}, Sec. 11.5), which however are always fulfilled in
the Gaussian case we consider below.

Denote $F^{(n)}=F\otimes I\dots \otimes I+\dots +I\otimes \dots \otimes
I\otimes F,$ then the \textit{constrained classical capacity} is given by
the expression
\begin{equation}
C(\Phi ,F,E)=\lim_{n\rightarrow \infty }\frac{1}{n}C_{\chi }(\Phi ^{\otimes
n},F^{(n)},nE).  \label{ccc}
\end{equation}

Now let $\Phi $ be a Gaussian gauge-covariant channel, and consider
gauge-invariant oscillator energy operator $F=\sum_{j,k=1}^{s_{A}}\epsilon
_{jk}a_{j}^{\ast }a_{k},$ where $\mathbf{\epsilon }=\left[ \epsilon _{jk}
\right] $ is a Hermitian positive definite matrix, $a_{j}=\frac{1}{\sqrt{2}}(q_{j}+ip_{j})$ -- the annihilation operator
for $j$-th mode. For any state $\rho $
satisfying $\mathrm{Tr}\rho F<\infty ,$ the first moments $\mathrm{Tr}\rho
a_{j}$ and the second moments $\quad \mathrm{Tr}\rho a_{j}^{\ast }a_{k},
\mathrm{Tr}\rho a_{j}a_{k}$ are well defined. For gauge-invariant state $
\mathrm{Tr}\rho a_{j}=0$ and $\mathrm{Tr}\rho a_{j}a_{k}=0.$ For a Gaussian
gauge-invariant state (\ref{gausstate})
\begin{equation*}
\boldsymbol{\alpha} -\mathbf{I}/2=\left[ \mathrm{Tr}\bar{\rho}_{\pi
}a_{j}^{\ast }a_{k}\right] _{j,k=1,\dots ,s},
\end{equation*}
see e.g. \cite{aspekty}.

\begin{proposition}
The constrained classical capacity of the Gaussian gauge-covariant channel $
\Phi $ is
\begin{eqnarray}
C(\Phi ;F,E) &=&C_{\chi }(\Phi ;F,E)  \label{maxnu} \\
&=&\max_{\boldsymbol{\nu }:\,\mathrm{tr}\boldsymbol{\nu \epsilon }\leq E}\,
\mathrm{\ tr\,}g(\mathbf{K}^{\ast }\boldsymbol{\nu}\mathbf{K}+
\boldsymbol{\mu} +\left( \mathbf{K}^{\ast }\mathbf{K}-\mathbf{I}_{B}\right)
/2)- \mathrm{tr}g(\boldsymbol{\mu} +\left( \mathbf{K}^{\ast }\mathbf{K}-
\mathbf{I}_{B}\right) /2).  \notag
\end{eqnarray}
The optimal ensemble $\pi $ which attains the supremum in (\ref{chi})
consists of coherent states $\rho _{\mathbf{z}}=D_{A}(\mathbf{z})\rho
_{0}D_{A}(\mathbf{z})^{\ast },\,\mathbf{z}\in \mathbf{Z}_{A}$ distributed
with gauge-invariant Gaussian probability distribution $Q_{\boldsymbol{\nu }
}(d^{2s}z)$ on $\mathbf{Z}_{A}$ having zero mean and the correlation matrix $
\boldsymbol{\nu }$ which solves the maximization problem in (\ref{maxnu}).
\end{proposition}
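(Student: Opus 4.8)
The plan is to first pin down the single-letter quantity $C_{\chi }(\Phi ;F,E)$ by matched upper and lower bounds, and then to upgrade it to the regularized capacity $C(\Phi ;F,E)$ of (\ref{ccc}) using the additivity secured in Proposition~\ref{dec copy(1)}.

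For the upper bound I would estimate the two terms of $\chi (\pi )$ in (\ref{chipi}) separately. The averaged term satisfies $\int H(\Phi [ \rho ])\,\pi (d\rho )\geq \check{H}(\Phi )$ trivially, and by Theorem~\ref{T1} this minimal output entropy equals $H(\Phi [ \rho _{0}])=\mathrm{tr}\,g(\boldsymbol{\mu }+(\mathbf{K}^{\ast }\mathbf{K}-\mathbf{I}_{B})/2)$. For the first term I would invoke the maximum-entropy principle: among all inputs with fixed second moments the gauge-invariant Gaussian state has the largest output entropy, and the constraint $\mathrm{Tr}\bar{\rho }_{\pi }F\leq E$ becomes $\mathrm{tr}\,\boldsymbol{\nu }\boldsymbol{\epsilon }\leq E$ in terms of the correlation matrix $\boldsymbol{\nu }=\boldsymbol{\alpha }-\mathbf{I}/2$. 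Feeding a gauge-invariant Gaussian input with correlation $\boldsymbol{\alpha }$ through (\ref{defprima}) gives, by a short characteristic-function computation, an output that is again gauge-invariant Gaussian with correlation $\mathbf{K}^{\ast }\boldsymbol{\alpha }\mathbf{K}+\boldsymbol{\mu }$; the entropy formula (\ref{hmin})--(\ref{gausstate}) then yields $H(\Phi [ \bar{\rho }_{\pi }])\leq \mathrm{tr}\,g(\mathbf{K}^{\ast }\boldsymbol{\nu }\mathbf{K}+\boldsymbol{\mu }+(\mathbf{K}^{\ast }\mathbf{K}-\mathbf{I}_{B})/2)$. Maximizing over admissible $\boldsymbol{\nu }$ (a compact set since $\boldsymbol{\epsilon }>0$) bounds $C_{\chi }(\Phi ;F,E)$ by the right-hand side of (\ref{maxnu}).

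For the matching lower bound I would exhibit the ensemble of the statement: coherent states $\rho _{\mathbf{z}}=D_{A}(\mathbf{z})\rho _{0}D_{A}(\mathbf{z})^{\ast }$ weighted by the zero-mean Gaussian measure $Q_{\boldsymbol{\nu }}$ at the optimal covariance $\boldsymbol{\nu }$. Its barycenter $\bar{\rho }_{\pi }$ is precisely the gauge-invariant Gaussian state with $\boldsymbol{\alpha }=\boldsymbol{\nu }+\mathbf{I}/2$, so the first term of $\chi (\pi )$ equals the maximized output entropy just computed; meanwhile the covariance property (\ref{gauscov}) makes each $\Phi [ \rho _{\mathbf{z}}]$ a unitary translate of $\Phi [ \rho _{0}]$, so $H(\Phi [ \rho _{\mathbf{z}}])\equiv H(\Phi [ \rho _{0}])=\check{H}(\Phi )$ and the averaged term collapses to exactly $\check{H}(\Phi )$. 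Hence $\chi (\pi )$ meets the upper bound and $C_{\chi }(\Phi ;F,E)$ equals the right-hand side of (\ref{maxnu}).

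Finally, to obtain $C(\Phi ;F,E)=C_{\chi }(\Phi ;F,E)$, the inequality $\geq $ is the general superadditivity of the $\chi $-capacity, so only $C_{\chi }(\Phi ^{\otimes n},F^{(n)},nE)\leq n\,C_{\chi }(\Phi ;F,E)$ remains. Repeating the upper-bound argument for $\Phi ^{\otimes n}$, subadditivity of von Neumann entropy bounds the output-entropy term by the sum over single-mode marginals $\sum _{i}H(\Phi [ \bar{\rho }_{\pi ,i}])$, while the additivity $\check{H}(\Phi ^{\otimes n})=n\check{H}(\Phi )$ of Proposition~\ref{dec copy(1)} handles the averaged term; splitting the energy as $\sum _{i}E_{i}\leq nE$ and using concavity of $E\mapsto C_{\chi }(\Phi ;F,E)$ --- which follows from operator concavity of $g$ and linearity of $\boldsymbol{\nu }\mapsto \mathbf{K}^{\ast }\boldsymbol{\nu }\mathbf{K}$ --- together with Jensen's inequality closes the single-letterization. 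I expect the genuine difficulty to lie here, in the infinite-dimensional setting: one must justify finiteness and continuity of the entropic functionals under the energy constraint, the reduction to Gaussian ensembles, and the interchange of the supremum with the energy-splitting, all relying on the regularity conditions on $(\Phi ,F)$ noted after (\ref{chipi}). The additivity of the minimal output entropy is the structural input without which the concavity argument would not close.
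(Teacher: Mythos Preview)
Your proposal is correct and follows essentially the same approach as the paper's proof: coherent-state Gaussian ensemble for achievability, Theorem~\ref{T1} (via Proposition~\ref{dec copy(1)}) for the minimal-output-entropy term, and subadditivity of entropy together with the Gaussian maximum-entropy principle for the converse, with the paper carrying out the $n$-block sandwich in a single pass rather than your two-stage (single-letter, then regularize) organization. The one step the paper makes explicit that you elide is that, before invoking the maximum-entropy principle in the upper bound, the input is gauge-averaged (using gauge-covariance of $\Phi$, gauge-invariance of $F$, and concavity of the output entropy) so that only the moments $\boldsymbol{\nu }=[\mathrm{Tr}\,\bar{\rho}_{\pi}a_{j}^{\ast}a_{k}]$ enter; your phrasing ``among all inputs with fixed second moments the gauge-invariant Gaussian state has the largest output entropy'' conflates this reduction with the maximum-entropy principle itself.
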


\begin{proof}
Consider a Gaussian ensemble $\pi _{\boldsymbol{\nu }}$ consisting of
coherent states $\rho _{\mathbf{z}}=D_{A}(\mathbf{z})\rho _{\mathbf{0}}D_{A}(
\mathbf{z} )^{\ast },\,\mathbf{z}\in \mathbf{Z}_{A},$ with gauge-invariant
Gaussian probability distribution $Q_{\boldsymbol{\nu }}(d^{2s}z)$ on $
\mathbf{Z}_{A}$ having zero mean and some correlation matrix $
\boldsymbol{\nu .}$ It is defined by the classical characteristic function
\begin{equation*}
\int\limits_{\mathbf{Z}_{A}}\exp \left( 2i\mathrm{Im}\mathbf{w}^{\ast }
\mathbf{z}\right) Q_{\boldsymbol{\nu }}(d^{2s}w)=\exp \left( -\mathbf{z}
^{\ast }\boldsymbol{ \nu} \mathbf{z}\right) .
\end{equation*}
By using the covariance property (\ref{gauscov}) of Gaussian channel, we
have
\begin{equation*}
H(\Phi [ \rho _{\mathbf{z}}])=H(\Phi [ D_{A}(\mathbf{z})\rho _{ \mathbf{0}
}D_{A}(\mathbf{z})^{\ast }])=H(\Phi [ \rho _{0}])= \mathrm{tr}g(
\boldsymbol{\mu }+\left( \mathbf{K}^{\ast }\mathbf{K}-\mathbf{I}_{B}\right)
/2),
\end{equation*}
which does not depend on $\mathbf{z,}$ and hence it gives the value of the
integral term in (\ref{chipi}). Integration of the characteristic functions
of coherent states gives
\begin{equation*}
\mathrm{Tr}\bar{\rho}_{\pi _{\boldsymbol{\nu }}}D_{A}(\mathbf{z})=\exp
\left( - \mathbf{z}^{\ast }\left( \boldsymbol{\nu} +\mathbf{I}_{A}/2\right)
\mathbf{z}\right) .
\end{equation*}
Then $\boldsymbol{\nu }=\left[ \mathrm{Tr}\bar{\rho}_{\pi }a_{j}^{\ast }a_{k}
\right] _{j,k=1,\dots ,s_{A}}$ and $\mathrm{Tr}\bar{\rho}_{\pi _{
\boldsymbol{\nu }}}F=\sum_{j,k=1}^{s}\epsilon _{jk}\mathrm{Tr}\bar{\rho}
_{\pi _{\boldsymbol{\nu }}}a_{j}^{\ast }a_{k}=\mathrm{tr}\boldsymbol{\nu
\epsilon .}$ The state $\Phi [ \bar{\rho}_{\pi _{\boldsymbol{\nu }}}]$ is
gauge-invariant Gaussian with the correlation matrix $\mathbf{K}^{\ast
}\left( \boldsymbol{\nu} +\mathbf{I}_{A}/2\right) \mathbf{K}+\boldsymbol{\mu
}$, hence it has the entropy $\mathrm{tr}\,g(\mathbf{K}^{\ast }
\boldsymbol{\nu} \mathbf{K}+\boldsymbol{\mu } +\left( \mathbf{K}^{\ast }
\mathbf{K}-\mathbf{I}_{B}\right) /2).$ Thus for the Gaussian ensemble $\pi _{
\boldsymbol{\nu }}$
\begin{equation}
\chi (\pi _{\boldsymbol{\nu }})=\mathrm{tr}\,g(\mathbf{K}^{\ast }
\boldsymbol{\nu}\mathbf{K}+\boldsymbol{\mu} + \left( \mathbf{K}^{\ast }
\mathbf{K}-\mathbf{I}_{B}\right) /2)-\mathrm{tr}g(\boldsymbol{\mu} +\left(
\mathbf{K}^{\ast }\mathbf{K}-\mathbf{I}_{B}\right) /2).  \label{chinu}
\end{equation}
Summarizing, we need to show
\begin{equation}
C(\Phi ;F,E)=C_{\chi }(\Phi ;F,E)=\sup_{\boldsymbol{\nu }:\,\mathrm{tr}
\boldsymbol{\nu} \boldsymbol{\epsilon }\leq E}\chi (\pi _{\boldsymbol{\nu }
}).  \label{cnu}
\end{equation}

Let us denote by $\mathcal{G}$ the set of Gaussian~ gauge-invariant states
in $\mathcal{H}_{A}$.

\begin{lemma}
\label{L6}
\begin{equation}
\max_{\rho ^{(n)}:\mathrm{Tr}\rho ^{(n)}F^{(n)}\leq nE}H\left( \Phi
^{\otimes n}\left[ \rho ^{(n)}\right] \right) \leq n\max_{\rho :\rho \in
\mathcal{G},\,\mathrm{Tr}\rho F\leq E}H\left( \Phi \left[ \rho \right]
\right) .  \label{maxen}
\end{equation}
\end{lemma}

\begin{proof}
We first prove that
\begin{equation}
\sup_{\rho ^{(n)}:\mathop{\rm Tr}\nolimits\rho ^{(n)}F^{(n)}\leq nE}H(\Phi
^{\otimes n}[\rho ^{(n)}])\leq n\sup_{\rho :\mathop{\rm Tr}\nolimits\rho
F\leq E}H(\Phi [ \rho ]).  \label{hphine}
\end{equation}

Indeed, denoting by $\rho _{j}$ the partial state of $\rho ^{(n)}$ in the $
j- $th tensor factor of ${\mathcal{H}}_{A}^{\otimes n}$ and letting $\bar{
\rho}=\frac{1}{n}\sum_{j=1}^{n}\rho _{j},$ we have
\begin{equation*}
H(\Phi ^{\otimes n}[\rho ^{(n)}])\leq \sum_{j=1}^{n}H(\Phi [ \rho _{j}])\leq
nH(\Phi [ \bar{\rho}]),
\end{equation*}
where in the first inequality we used subadditivity of the quantum entropy,
while in the second -- its concavity. Moreover, $\mathop{\rm Tr}\nolimits
\bar{\rho}F=\frac{1}{n}\mathop{\rm Tr}\nolimits\rho ^{(n)}F^{(n)}\leq E,$
hence (\ref{hphine}) follows.

Using gauge covariance of the channel $\Phi $, we can then reduce
maximization in the right hand side of (\ref{hphine}) to gauge-invariant
states. Indeed, for a given state $\rho$, satisfying the constraint $\mathrm{
Tr}\rho F\leq E$ the averaging
\begin{equation*}
\rho _{av}=\frac{1}{2\pi }\int_{0}^{2\pi }U_{\varphi }\rho U_{\varphi
}^{\ast }d\varphi
\end{equation*}
also satisfies the constraint, while $H(\Phi [ \rho ])\leq H(\Phi [ \rho
_{av}])$ by concavity of the entropy.

Finally, we use the maximum entropy principle which says that among states
with fixed second moments the Gaussian state has maximal entropy (see e.g.
\cite{h}, Lemma 12.25). This proves (\ref{maxen}).
\end{proof}

We have
\begin{equation}
\max_{\rho :\rho \in \mathcal{G},\mathrm{Tr}\rho F\leq E}H\left( \Phi \left[
\rho \right] \right) =\max_{\boldsymbol{\nu }:\mathrm{tr}\boldsymbol{\nu
\epsilon }\leq E}\,\mathrm{tr\,}g(\mathbf{K}^{\ast }\boldsymbol{\nu}\mathbf{K
}+\boldsymbol{\mu} +\left( \mathbf{K}^{\ast }\mathbf{K}-\mathbf{I}
_{B}\right) /2)].  \label{maxg}
\end{equation}
Now let $\boldsymbol{\nu }$ be the solution of the maximization problem in
the righthand side. To prove (\ref{maxnu}) observe that
\begin{eqnarray*}
&&n\chi (\pi _{\boldsymbol{\nu }})\leq nC_{\chi }(\Phi ,F,E)\leq C_{\chi
}(\Phi ^{\otimes n},F^{(n)},nE) \\
&&\qquad \qquad \leq \max_{\rho ^{(n)}:\mathrm{Tr}\rho ^{(n)}F^{(n)}\leq
nE}H\left( \Phi ^{\otimes n}\left[ \rho ^{(n)}\right] \right) -\min_{\rho
^{(n)}}H\left( \Phi ^{\otimes n}\left[ \rho ^{(n)}\right] \right) .\qquad
\qquad
\end{eqnarray*}

By using Lemma \ref{L6} and Proposition \ref{dec copy(1)} we see that this
is less than or equal to
\begin{equation*}
n\left[ \max_{\rho :\rho \in \mathcal{G},\mathrm{Tr}\rho F\leq E}H\left(
\Phi \left[ \rho \right] \right) -H\left( \Phi \left[ \rho _{0}\right]
\right) \right] =n\chi (\pi _{\boldsymbol{\nu }}),
\end{equation*}
where the equality follows from (\ref{maxg}) and (\ref{chinu}).

Thus $C_{\chi }(\Phi ^{\otimes n},F^{(n)},nE)=nC_{\chi }(\Phi ,F,E)$ and
hence the constrained classical capacity (\ref{ccc}) of the Gaussian
gauge-covariant channel is given by the expression (\ref{maxnu}).
\end{proof}

Similar argument applies to Gaussian gauge-contravariant channel (\ref
{CONTRAV}), giving the expression (\ref{maxnu}) with $\boldsymbol{\epsilon }$
replaced by $\bar{\boldsymbol{\epsilon}}.$ Indeed, in this case the state $
\Phi [\bar{\rho}_{\pi _{\boldsymbol{\nu }}}]$ is gauge-invariant Gaussian
with the characteristic function
\begin{eqnarray*}
\mathrm{Tr}\Phi [ \bar{\rho}_{\pi _{\boldsymbol{\nu }}}]D(\mathbf{z}
)&=&\exp \left( -(\overline{\mathbf{Kz}})^{\ast }\left( \boldsymbol{\nu} +
\mathbf{I}_{A}/ 2\right) \overline{\mathbf{Kz}}-\mathbf{z}^{\ast }
\boldsymbol{\mu z}\right) \\
& =&\exp \left( -(\mathbf{Kz})^{\ast }\left( \bar{\boldsymbol{\nu}}+\mathbf{I
}_{A}/2\right) \mathbf{Kz}-\mathbf{z}^{\ast }\boldsymbol{\mu z}\right) ,
\end{eqnarray*}
with the correlation matrix $\mathbf{K}^{\ast }\left(\bar{\boldsymbol{\nu}}+
\mathbf{I}_{A}/2\right) \mathbf{K}+\boldsymbol{\mu} .$ On the other hand, $
\mathrm{tr}\boldsymbol{\nu \epsilon } =\mathrm{tr}\bar{\boldsymbol{\nu}}\bar{
\boldsymbol{\epsilon}},$ so that redefining $ \bar{\boldsymbol{\nu}}$ as $
\boldsymbol{\nu }$, we get the statement.

The maximization in (\ref{maxnu}) is a finite-dimensional optimization
problem which is a quantum analog of ``water-filling'' problem in classical
information theory, see e.g. \cite{COVER,WATER}. It can be solved explicitly
only in some special cases, e.g. when $\mathbf{K},\boldsymbol{\mu},
\boldsymbol{\epsilon}$ commute, and it is a subject of separate study.

\bigskip

\subsection{The case of quantum-classical Gaussian channel}

\label{maj}

Consider affine map which transforms quantum states $\rho \in \mathfrak{S}(
\mathcal{H})$ into probability densities on $\mathbf{Z}$
\begin{equation}
\rho \rightarrow p_{\rho }(\mathbf{z})=\mathrm{Tr}\rho D(\mathbf{z})\rho
_{0}D(\mathbf{z})^{\ast },  \label{qc}
\end{equation}
where $D(\mathbf{z})$ are the displacement operators, $\rho _{0}$ is the
vacuum state with the quantum characteristic function
\begin{equation*}
\phi _{0}(\mathbf{z})\equiv \mathrm{Tr}\rho _{0}D(\mathbf{z})=\exp \left( -
\frac{1}{2}\mathbf{z}^{\ast }\mathbf{z}\right) ,
\end{equation*}
The function $p_{\rho }(\mathbf{z})$ is bounded by 1 and is indeed a
continuous probability density, the normalization follows from the
resolution of the identity
\begin{equation*}
\int_{\mathbf{Z}}D(\mathbf{z})\rho _{0}D(\mathbf{z})^{\ast }\frac{d^{2s}
\mathbf{z}}{\pi ^{s}}=I.
\end{equation*}

\begin{proposition}
\label{T2} Let $f$ be a concave function on $[0,1],$ such that $f(0)=0,$
then for arbitrary state $\rho $
\begin{equation}
\int_{\mathbf{Z}}f(p_{\rho }(\mathbf{z}))\frac{d^{2s}\mathbf{z}}{\pi ^{s}}
\geq \int_{\mathbf{Z}}f(p_{\rho _{\mathbf{w}}}(\mathbf{z}))\frac{d^{2s}
\mathbf{z}}{\pi ^{s}}.  \label{aim}
\end{equation}
\end{proposition}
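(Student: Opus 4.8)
The plan is to deduce the Proposition from Theorem \ref{T1} by realizing the Husimi functional on the right of (\ref{aim}) as a limit of minimal-output quantities of gauge-covariant amplifier channels. For $\kappa>1$ let $\Phi_\kappa$ denote the $s$-mode extreme amplifier (\ref{mindef2}) with $\mathbf{K}=\kappa\mathbf{I}$, $\boldsymbol{\mu}=\frac{1}{2}(\kappa^{2}-1)\mathbf{I}$, acting on $\mathbf{Z}_{A}=\mathbf{Z}_{B}=\mathbb{C}^{s}$. From (\ref{defprima}) its output characteristic function is $\phi_\rho(\kappa\mathbf{z})\exp(-\frac{1}{2}(\kappa^{2}-1)|\mathbf{z}|^{2})$, and a direct Fourier-transform computation shows that the Husimi (lower-symbol) function $Q_\sigma(\mathbf{w})=\langle\mathbf{w}|\sigma|\mathbf{w}\rangle$ of $\sigma:=\Phi_\kappa[\rho]$ is the rescaled original Husimi function, $Q_{\Phi_\kappa[\rho]}(\mathbf{w})=\kappa^{-2s}p_\rho(\mathbf{w}/\kappa)$, where $p_\rho$ is the quantity (\ref{qc}). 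The gain $\kappa$ thus dilates phase space by $\kappa$ and flattens the distribution by $\kappa^{-2s}$; the idea is that as $\kappa\to\infty$ the amplified state becomes ``classical'' and its spectrum is controlled by $p_\rho$.

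To convert the operator statement of Theorem \ref{T1} into the integral (\ref{aim}) I would use the elementary Berezin--Lieb inequality: for any concave $h$ on $[0,1]$ with $h(0)=0$ and any state $\sigma$,
\[ \mathrm{Tr}\,h(\sigma)\le\int_{\mathbf{Z}}h(Q_\sigma(\mathbf{w}))\frac{d^{2s}\mathbf{w}}{\pi^{s}}, \]
which follows from Jensen's inequality applied to the probability weights $|\langle\mathbf{w}|e_j\rangle|^{2}$ over an eigenbasis $\{e_j\}$ of $\sigma$, together with $\frac{1}{\pi^{s}}\int|\mathbf{w}\rangle\langle\mathbf{w}|\,d^{2s}\mathbf{w}=I$. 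Now fix the arbitrary concave $f$ and, for each $\kappa$, apply Theorem \ref{T1}(i) to $\Phi_\kappa$ with the rescaled function $h_\kappa(y):=\kappa^{-2s}f(\kappa^{2s}y)$, defined on $[0,\kappa^{-2s}]$ and extended to $[0,1]$ by its tangent line at $\kappa^{-2s}$ (which keeps it concave with $h_\kappa(0)=0$). Since $p_\rho\le 1$ forces $Q_{\Phi_\kappa[\rho]}\le\kappa^{-2s}$, the function $h_\kappa$ is engineered so that $\int h_\kappa(Q_{\Phi_\kappa[\rho]})\frac{d^{2s}\mathbf{w}}{\pi^{s}}=\int f(p_\rho)\frac{d^{2s}\mathbf{z}}{\pi^{s}}$ exactly, for every $\kappa$. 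Combining the Berezin--Lieb bound with the inequality $\mathrm{Tr}\,h_\kappa(\Phi_\kappa[\rho])\ge\mathrm{Tr}\,h_\kappa(\Phi_\kappa[\rho_0])$ of Theorem \ref{T1}(i) then yields, for all $\kappa>1$,
\[ \int_{\mathbf{Z}}f(p_\rho(\mathbf{z}))\frac{d^{2s}\mathbf{z}}{\pi^{s}}\ge\mathrm{Tr}\,h_\kappa(\Phi_\kappa[\rho_0]). \]
It remains to evaluate the limit of the right-hand side. The vacuum output $\Phi_\kappa[\rho_0]$ is an isotropic thermal state with $N=\kappa^{2}-1$ quanta per mode, with eigenvalues $(N+1)^{-s}r^{|\mathbf{n}|}$, $r=N/(N+1)$, and multiplicities $\binom{n+s-1}{s-1}$ at $|\mathbf{n}|=n$; since $\kappa^{2s}=(N+1)^{s}$, these lie in $[0,\kappa^{-2s}]$ and $h_\kappa$ acts through its unextended branch, giving
\[ \mathrm{Tr}\,h_\kappa(\Phi_\kappa[\rho_0])=(N+1)^{-s}\sum_{n=0}^{\infty}\binom{n+s-1}{s-1}f(r^{n}). \]
Setting $t=n/N$ and reading this as a Riemann sum, I expect convergence as $\kappa\to\infty$ to $\int_{0}^{\infty}\frac{t^{s-1}}{(s-1)!}f(e^{-t})\,dt$, which by the radial reduction of $\frac{d^{2s}\mathbf{z}}{\pi^{s}}$ on $\mathbb{C}^{s}$ equals $\int_{\mathbf{Z}}f(e^{-|\mathbf{z}|^{2}})\frac{d^{2s}\mathbf{z}}{\pi^{s}}=\int_{\mathbf{Z}}f(p_{\rho_0})\frac{d^{2s}\mathbf{z}}{\pi^{s}}$, since $p_{\rho_0}(\mathbf{z})=e^{-|\mathbf{z}|^{2}}$. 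Passing to the limit gives (\ref{aim}), the equality $\int f(p_{\rho_0})=\int f(p_{\rho_{\mathbf{w}}})$ being the translation invariance already implied by the covariance (\ref{gauscov}).

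The main obstacle I anticipate is this last limit: justifying the Riemann-sum convergence rigorously (uniform integrability or dominated convergence) when $f$ is merely concave with $f(0)=0$, so that $f'(0^{+})$ may be infinite, as for $f(x)=\sqrt{x}$. I would handle it by first proving the inequality for polygonal concave $f$, where the slope at $0$ is finite, the sum is explicit and dominated by $C(N+1)^{-s}\sum\binom{n+s-1}{s-1}r^{n}=C$, making the limit elementary; an arbitrary concave $f$ is then approximated monotonically from below by polygonal ones and the result follows by monotone convergence, exactly as in the proof of Theorem \ref{T1}(i). A minor point to verify along the way is that the tangent-line extension of $h_\kappa$ never affects the two end quantities, since both $Q_{\Phi_\kappa[\rho]}$ and the spectrum of $\Phi_\kappa[\rho_0]$ are bounded by $\kappa^{-2s}$; the extension is needed only to legitimately apply Theorem \ref{T1} to the intermediate term $\mathrm{Tr}\,h_\kappa(\Phi_\kappa[\rho])$, whose eigenvalues may exceed $\kappa^{-2s}$.
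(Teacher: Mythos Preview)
Your argument is correct and follows the same architecture as the paper's proof---apply Theorem~\ref{T1} to a one-parameter family of gauge-covariant Gaussian channels, pass to the Husimi integral via a Berezin--Lieb inequality, rescale the test function, and finish by a limiting argument first for polygonal $f$ and then by monotone approximation---but the implementation differs in two linked ways. The paper works with the measure-and-prepare channel $\Phi_c$ of~(\ref{mr}) (parameters $\mathbf{K}=c\mathbf{I}$, $\boldsymbol{\mu}=\tfrac{c^2+1}{2}\mathbf{I}$), whose output has \emph{lower} symbol $c^{-2s}p_\rho(c^{-1}\mathbf{z})$; it therefore uses the \emph{left} Berezin--Lieb bound in~(\ref{blin}) to place $\int f(p_\rho)$ below $\mathrm{Tr}f(\Phi_c[\rho])$, and then closes the argument by applying the \emph{right} bound to $\Phi_c[\rho_{\mathbf{w}}]$ and showing that the resulting gap is controlled by $\Vert p_{\rho}\ast q_c-p_{\rho}\Vert_{L^1}\to 0$. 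You instead take the extreme amplifier $\Phi_\kappa$ (parameters $\mathbf{K}=\kappa\mathbf{I}$, $\boldsymbol{\mu}=\tfrac{\kappa^2-1}{2}\mathbf{I}$), whose output has \emph{upper} symbol $\kappa^{-2s}p_\rho(\kappa^{-1}\mathbf{w})$; this lets you use only the \emph{right} Berezin--Lieb bound to place $\int f(p_\rho)$ above $\mathrm{Tr}h_\kappa(\Phi_\kappa[\rho])$, and you close by computing $\mathrm{Tr}h_\kappa(\Phi_\kappa[\rho_0])$ explicitly from the thermal spectrum and taking a Riemann-sum limit. Your route trades the paper's mollifier/$L^1$-convergence step for an elementary spectral computation; the paper's route avoids any explicit eigenvalue calculation but needs both halves of~(\ref{blin}). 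Both are sound, and your observations about the tangent-line extension of $h_\kappa$ and about restricting first to polygonal $f$ match exactly what the paper does.
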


\begin{proof}
For any $c>0$ consider the\ channel $\Phi _{c}$ defined by the relation
\begin{equation}
\Phi _{c}[\rho ]=\int \frac{d^{2s}\mathbf{z}}{\pi ^{s}c^{2s}}\;\mbox{Tr}
[\rho D(c^{-1}\mathbf{z})\rho _{0}D^{\ast }(c^{-1}\mathbf{z})]\;\rho _{
\mathbf{z}}.  \label{mr}
\end{equation}
The map~(\ref{mr}) is a Gaussian gauge-covariant channel such that
\begin{equation*}
\Phi _{c}^{\ast }[D(\mathbf{z})]=D(c\mathbf{z})\;\exp \left[-\frac{(c^{2}+1)
}{2}|\mathbf{z}|^{2}\right],
\end{equation*}
cf. \cite{ghg}. Therefore by Theorem \ref{T1},
\begin{equation}
\mathrm{Tr}f(\Phi _{c}[\rho ])\geq \mathrm{Tr}f(\Phi _{c}[\rho _{\mathbf{w}
}])  \label{main}
\end{equation}
for all states $\rho $ and any coherent state $\rho _{\mathbf{w}}$. We will
prove the Proposition \ref{T2} by taking the limit $c\rightarrow \infty .$

In the proof we also use a simple generalization of the Berezin-Lieb
inequalities \cite{bl}:
\begin{equation}
\int_{\mathbf{Z}}f(\underline{p}(\mathbf{z}))\frac{d^{2s}\mathbf{z}}{\pi ^{s}
}\leq \mathrm{\ Tr}f(\sigma )\leq \int_{\mathbf{Z}}f(\bar{p}(\mathbf{z}))
\frac{d^{2s}\mathbf{z}}{\pi ^{s}},  \label{blin}
\end{equation}
valid for any quantum state admitting the representation
\begin{equation*}
\sigma =\int_{\mathbf{Z}}\underline{p}(\mathbf{z})\rho _{\mathbf{z}}\frac{
d^{2s}\mathbf{z}}{\pi ^{s}}
\end{equation*}
with a probability density $\underline{p}(\mathbf{z})$. In the right side of
(\ref{blin}) $\bar{p}(\mathbf{z})=\mbox{Tr}\sigma \rho _{\mathbf{z}}.$ In
the inequalities (\ref{blin}) one has to assume that $f$ is defined on $
[0,\infty )$ (in fact, $\underline{p}(\mathbf{z})$ can be unbounded). We
shall assume this for a while.

Taking $\sigma =\Phi _{c}[\rho ],$ from (\ref{mr}) we have
\begin{equation*}
\underline{p}(\mathbf{z})=\frac{1}{c^{2s}}\;\mbox{Tr}\rho D(c^{-1}\mathbf{z}
)\rho _{0}D^{\ast }(c^{-1}\mathbf{z})=\frac{1}{c^{2s}}\;p_{\rho }(c^{-1}
\mathbf{z})\;.
\end{equation*}
while
\begin{equation}
\bar{p}(\mathbf{z})=\mathrm{Tr\,}\rho _{\mathbf{z}}\Phi _{c}[\rho ]=\int_{
\mathbf{Z}}\underline{p}(\mathbf{w})\mathrm{Tr\,}\rho _{\mathbf{z}}\rho _{\mathbf{w}}
\frac{d^{2s}\mathbf{w}}{\pi ^{s}}.  \label{svert}
\end{equation}
We use the well-known formula, see e.g. \cite{KS}, \cite{aspekty},
\begin{equation*}
\mathrm{Tr}\,\rho _{\mathbf{z}}\rho _{\mathbf{w}}=\exp [-|\mathbf{z}-
\mathbf{w}|^2].
\end{equation*}
By introducing the probability density of a normal distribution
\begin{equation*}
q_{c}(\mathbf{z})=\frac{c^{2s}}{\pi ^{s}}\exp \left( -c^{2}|\mathbf{z}
|^2\right)
\end{equation*}
tending to $\delta -$function when $c\rightarrow \infty $ and substituting
this into (\ref{svert}), we have
\begin{eqnarray}
\bar{p}(\mathbf{z}) &=&\int d^{2s}\mathbf{w}\;\underline{p}(\mathbf{w}
)\;q_{1}(\mathbf{z}-\mathbf{w})  \notag \\
&=&\int {d^{2s}\mathbf{w}^{\prime }}\;p_{\rho }(\mathbf{w}^{\prime })\;q_{1}(
\mathbf{z}-c\mathbf{w}^{\prime })  \notag \\
&=&\frac{1}{c^{2s}}p_{\rho }\ast q_{c}(c^{-1}\mathbf{z}).  \label{gg}
\end{eqnarray}

With the change of the integration variable $c^{-1}\mathbf{z}\rightarrow
\mathbf{z}$, the inequalities~(\ref{blin}) become
\begin{equation*}
\int_{\mathbf{Z}}f(c^{-2s}p_{\rho }(\mathbf{z}))\frac{d^{2s}\mathbf{z}}{\pi
^{s}}\leq c^{-2s}\mathrm{Tr}f(\Phi _{c}[\rho ])\leq \int_{\mathbb{C}
^{s}}f(c^{-2s}p_{\rho }\ast q_{c}(\mathbf{z}))\frac{d^{2s}\mathbf{z}}{\pi
^{s}},
\end{equation*}
Substituting $\rho =\rho _{\mathbf{w}},$ we have
\begin{equation*}
\int_{\mathbf{Z}}f(c^{-2s}p_{\rho _{\mathbf{w}}}(\mathbf{z}))\frac{d^{2s}
\mathbf{z}}{\pi ^{s}}\leq c^{-2s}\mathrm{Tr}f(\Phi _{c}[\rho _{\mathbf{w}
}])\leq \int_{\mathbf{Z}}f(c^{-2s}p_{\rho _{\mathbf{w}}}\ast \,q_{c}(\mathbf{
z}))\frac{d^{2s}\mathbf{z}}{\pi ^{s}}.
\end{equation*}
Combining the last two displayed formulas with (\ref{main}) we obtain
\begin{eqnarray}
&&\int_{\mathbf{Z}}g(p_{\rho }(\mathbf{z}))\frac{d^{2s}\mathbf{z}}{\pi ^{s}}
-\int_{\mathbb{\ C}^{s}}g(p_{\rho _{\mathbf{w}}}(\mathbf{z}))\frac{d^{2s}
\mathbf{z}}{\pi ^{s}}  \notag \\
&\geq &\int_{\mathbf{Z}}g(p_{\rho }(\mathbf{z}))\frac{d^{2s}\mathbf{z}}{\pi
^{s}}-\int_{\mathbf{Z}}g(p_{\rho }\ast q_{c}(\mathbf{z}))\frac{d^{2s}\mathbf{
z}}{\pi ^{s}},  \label{ine}
\end{eqnarray}
where we denoted $g(x)=f(c^{-2s}x),$ which is again a concave function.
Moreover, arbitrary concave polygonal function $g$ on $[0,1],$ satisfying $
g(0)=0,$ can be obtained in this way by defining
\begin{equation*}
f(x)=\left\{
\begin{array}{l}
g(c^{2s}x),\quad x\in [ 0,c^{-2s}] \\
g(1)+g^{\prime }(1)(x-c^{-2s}),\quad x\in [ c^{-2s},\infty )
\end{array}
\right. ,
\end{equation*}
hence (\ref{ine}) holds for any such function. Then the right hand side of
the inequality (\ref{ine}) tends to zero as $c\rightarrow \infty .$ Indeed,
for polygonal function $\left\vert g(x)-g(y)\right\vert \leq \varkappa
\left\vert x-y\right\vert ,$ and the asserted convergence follows from the
convergence $p_{\rho }\ast q_{c}\longrightarrow p_{\rho }$ in $L_{1}:$ if $p(
\mathbf{z})$ is a bounded continuous probability density, then
\begin{equation*}
\lim_{c\rightarrow \infty }\int_{\mathbf{Z}}\left\vert p\ast q_{c}(\mathbf{z}
)-p(\mathbf{z})\right\vert d^{2s}\mathbf{z}=0.
\end{equation*}
Thus we obtain (\ref{aim}) for the concave polygonal functions $f.$ But for
arbitrary continuous concave $f$ on $[0,1]$ there is a monotonously
nondecreasing sequence of concave polygonal functions $f_{n}$ converging to $
f$ . Applying Beppo-Levy's theorem, we obtain the statement.
\end{proof}

\section{Appendix}

Consider a gauge-covariant channel $\Phi $ such that the matrices $\mathbf{K}
^{\ast }\mathbf{K}$ and $\boldsymbol{\mu }$ commute (in particular, this
condition is satisfied by extreme amplifiers and attenuators). These
channels are diagonalizable in the following sense. We have
\begin{equation*}
\mathbf{K}=\mathbf{V}_{A}\mathbf{K}_{d}\mathbf{V}_{B},\quad \boldsymbol{\mu} =\mathbf{V}
_{B}^{\ast }\boldsymbol{\mu }_{d}\mathbf{V}_{B}\,,
\end{equation*}
where $\mathbf{V}_{A},\mathbf{V}_{B}$ are unitaries and $\mathbf{K}_{d},\,
\boldsymbol{\mu }_{d}$ are diagonal (rectangular) matrices with nonnegative values on the
diagonal. Then $\mathbf{K}^{\ast }\mathbf{K}=\mathbf{V}_{B}^{\ast }\mathbf{K}
_{d}^{2} \mathbf{V}_{B},$ and
\begin{equation}
\Phi [ \rho ]=U_{B}\Phi _{d}[U_{A}\rho U_{A}^{\ast }]U_{B}^{\ast },
\label{unieq}
\end{equation}
where $U_{A}$, $U_{B}$ are canonical unitary (\textquotedblleft
metaplectic\textquotedblright\ \cite{simon}) transformations acting on $
\mathcal{H}_{A}$, $\mathcal{H}_{B}$ such that
\begin{equation*}
U_{B}^{\ast }D_{B}(\mathbf{z})U_{B}=D_{B}(\mathbf{V}_{B}\mathbf{z}),\qquad \qquad
U_{A}^{\ast }D_{A}(\mathbf{z})U_{A}=D_{A}(\mathbf{V}_{A}\mathbf{z}),
\end{equation*}
To describe the action of \textquotedblleft\ diagonal\textquotedblright\
channel $\Phi _{d}$ in more detail, we have to consider separately the cases
$s_{A}=s_{B},\,s_{A}\leq s_{B}$ and $s_{A}>s_{B}.$

In the case $s_{A}=s_{B}$ we have
\begin{equation*}
\mathbf{K}_{d}=\mathrm{diag}\left[ k_{j}\right] _{j=1,\dots ,s_{B}};\quad
\boldsymbol{\mu }_{d}=\mathrm{diag}\left[ \mu _{j}\right] _{j=1,\dots
,s_{B}}.
\end{equation*}
Then $\Phi _{d}=\otimes _{j=1}^{s_{B}}\Phi _{j},$ where, in self-explanatory
notations,
\begin{equation}
\Phi _{j}^{\ast }[D_{j}(z_{j})]=D_{j}(k_{j}z_{j})\exp \left( -\mu
_{j}\left\vert z_{j}\right\vert ^{2}\right) .  \label{onemod}
\end{equation}

In the case $s_{A}<s_{B}$
\begin{equation*}
\mathbf{K}_{d}=\left[
\begin{array}{c}
\mathrm{diag}\left[ k_{j}\right] _{j=1,\dots ,s_{A}} \\
\mathbf{0}
\end{array}
\right]
\end{equation*}
where $\mathbf{0}$ denotes block of zeroes of the size $\left(
s_{B}-s_{A}\right) \times s_{A}$. Then
\begin{equation*}
\Phi _{d}[\rho ]=\otimes _{j=1}^{s_{A}}\Phi _{j}[\rho ]\otimes \rho
_{0}^{[s_{A}+1,\dots ,s_{B}]},
\end{equation*}
where for $j=1,\dots ,s_{A}$ the one-mode channels $\Phi _{j}$ are given by (
\ref{onemod}), and $\rho _{0}^{[s_{A}+1,\dots ,s_{B}]}$ is the vacuum state
of the modes $s_{A}+1,\dots ,s_{B}.$

In the case $s_{A}>s_{B}$
\begin{equation*}
\mathbf{K}_{d}=\left[
\begin{array}{cc}
\mathrm{diag}\left[ k_{j}\right] _{j=1,\dots ,s_{B}} & \mathbf{0}
\end{array}
\right]
\end{equation*}
where $\mathbf{0}$ denotes block of zeroes of the size $s_{B}\times \left(
s_{A}-s_{B}\right) $, and
\begin{equation*}
\Phi _{d}[\rho ]=\left( \otimes _{j=1}^{s_{A}}\Phi _{j}\right) [\mathrm{Tr}
_{s_{B}+1,\dots ,s_{A}}\rho ],
\end{equation*}
where $\mathrm{Tr}_{s_{B}+1,\dots ,s_{A}}\mathrm{\ }$denotes partial trace
over the last $s_{A}-s_{B}$ modes of the operator $\rho .$

There is a similar reduction to the diagonal form for gauge-contravariant
channels.

\section{Acknowledgments}

The author is grateful to M. E. Shirokov (Steklov Mathematical Institute) for comments and discussions. Thanks are due to David Ding (Stanford University)
for pointing out some typos. The work was supported by the grant of Russian Scientific Foundation  (project No 14-21-00162).

\end{document}